\newif\ifsecs
\newif\ifred
\DeclareFontFamily{OT1}{pzc}{}
\DeclareFontShape{OT1}{pzc}{m}{it}{<-> s * [1.10] pzcmi7t}{}
\DeclareMathAlphabet{\mathpzc}{OT1}{pzc}{m}{it}
\renewcommand{\section}{\@startsection
  {section}%
  {1}%
  {0mm}%
  {-1\baselineskip}%
  {0.5\baselineskip}%
  {\normalfont\large\bfseries}%
}
\renewcommand{\subsection}{\@startsection
  {subsection}%
  {2}%
  {0mm}%
  {-1\baselineskip}%
  {0.5\baselineskip}%
  {\normalfont\large\itshape}%
}
\renewcommand{\subsubsection}{\@startsection
  {subsubsection}%
  {3}%
  {0mm}%
  {-1\baselineskip}%
  {0.5\baselineskip}%
  {\normalfont\itshape}%
}
\newsavebox{\tempbox}
\renewcommand{\@makecaption}[2]{
  \vspace{10pt}
  \sbox{\tempbox}{\textbf{#1.} #2}
  \ifthenelse{\lengthtest{\wd\tempbox > \linewidth}}{
    \textbf{#1.} #2\par
  }{
    \begin{center}
      \textbf{#1.} #2
    \end{center}
  }
}
\numberwithin{equation}{section}
\numberwithin{figure}{section}
\newtheoremstyle{mythm}% name
  {}%      Space above, empty = `usual value'
  {}%      Space below
  {\itshape}% Body font
  {}%         Indent amount (empty = no indent, \parindent = para indent)
  {\bfseries}% Thm head font
  {.}%        Punctuation after thm head
  {.5em}%     Space after thm head: " " = normal interword space;
\newtheoremstyle{mydefn}% name
  {}%      Space above, empty = `usual value'
  {}%      Space below
  {\upshape}% Body font
  {}%         Indent amount (empty = no indent, \parindent = para indent)
  {\bfseries}% Thm head font
  {.}%        Punctuation after thm head
  {.5em}%     Space after thm head: " " = normal interword space;
\newtheoremstyle{myremark}% name
  {}%      Space above, empty = `usual value'
  {}%      Space below
  {\upshape}% Body font
  {}%         Indent amount (empty = no indent, \parindent = para indent)
  {\itshape}% Thm head font
  {.}%        Punctuation after thm head
  {.5em}%     Space after thm head: " " = normal interword space;
\theoremstyle{mythm}
\newtheorem{theorem}{Theorem}[section]
\newtheorem{theorem}{Theorem}
\newtheorem{lemma}[theorem]{Lemma}
\newtheorem{corollary}[theorem]{Corollary}
\newtheorem{conjecture}[theorem]{Conjecture}
\theoremstyle{mydefn}
\newtheorem{example}[theorem]{Example}
\theoremstyle{myremark}
\newtheorem{remark}[theorem]{Remark}
\theoremstyle{mythm}
\newcommand{\uend}{\hfill$\lrcorner$}
\newcounter{claimcounter}
\newenvironment{claim}[1][]{
  \renewcommand{\proof}{\smallskip\par\noindent\textit{Proof. }}
  \medskip\par\noindent%
  \ifthenelse{\equal{#1}{}}{%
    \setcounter{claimcounter}{0}\refstepcounter{claimcounter}\textit{Claim~\arabic{claimcounter}.}
  }{%
    \ifthenelse{\equal{#1}{resume}}{%
      \refstepcounter{claimcounter}\textit{Claim~\arabic{claimcounter}.}
    }{%
      \textit{Claim~#1.}
    }
  }
}{
  \par\medskip
}
\newcommand{\case}[1]{\par\medskip\noindent\textit{Case #1: }}
\newenvironment{cs}{
  \begin{description}
    \renewcommand{\case}[1]{\item[\normalfont\itshape\mdseries Case ##1:]}
  }{
  \end{description}
}
\newlist{caselist}{description}{10}
\setlist[caselist]{font=\itshape\mdseries}
\newlist{eroman}{enumerate}{2}
\setlist[eroman,1]{label=(\roman*)}
\setlist[eroman,2]{label=(\alph*)}
\newlist{ealph}{enumerate}{1}
\setlist[ealph]{label=(\Alph*)}
\newcounter{nlistcounter}
\definecolor{blau}{RGB}{0,84,159}
\definecolor{hellblau}{RGB}{142,168,229}
\definecolor{petrol}{RGB}{0,97,101}
\definecolor{tuerkis}{RGB}{0,152,161}
\definecolor{gruen}{RGB}{87,171,39}
\definecolor{maigruen}{RGB}{189,205,0}
\definecolor{gelb}{RGB}{255,237,0}
\definecolor{orange}{RGB}{255,128,0}
\definecolor{magenta}{RGB}{227,0,102}
\definecolor{rot}{RGB}{204,7,30}
\definecolor{bordeaux}{RGB}{161,16,53}
\definecolor{violett}{RGB}{97,33,88}
\definecolor{lila}{RGB}{122,111,172}
\definecolor{grey}{gray}{0.7}
\definecolor{mittelblau}{RGB}{0,128,255}
\newcommand{\alert}[1]{\ifmmode{\color{rot}#1}\else{\itshape\color{rot}#1}\fi}
\newcommand{\alert}[1]{\ifmmode{#1}\else{\emph{#1}}\fi}
\newcommand{\bigmid}{\mathrel{\big|}}
\newcommand{\Bigmid}{\mathrel{\Big|}}
\newcommand{\angles}[1]{\left\langle#1\right\rangle}
\renewcommand{\hat}{\widehat}
\renewcommand{\vec}[1]{\boldsymbol{#1}}
\newcommand{\into}{\hookrightarrow}
\newcommand{\onto}{\twoheadrightarrow}
\DeclareMathOperator{\rg}{rg}
\DeclareMathOperator{\supp}{supp}
\newcommand{\Fraisse}{Fra\"{\i}ss{\'e}}
\renewcommand{\phi}{\varphi}
\renewcommand{\epsilon}{\varepsilon}
\newcommand{\Nat}{{\mathbb N}}
\newcommand{\Real}{{\mathbb R}}
\newcommand{\Rat}{{\mathbb Q}}
\newcommand{\LC}{\textsf{\upshape C}}
\newcommand{\FO}{\textsf{\upshape FO}}
\newcommand{\CD}{{\mathcal D}}
\newcommand{\CF}{{\mathcal F}}
\newcommand{\CG}{{\mathcal G}}
\newcommand{\CS}{{\mathcal S}}
\newcommand{\CT}{{\mathcal T}}
\renewcommand{\hom}{\textsf{\upshape hom}}
\newcommand{\HOM}{\textsf{\upshape HOM}}
\newcommand{\emb}{\textsf{\upshape emb}}
\newcommand{\epi}{\textsf{\upshape epi}}
\newcommand{\sepi}{\textsf{\upshape s-epi}}
\newcommand{\pihom}{\textsf{\upshape pi-hom}}
\newcommand{\pphom}{\textsf{\upshape pp-hom}}
\newcommand{\TD}{{\mathcal{TD}}}
\newcommand{\conn}[1]{{#1}^{\mathpzc c}}
\newcommand{\cTD}{\conn{\TD}}
\newcommand{\dagle}{\preceq}%\trianglelefteq}
\newcommand{\dagsle}{\prec}%\vartriangleleft}
\newcommand{\Cols}{\Gamma}
\newcommand{\col}{\gamma}
\DeclareMathOperator{\qr}{qr}
\newcommand{\colres}{\sqsubseteq_\col}
\newcommand{\eqc}[1]{\equiv^{\LC}_{#1}}
\begin{document}
\title{Counting Bounded Tree Depth Homomorphisms}
\author{Martin Grohe\\\normalsize RWTH Aachen University}

\date{}

\maketitle

\begin{abstract}
  We prove that graphs $G,G'$ satisfy the same sentences of
  first-order logic with counting of quantifier rank at most $k$ if
  and only if they are homomorphism-indistinguishable over the class
  of all graphs of tree depth at most $k$. Here $G,G'$ are
  \emph{homomorphism-indistinguishable} over a class $\CF$ of graphs
  if for each graph $F\in\CF$, the number of homomorphisms from $F$ to
  $G$ equals the number of homomorphisms from $F$ to $G'$.
\end{abstract}

%%%%%%%%%%%%%%%%%%%%%%%%%%%%%%%%%%%%%%%%%%%%%%%%%%%%%%%%%%%%
\section{Introduction}
Structural information is captured very well by homomorphism
counts. Indeed, an old theorem due to Lov\'asz \cite{lov67} states
that two graphs $G,G'$ are isomorphic if and only if $\hom(F,G)=\hom(F,G')$
for all graphs $F$. Here $\hom(F,G)$ denotes the number of
homomorphisms from graph $F$ to graph $G$; homomorphisms are mappings
between vertices that preserve adjacency. This simple theorem is quite
useful and can be seen as a the starting point for the theory of graph
limits \cite{borchalov+06,lov12,lovsze06}: by associating each graph $G$ with the
vector $\HOM(G):=\big(\hom(F,G)\bigmid F\text{ graph}\big)$, we map
graphs into an infinite dimensional real vector space, which can be
turned into a Hilbert space by defining a suitable inner product. This
transformation enables us to analyse graphs with methods of linear
algebra and functional analysis and, for example, to consider convergent
sequences of graphs and their limits, called \emph{graphons} (see
\cite{lov12}). Vector embeddings of graphs are also crucial for
applying machine learning methods to graphs. Notably, there is a close
connection between homomorphism counts and so-called graph
  kernels
(e.g.~\cite{sheschlee+11,krijohmor19}) and
graph neural networks (e.g.~\cite{maent19,morritfey+19}).

However, not only
the full homomorphism vector $\HOM(G)$ of a graph $G$, but also its
projections on natural subspaces capture very interesting
information about $G$. For a class $\CF$ of graphs, we consider
the projection
\[\HOM_{\CF}(G):=\big(\hom(F,G)\bigmid F\in\CF\big)\] of $\HOM(G)$ onto
the subspace indexed by the graphs in $\CF$. Following
\cite{bokchegrorat19},  we call graphs $G,G'$
\emph{homomorphism-indistinguishable over $\CF$} if
$\HOM_{\CF}(G)=\HOM_{\CF}(G')$. Dvor{\'a}k~\cite{dvo10} proved that two graphs are
homomorphism-indistinguishable over the class $\CT_k$ of graphs of
tree width at most $k$ if and only if they are not
distinguishable by the \emph{$k$-dimensional Weisfeiler-Leman}
algorithm, a well-known combinatorial isomorphism test. As we can
always restrict homomorphism vectors to connected graphs without loss
of information, this implies that  two graphs are
homomorphism-indistinguishable over the class $\CT$ of trees 
 if and only if they are not
distinguishable by the $1$-dimensional Weisfeiler-Leman
algorithm, which is also known as \emph{colour refinement} and
\emph{naive vertex classification}. Via well-known characterisations of
Weisfeiler-Leman indistinguishability in terms of the solvability of
certain natural systems of linear inequalities
\cite{atsman13,groott15,mal14} or systems of
polynomial equations or inequalities
\cite{atsoch18,bergro15,gragropagpak19}, this also yields algebraic
characterisations of homomorphism indistinguishability over classes of
bounded tree width. A related algebraic characterisation was
obtained for homomorphism indistinguishability over the class of paths
\cite{delgrorat18}. It is well-known (though usually phrased
differently) that two graphs are homomorphism-indistinguishable over
the class of cycles if and only if they are \emph{co-spectral}, that is,
their adjacency matrices have the same eigenvalues with the same multiplicities. Böker~\cite{bok18} proved that two graphs are 
homomorphism-indistinguishable over the class of bipartite graphs if
and only if they have isomorphic bipartite double covers. The most
recent addition to this picture is a result due to Man\v cinska and
Roberson~\cite{manrob19} stating that two graphs are
homomorphism-indistinguishable over the class of all planar graphs if
and only if they are \emph{quantum isomorphic}. Quantum isomorphism,
introduced in \cite{atsmanrob+19}, is a complicated notion
that is based on similar systems of equations as those characterising
homomorphism indistinguishability over graphs of bounded tree width,
but with non-commutative variables ranging over the elements of
some $C^*$-algebra.

What we see emerging is a rich theory connecting combinatorics,
structural graph theory, and algebraic graph theory. It turns out that
logic is also an integral part of this theory, not only because some
of the algebraic characterisations of homomorphism
indistinguishability can be phrased in terms of propositional proof
complexity \cite{atsoch18,bergro15,gragropagpak19}, but also because
there is a well-known characterisation of the Weisfeiler-Leman
algorithm and hence homomorphism indistinguishability over classes of
bounded tree width in terms of logical equivalence.  The logic $\LC$
is the extension of first-order logic by counting quantifiers of the
form $\exists^{\ge p}x$ (``there exists at least $p$ elements
$x$''). Every $\LC$-formula is equivalent to a formula of plain
first-order logic. However, we are mainly interested in fragments of
the logic obtained by restricting the quantifier rank or the number of
variables of formulas, and the translation from $\LC$ to first-order
logic preserves neither the quantifier rank nor the number of
variables (see Remark~\ref{rem:C2FO}). The logic $\LC$ and its finite variable
fragments have first been considered by Immerman in the 1980s
\cite{imm87a,immlan90}, and they have played an important role in
finite model theory since then. Cai, Fürer, and Immerman
\cite{caifurimm92} showed that equivalence in the $(k+1)$-variable
fragment $\LC^{k+1}$ of $\LC$ corresponds to indistinguishability by
the $k$-dimensional Weisfeiler-Leman algorithm. Thus, two graphs are
$\LC^{k+1}$-equivalent if and only if they are homomorphism
indistinguishable over the class $\CT_k$ of graphs of tree width at
most $k$.

Rather than restricting the number of variables in a formula,
it is, arguably, even more fundamental to restrict the quantifier rank
(maximum number of nested quantifiers in a formula). Our main result
is the following characterisation of equivalence in the fragment $\LC_k$ of $\LC$
consisting of all formulas of quantifier rank at most $k$.

\begin{theorem}\label{theo:main}
  For all $k\ge 1$ and all graphs $G,G'$ the following are equivalent.
  \begin{eroman}
  \item 
    $G$ and $G'$ are homomorphism-indistinguishable over the class 
    $\TD_k$ of all graphs of tree depth at most $k$. 
  \item
    $G$ and $G'$ satisfy the same $\LC_{k}$-sentences.
  \end{eroman}
\end{theorem}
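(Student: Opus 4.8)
The plan is to prove a stronger statement about \emph{pinned} graphs by a single induction, and to recover Theorem~\ref{theo:main} as the case without pins. For $r\ge 0$, an \emph{$r$-pinned graph} is a pair $(F,\bar v)$ with $\bar v=(v_1,\dots,v_r)\in V(F)^r$, and for an $r$-pinned graph $(H,\bar a)$ we write $\hom\bigl((F,\bar v),(H,\bar a)\bigr)$ for the number of homomorphisms $h\colon F\to H$ with $h(v_i)=a_i$ for all $i$. Let $\mathcal P_{k,r}$ be the class of all $r$-pinned graphs $(F,\bar v)$ such that $F$ minus the set $\{v_1,\dots,v_r\}$ has tree depth at most $k$; so $\mathcal P_{k,0}=\TD_k$ and $\hom\bigl((F,()),(G,())\bigr)=\hom(F,G)$. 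I will prove: for all $k,r\ge 0$ and all $r$-pinned graphs $(G,\bar a),(G',\bar a')$, one has $(G,\bar a)\eqc{k}(G',\bar a')$ if and only if $\hom\bigl((F,\bar v),(G,\bar a)\bigr)=\hom\bigl((F,\bar v),(G',\bar a')\bigr)$ for every $(F,\bar v)\in\mathcal P_{k,r}$. Two standard ingredients will be used: $\LC_k$-equivalence of $(G,\bar a)$ and $(G',\bar a')$ is characterised by Duplicator winning the $k$-round \emph{bijective} Ehrenfeucht--\Fraisse{} game from the position $\bar a\mapsto\bar a'$ (Immerman, Hella); and a graph has tree depth at most $k$ iff it admits an elimination forest of height at most $k$, i.e.\ one in which every edge joins a vertex to one of its ancestors.

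For the direction from $\LC_k$-equivalence to homomorphism equality I would fix $(F,\bar v)\in\mathcal P_{k,r}$ together with elimination trees of height $\le k$ for the connected components of $F-\{v_1,\dots,v_r\}$, and convert a winning strategy for Duplicator into an explicit bijection between the homomorphisms counted on the two sides. Given $h\colon F\to G$ with $h(\bar v)=\bar a$, define $h'$ on the pins by $h'(v_i)=a_i'$ and on a non-pin vertex $w$ with root-to-$w$ path $(w_0,\dots,w_d=w)$ in its elimination tree (so $d\le k-1$) by letting $h'(w)$ be the image of $h(w)$ under the bijection that Duplicator's strategy prescribes after the play $\bar a\mapsto\bar a'$, $h(w_0)\mapsto h'(w_0),\dots,h(w_{d-1})\mapsto h'(w_{d-1})$; this is well defined because Duplicator's responses along a play are determined. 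Since every edge of $F$ runs between two pins, along a root-path inside one elimination tree, or between a pin and some other vertex, the partial-isomorphism property of Duplicator's bijections shows $h'$ is a homomorphism; injectivity of those bijections gives injectivity of $h\mapsto h'$, and symmetry of the strategy gives surjectivity, so the two counts coincide. This argument works for all $k$ and $r$ simultaneously and needs no separate reduction to connected graphs.

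For the converse I would induct on $k$. When $k=0$, $(G,\bar a)\not\eqc{0}(G',\bar a')$ means $\bar a\mapsto\bar a'$ is not a partial isomorphism, and a pinned graph on the pin vertices alone---witnessing a distinguishing edge, non-edge, or coincidence of two pins---lies in $\mathcal P_{0,r}$ and separates the two sides. When $k\ge 1$, the game characterisation together with Hall's theorem shows that $(G,\bar a)\not\eqc{k}(G',\bar a')$ means either $\bar a\mapsto\bar a'$ is not a partial isomorphism (handled as before, as $\mathcal P_{0,r}\subseteq\mathcal P_{k,r}$) or there is an $\LC_{k-1}$-type $\tau$ of $(r+1)$-tuples such that the number $n_\tau(G)$ of vertices $a$ with $\mathrm{tp}_{\LC_{k-1}}(G,\bar a a)=\tau$ differs from the analogous number $n_\tau(G')$. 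Let $\CT$ be the finite set of $\LC_{k-1}$-types of $(r+1)$-tuples realised in $G$ or $G'$, so $\bigl(n_\tau(G)-n_\tau(G')\bigr)_{\tau\in\CT}\in\Rat^{\CT}$ is nonzero. For $(F,\bar u)=(F,(u_1,\dots,u_{r+1}))\in\mathcal P_{k-1,r+1}$, the direction already established at level $k-1$ gives a well-defined value $\phi_F(\tau)=\hom\bigl((F,\bar u),(H,\bar b)\bigr)$ depending only on $\mathrm{tp}_{\LC_{k-1}}(H,\bar b)=\tau$; moreover deleting one pin raises tree depth by at most one, so $(F,(u_1,\dots,u_r))\in\mathcal P_{k,r}$, and summing over the image of $u_{r+1}$ yields $\hom\bigl((F,(u_1,\dots,u_r)),(H,\bar b)\bigr)=\sum_{\tau\in\CT}\phi_F(\tau)\,n_\tau(H)$. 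It therefore suffices that the vectors $\bigl(\phi_F(\tau)\bigr)_{\tau\in\CT}$, as $(F,\bar u)$ ranges over $\mathcal P_{k-1,r+1}$, span $\Rat^{\CT}$: they contain the all-ones vector (the edgeless pinned graph), they are closed under pointwise products (gluing two such pinned graphs along their pins multiplies the homomorphism counts and stays in $\mathcal P_{k-1,r+1}$), and by the induction hypothesis at level $k-1$ they separate the points of $\CT$; hence their span is a unital subalgebra of $\Rat^{\CT}$ separating the coordinate projections, and thus all of $\Rat^{\CT}$. Then some $(F,(u_1,\dots,u_r))\in\mathcal P_{k,r}$ distinguishes $(G,\bar a)$ from $(G',\bar a')$, which closes the induction; the case $r=0$ is Theorem~\ref{theo:main}.

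The hard part is the converse, and within it the inductive step: one round of the bijective game must be translated, via Hall's theorem, into a statement about \emph{counting} $\LC_{k-1}$-types of $(r+1)$-tuples, and that type count must then be recovered linearly from homomorphism counts of pinned graphs in $\mathcal P_{k-1,r+1}$. Making this go through relies on the closure properties of the classes $\mathcal P_{k,r}$ (deleting a pin costs only one unit of tree depth; gluing at pins preserves the class) and on carefully tracking coincidences among pins at the base of the induction, where homomorphism counts---rather than embedding or injective-homomorphism counts---are used throughout.
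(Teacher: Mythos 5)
Your proposal is correct, but it takes a genuinely different route from the paper's. The paper stays with \emph{unpinned} graphs throughout and introduces two intermediate counting functions, $\pihom$ and $\pphom$, proving $\hom$-indistinguishability over $\TD_k$ equivalent to $\pphom$-indistinguishability over $\CD_k$ via two explicit triangular-matrix decompositions (one indexed by shrinking epimorphisms, one by spanning supergraphs sharing an elimination tree), and then connects $\pphom$ to the bijective pebble game by an induction on $k$ built around the auxiliary graph $G\wr v$; the key analytic step (Lemma~\ref{lem:5}) is a Vandermonde-matrix argument applied to rooted sums. You instead prove a \emph{pinned} strengthening, working with ordinary $\hom$ counts the whole time. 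Your forward direction is a direct game-theoretic bijection between homomorphism sets, read off Duplicator's responses along root-paths of an elimination tree of $F$ minus the pins; this replaces the paper's entire chain of reductions $\hom\to\pihom\to\pphom\to$ game. Your converse is a single induction on $k$ in which Hall's theorem converts a Spoiler win into a discrepancy in $\LC_{k-1}$-type counts of $(r+1)$-pins, and the linear-algebra step is a Stone--Weierstrass-style ``unital subalgebra of $\Rat^{\CT}$ separating points is everything'' argument, with gluing at pins in place of rooted sums. The pinned formulation makes the back-and-forth between tree depth and quantifier rank more uniform (one quantifier elimination costs one unit of depth, realised as ``un-pinning one pin''), at the price of carrying a free-variable parameter $r$ through the whole induction; the paper's past-preserving homomorphisms encode the same bookkeeping in the combinatorics of elimination trees.

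Two small points in your write-up deserve tightening, though neither is fatal. First, a single ``edgeless pinned graph'' does not produce the all-ones vector on $\CT$: different types $\tau\in\CT$ may have different atomic types of the $(r+1)$-pin tuple (coincidences with the fixed $r$ pins, colour of the new pin), and the edgeless graph realising one such atomic type gives the indicator of the corresponding subset of $\CT$, not $\mathbf 1$. What is true is that summing these indicators over the finitely many pin atomic types realised in $\CT$ gives $\mathbf 1$, which is all the span argument needs. Second, the closure under pointwise products via gluing is slightly more delicate than stated: gluing $(F_1,\bar u_1)$ and $(F_2,\bar u_2)$ at their pins may in principle try to create a self-loop or a colour conflict when the pin coincidence patterns differ. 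One should observe that in exactly those cases the pointwise product $\phi_{F_1}\cdot\phi_{F_2}$ is the zero vector (any $\tau$ simultaneously compatible with both pin patterns would have to force $b_i=b_j$ and $b_ib_j\in E(H)$, or conflicting colours, which is impossible), so the span is still closed under products. With these two remarks made explicit, the ``separating unital subalgebra of $\Rat^{\CT}$ equals $\Rat^{\CT}$'' conclusion goes through and your proof is complete.
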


\emph{Tree depth}, introduced by Ne\v
set\v ril
and Ossona de Mendez \cite{nesoss06},  is a structural graph parameter
that has received a lot of attention in recent
years (e.g.~\cite{bantan16,buldaw14,cheflu18,elbjaktan12,elbgrotan16}). Our result adds a characterisation of
homomorphism indistinguishability over classes of bounded tree depth
to the theory of homomorphism indistinguishability sketched above. 

However, our result is also interesting from a purely logical point of
view. It can be seen simultaneously as a \emph{locality} theorem
and as a \emph{quantifier
elimination} theorem. \emph{Locality}, because as noted above, when
considering homomorphism indistinguishability, we can restrict our
attention to connected graphs. Connected graphs of tree depth at
most $k$ are known to have a radius of at most $2^{k-1}-1$ (see \cite{nesoss12}), and
hence their homomorphic images will always be contained in
neighbourhoods of radius at most $2^{k-1}-1$.
This
means that homomorphism indistinguishability over graphs of tree depth
$k$ and thus $\LC_{k}$-equivalence only depend on 
neighbourhoods of radius at most $2^{k-1}-1$. This consequence of our main
theorem was known before~\cite{lib98}, but we believe that our approach sheds some
new light on locality. It should be seen in
the context of other recent and not-so-recent locality results for
counting logics
\cite{lib98,lib99,lib00a,kusschwe17,kusschwe18,schwe19}. Let us remark
(as already noted by Libkin~\cite{lib98}) that the exact choice of a
counting extension of first-order logic is not so important when we
only study equivalence between structures.\footnote{The reason is
  that over a \emph{fixed} finite graph, formulas of other counting
  extensions of first-order logic, such as the logic
  $\textsf{FOCN}(\mathbb P)$ of \cite{kusschwe17}, are equivalent to
  $\LC$-formulas of the same quantifier rank.}  

Our theorem is a \emph{quantifier-elimination} result, because it says
that we can replace the $k$ nested quantifiers of a $\LC_k$-formula,
which may involve alternations between existential and universal
quantifiers, by flat, unnested homomorphism counts. While new in this
context, replacing quantifier alternation by counting is a common
theme in complexity theory, most prominently represented by Toda's
theorem \cite{tod91} that $\textsf P^{\#\textsf P}$ contains the
polynomial hierarchy.

The proof of our theorem is harder than one might expect in view of
the numerous previous results on homomorphism
indistinguishability. The overall structure of the proof is
as follows: in the first step we use linear algebraic techniques
that go back to Lov\'asz \cite{lov67} to show that homomorphism counts
can be expressed by counts of more restrictive structure
preserving mappings.  In the second step, the connection to logic is
established via an Ehrenfeucht-\Fraisse\ game and interpolation
techniques. To carry out the first step, we need to prove the
invertibility of certain homomorphism matrices, which we achieve by a
decomposition into lower-triangular and upper triangular matrices of
full rank. The precise nature of this decomposition is what makes the
proof difficult; we need to go through various intermediate mappings
obeying certain carefully chosen constraints.

The structure of the paper is simple: we prove the theorem and then
discuss some of its consequences.

\section{Preliminaries}

\subsection{Graphs and Homomorphisms}
We always assume graphs to be undirected and vertex-colour\-ed. Thus a
graph is a triple $(V(G),E(G),\gamma^G)$ where $V(G)$ is a finite set,
$E(G)\subseteq\binom{V(G)}{2}$, and $\col^G:V(G)\to\Cols$ for some set
$\Cols$ whose elements we view as ``colours''.\footnote{For
  clarity of the presentation, we decided to focus on undirected
  graphs here. The result can be extended to arbitrary relational
  structures, see Section~\ref{sec:rel} for a brief discussion.} 
The
\alert{order} of a graph is $\alert{|G|}:=|V(G)|$. A graph $G$ is a
\emph{subgraph} of a graph $H$ (we write $G\subseteq H$) if
$V(G)\subseteq V(H)$, $E(G)\subseteq E(H)$, and
$\gamma^G(v)=\gamma^H(v)$ for all $v\in V(G)$.

% Let $G$ be a graph. For a subset $W\subseteq V(G)$, we define the \alert{induced
%   subgraph} $G[W]$ to be the graph vertex set $W$, edge set $\{vw\in
% E(G)\mid v,w\in W\}$, and colouring defined by
% $\col^{G[W]}(w):=\col^G(w)$ for all $w\in W$.

A \alert{homomorphism} from a graph $F$ to a graph $G$ is a mapping $h:V(F)\to
V(G)$ such that $h(u)h(v)\in E(G)$ for all $uv\in E(F)$ and
$\col^F(u)=\col^G(h(u))$ for all $u\in V(F)$. We
write \alert{$h:F\to G$} to denote that $h$ is a homomorphism from $F$ to $G$.
We denote the number of homomorphism from $F$ to $G$ by
\alert{$\hom(F,G)$}. Graphs $G,G'$ are \alert{homomorphism-indistinguishable}
over a class $\CF$ of graphs if $\hom(F,G)=\hom(F,G')$ for all
$F\in\CF$; otherwise they are \alert{homomorphism-distinguishable}
over $\CF$.

Observe that for a disconnected graph $F$ with connected components
$F_1,\ldots,F_\ell$ and for an arbitrary graph $G$ it holds that
$\hom(F,G)=\prod_{i=1}^\ell\hom(F_i,G)$. This means that if $\CF$ is a
class of graphs such that all connected components of graphs in $\CF$
belong to $\CF$ as well, then graphs $G,G'$ are
homomorphism-indistinguishable over $\CF$ if and only if they are
homomorphism-indistinguishable over the class \alert{$\conn{\CF}$} of
all connected graphs in $\CF$.

 A homomorphism $h:F\to G$ is an
\alert{embedding} (or \alert{monomorphism}) from $F$ to $G$ (we write
\alert{$h:F\into G$}) if it is injective. A homomorphism $h:F\to G$ is an
\alert{epimorphism} from $F$ to $G$ (we write \alert{$h:F\onto
G$}) if $h$ is surjective and for every edge
$vv'\in E(G)$ there is an edge $uu'\in E(F)$ such that $h(u)=v$ and
$h(u')=v'$. (Note that not every surjective homomorphism is an epimorphism.) If $H:F\onto G$ is an epimorphism, then $G$ is a
\alert{homomorphic image} of $F$.
By \alert{$\emb(F,G)$} and \alert{$\epi(F,G)$} we
denote the numbers of embeddings and epimorphisms from $F$
to $G$. 

If $\pi$ is a partial mapping from $V(F)$ to $V(G)$, then by 
\alert{$\hom(F,G;\pi)$} we denote the number of homomorphisms from $F$ to $G$
that extend $\pi$. In particular, for vertices $u\in V(F)$ and $v\in 
V(G)$, by \alert{$\hom(F,G;u\mapsto v)$} we denote the number of homomorphism 
$h:F\to G$ with $h(u)=v$. We use similar notations for embeddings,
epimorphisms, and other types of mappings that we shall introduce later.

\subsection{First-Order Logic with Counting}

To define the syntax of the logic $\LC$, we assume that we have an
infinite supply of variables, which we denote by $x,y,z$ and
variants such as $x',y_1$. Variables range over the vertices of a
graph.  \alert{Atomic formulas} (in the language of graphs) are of the
form $x=y$, $E(x,y)$ (``there is an edge between $x,y$''), and
$\col(x)=c$ for colours $c$ (``$x$ has colour $c$''). \alert{$\LC$-formulas}
are constructed from atomic formulas using negation $\neg \varphi$,
disjunction $(\varphi \vee \psi)$, and counting quantifiers
$\exists^{\geq p} x\phi$, where $p\in \Nat$, $x$ is a variable, and
$\varphi$, $\psi$ are formulas. % As abbreviations, we also use
% conjunctions $(\phi\wedge\psi)$, implications $(\phi\to\psi)$, and
% standard existential and universal quantifiers $\exists x\phi$,
% $\forall x\phi$ ($\forall x \varphi$ abbreviates
% $\neg \exists^{\geq 1} x \neg \varphi$) as well as variants of the
% counting quantifiers such as $\exists^{<p}x\phi$ and
% $\exists^{=p}x\phi$. 

An occurrence of a variable $x$ is \alert{free} in a formula $\varphi$
if it is outside the range of all quantifications $\exists^{\geq p} x$. A
\alert{sentence} is a formula without any free variables.
We often write $\varphi(x_1,\dots,x_\ell)$ to indicate that the free
variables of $\varphi$ are among $x_1,\dots,x_\ell$. (Not all of these
variables are required to appear in $\phi$.) For a formula
$\phi(x_1,\ldots,x_\ell)$, a graph $G$, and vertices
$v_1,\ldots,v_\ell\in V(G)$, we write
$G\models\phi(v_1,\ldots,v_\ell)$ to denote that $G$ satisfies $\phi$
if the variables $x_i$ are interpreted by the vertices $v_i$. We also
write $\phi(\vec x)$ and $\phi(\vec v)$ for tuples $\vec
x=(x_1,\ldots,x_\ell)$, $\vec v=(v_1,\ldots,v_\ell)$.
Now we can define the semantics of the logic $\LC$ inductively in the obvious
way. In particular, for $\phi(y_1,\ldots,y_\ell)=\exists^{\geq
  p}x\psi(x,y_1,\ldots,y_\ell)$ we let
$G\models\phi(w_1,\ldots,w_\ell)$ if there are mutually distinct
$v_1,\ldots,v_p\in V(G)$ such that $G\models\psi(v_i,w_1,\ldots,w_\ell)$ for
all $i\in[p]$.

The \alert{quantifier rank $\qr(\phi)$} of a $\LC$-formula $\phi$ is defined
inductively by letting $\qr(\phi):=0$ for all atomic formulas
$\phi$ and $\qr(\neg\phi):=\qr(\phi)$,
$\qr(\phi\vee\psi):=\max\{\qr(\phi),\qr(\psi)\}$, and
$\qr(\exists^{\ge p}x\phi):=\qr(\phi)+1$. By \alert{$\LC_k$} we denote the
fragment of $\LC$ consisting of all formulas of quantifier rank at
most $k$. Graphs $G,G'$ are \alert{$\LC_k$-equivalent} if
$G\models\phi\iff G'\models\phi$ for all $\LC_k$-sentences $\phi$. We
write $G\eqc k G'$ to denote that $G$ and $G'$ are $\LC_k$-equivalent
We extend this notation to formulas with free variables, writing
$G,\vec v\eqc k G',\vec v'$ for tuples $\vec v\in V(G)^\ell,\vec v'\in
V(G')^\ell$ to denote that for all $\LC_k$-formulas $\phi(\vec x)$ it
holds that $G\models\phi(\vec v)\iff G'\models\phi(\vec v')$.

\begin{remark}\label{rem:C2FO}
  Interpreting the usual existential quantifier $\exists$ as $\exists^{\ge 1}$,
  we can view first-order logic $\FO$ as a fragment of $\LC$. 
  Observe that $\LC$ has the same expressive power as its fragment
  $\FO$, because $\exists^{\ge p}x\phi(x,y_1,\ldots,y_\ell)$ can be
  equivalently expressed as 
  \[
  \exists x_1\ldots\exists x_p\left(\bigwedge_{1\le i<j\le p}\neg
    x_i=x_j\wedge \bigwedge_{1\le i\le p}\phi(x_i,y_1,\ldots,y_\ell)\right).
  \]
  However, this increases the quantifier rank. It is easy to see that
  for every $k\ge 1$, 
  $\LC_k$ is strictly more expressive than the fragment $\FO_k$ of
  first-order logic consisting of all formulas of quantifier rank at
  most $k$. Actually, for every $k$ the $\LC_1$-formula
  $\exists^{\ge k+1}x (x=x)$ is not equivalent to any
  $\FO_{k}$-formula.
\end{remark}

%%%%%%%%%%%%%%%%%%%%%%%%%%%%%%%%%%%%%%%%%%%%%%%%%%%%%%%%%%%%
\subsection{The Bijective Pebble 
  Game}
The \alert{bijective pebble game}, introduced by Hella~\cite{hel96},
gives a combinatorial characterisation of equivalence in the logic
$\LC$ and its fragments $\LC_k$. 

Let $G,G'$ be graphs of the same order. The bijective pebble game on $G$ and $G'$ is played
by two players called \alert{Spoiler} and the
\alert{Duplicator}. \alert{Positions} of the game are pairs
$(\vec v,\vec v')$ where $\vec v\in V(G)^k,\vec v'\in V(G')^k$ for
some $k\ge 0$. A \alert{play} of the game consists of a
sequence of \alert{rounds}, starting from some \alert{initial position}
$(\vec v_0,\vec v_0')$, where $\vec v_0=(v_1,\ldots,v_\ell)$ and $\vec
v'_0=(v'_1,\ldots,v'_\ell)$ for some
$\ell\ge 0$. The default initial position is the ``empty position'' $\big((),()\big)$. In
round $i$ of the game, Duplicator chooses a
bijection $f_i:V(G)\to V(G')$. Then Spoiler chooses a $v_{\ell+i}\in
V(G)$, and we let $v'_{\ell+i}:=f_i(v_{\ell+i})$.
The position after round $i$ is
$(\vec v_i,\vec
v_i'):=\big((v_1,\ldots,v_{\ell+i}),(v'_1,\ldots,v'_{\ell+i})\big)$. In
the
\alert{$k$-round game}, the play ends after $k$-rounds, and Duplicator
\alert{wins} the play if $\vec v_k\mapsto\vec v'_k:=(v_i\mapsto
v'_i\mid 1\le i\le k+\ell)$ is a
\alert{local isomorphism} from $G$ to $G'$, that is, for all
$i,j\in[\ell+k]$ the following conditions are satisfied:
\begin{itemize}
\item
$v_i=v_j\iff v'_i=v'_j$;
\item
$v_iv_j\in E(G)\iff
v'_iv'_j\in E(G')$;
\item
$\col^G(v_i)=\col^{G'}(v'_i)$. 
\end{itemize}
If $\vec v_k\mapsto\vec
v'_k$ is not a local isomorphism, then 
Spoiler wins the play.

We can now define \alert{winning strategies} for Spoiler and Duplicator
in the usual way.

The following lemma, which links the bijective pebble game to the
logic $\LC$, is a minor variant of a theorem due to Hella~\cite{hel96}
and of the standard characterisation of first-order logic in terms of
Ehrenfeucht-\Fraisse\ games (see, for example, \cite{ebbflutho94}).

\begin{lemma}\label{lem:bpgame}
  For all $k,\ell\ge 0$, all graphs $G,G'$ of the same order, and all $\vec
  v\in V(G)^\ell,\vec v'\in V(G')^\ell$ the following are equivalent.
  \begin{eroman}
  \item Duplicator has a winning strategy for the $k$-round bijective
    pebble game on $G,G'$ with initial position $(\vec v,\vec v')$.
  \item $G,\vec v\eqc k G',\vec v'$.
  \end{eroman}
\end{lemma}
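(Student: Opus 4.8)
The plan is to prove both implications simultaneously by induction on $k$, following the classical correspondence between Ehrenfeucht-\Fraisse{} games and logical equivalence, adapted to counting quantifiers and bijective moves. For the base case $k=0$ both statements are equivalent to the assertion that $\vec v\mapsto\vec v'$ is a local isomorphism from $G$ to $G'$: in the $0$-round game neither player moves, so Duplicator wins exactly when the initial position is already a local isomorphism, and $G,\vec v\eqc 0 G',\vec v'$ says precisely that $\vec v$ and $\vec v'$ satisfy the same atomic formulas, which is the same condition since every vertex carries exactly one colour.

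For the inductive step, passing from $k$ to $k+1$, the implication (i)$\Rightarrow$(ii) is straightforward. Let $f\colon V(G)\to V(G')$ be the bijection Duplicator chooses in the first round of a winning strategy; then Duplicator also wins the remaining $k$-round game from $(\vec vw,\vec v'f(w))$ for every $w\in V(G)$, so by the induction hypothesis $G,\vec vw\eqc k G',\vec v'f(w)$. Every $\LC_{k+1}$-formula $\phi(\vec x)$ is a Boolean combination of atomic formulas and formulas $\exists^{\ge p}y\,\psi(y,\vec x)$ with $\qr(\psi)\le k$, so it suffices to transfer each such building block from $(G,\vec v)$ to $(G',\vec v')$. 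Atomic formulas are handled because $\vec v\mapsto\vec v'$ is a local isomorphism (already forced by Duplicator winning). For $\exists^{\ge p}y\,\psi$: if $G$ has $p$ distinct witnesses $w$ for $\psi(\cdot,\vec v)$, their images $f(w)$ are $p$ distinct witnesses for $\psi(\cdot,\vec v')$ in $G'$ because $\psi\in\LC_k$ and $G,\vec vw\eqc k G',\vec v'f(w)$; the reverse direction uses $f^{-1}$.

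The implication (ii)$\Rightarrow$(i) is the substantial one. I would first construct, by a nested induction on $m\le k$, \emph{Hintikka formulas}: for every graph $H$ and tuple $\vec u\in V(H)^{j}$ a formula $\chi^m_{H,\vec u}(x_1,\dots,x_j)$ of quantifier rank $m$ such that $H'\models\chi^m_{H,\vec u}(\vec u')$ if and only if $H',\vec u'\eqc m H,\vec u$, for all $H'$ and $\vec u'\in V(H')^j$. For $m=0$ one takes the conjunction of all atomic and negated atomic formulas over $(x_1,\dots,x_j)$ true of $\vec u$; this is a \emph{finite} conjunction because only the finitely many colours occurring in $H$ are relevant. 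For the step, $\chi^{m+1}_{H,\vec u}$ is $\chi^0_{H,\vec u}$ together with one conjunct $\exists^{=n_a}x_{j+1}\,\chi^m_{H,\vec u a}(x_1,\dots,x_{j+1})$ for each of the finitely many $\eqc m$-classes of vertices $a$ of $H$ over $\vec u$, where $n_a$ is the size of that class and $\exists^{=n}$ abbreviates $\exists^{\ge n}\wedge\neg\exists^{\ge n+1}$; checking that this formula defines the $\eqc{m+1}$-class is routine and rests on the mutual exclusiveness of Hintikka formulas of distinct classes. Now assume $G,\vec v\eqc{k+1}G',\vec v'$. For each $\eqc k$-class of vertices of $G$ over $\vec v$, with representative $a$ and size $n_a$, the formula $\exists^{=n_a}x_{\ell+1}\,\chi^k_{G,\vec v a}(x_1,\dots,x_{\ell+1})$ has quantifier rank $k+1$ and holds of $\vec v$ in $G$, hence of $\vec v'$ in $G'$; so $\chi^k_{G,\vec v a}(\vec v',\cdot)$ is satisfied by exactly $n_a$ vertices of $G'$, and by mutual exclusiveness together with $|G|=|G'|$ these classes partition $V(G')$ into blocks of the same sizes as the $\eqc k$-classes partition $V(G)$. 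Pick a bijection $f\colon V(G)\to V(G')$ mapping each $G$-block onto the corresponding $G'$-block; then $G,\vec vw\eqc k G',\vec v'f(w)$ for all $w$. Duplicator plays this $f$ in round $1$, and after Spoiler selects $w$ the position $(\vec vw,\vec v'f(w))$ satisfies the induction hypothesis, so Duplicator wins the remaining $k$-round game.

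The main obstacle is exactly this matching step in (ii)$\Rightarrow$(i): it needs the finiteness of the set of $\LC_k$-types (so that the Hintikka formulas exist) together with both hypotheses $G,\vec v\eqc{k+1}G',\vec v'$ and $|G|=|G'|$, in order to realise a \emph{single} bijection respecting all type classes simultaneously. This is precisely where counting quantifiers force the game to use bijections rather than ordinary single-pebble moves, and it is the only genuinely non-formulaic part of the argument; the rest is a routine induction on formula structure.
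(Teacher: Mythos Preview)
The paper does not actually prove this lemma; it merely states it and attributes it to Hella and to the standard Ehrenfeucht--\Fraisse\ theory. Your proof is the standard argument and is essentially correct as a proof of the lemma.

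One small overstatement: the Hintikka formula $\chi^{m+1}_{H,\vec u}$ as you define it does \emph{not} characterise the $\eqc{m+1}$-class of $(H,\vec u)$ over \emph{all} graphs $H'$. For instance, with $m=0$ and empty $\vec u$, your $\chi^1_H$ asserts only that each colour occurring in $H$ occurs with the correct multiplicity; a graph $H'$ that additionally has vertices of some fresh colour still satisfies $\chi^1_H$ while $H'\not\eqc 1 H$. The usual fix is to add the conjunct $\exists^{=|H|}x\,(x=x)$ (quantifier rank $1$), which together with mutual exclusiveness of the $\chi^m_{H,\vec ua}$ forces every vertex of $H'$ into one of the listed classes. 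This does not affect your proof of the lemma itself, because in the application you only ever compare $G$ and $G'$ and you explicitly invoke $|G|=|G'|$ to obtain the partition of $V(G')$; but the intermediate claim about $\chi^m_{H,\vec u}$ should be stated either with this extra conjunct or restricted to $H'$ of the same order as $H$.
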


If we do not specify the initial position of the game, we always
assume it is the empty position $((),(())$. Thus the lemma implies
that Duplicator has a winning strategy for the $k$-round bijective
pebble game on $G,G'$ if and only if $G\equiv_k^\LC G'$.
\subsection{Graphs of Bounded Tree Depth}

It will be convenient in this paper to view trees and forests as partially ordered
sets. A \alert{forest} $S$ is a pair $(V(S),\dagle^S)$ consisting of a (finite) vertex
set $V(S)$ and a partial order $\dagle^S$ on $V(S)$ such that for every $t\in V(S)$ the set $\{u\in V(S)\mid
u\dagle^S t\}$ is a \alert{chain}, that is, its elements are pairwise
comparable. We denote the
strict partial order associated with $\dagle^S$ by $\dagsle^S$.
If $t\dagsle^S u$ and there is no $v\in V(S)$ such that $t\dagsle^Sv$
and $v\dagsle^S u$, then we say that $u$ is a \alert{child} of $t$ and
that $t$ is the \alert{parent} of $u$. This gives us a 
one-to-one correspondence between forests viewed as partially ordered
sets and rooted forests in the usual graph-theoretic sense. 
The $\dagle^S$
minimal elements of $V(S)$ are called the \alert{roots} of $S$. 
The \alert{height} of $S$ is the length $|X|$ of the longest chain $X$ in
$S$. Note that, differing from the standard graph
theoretic definition, we count the number of vertices (and not the
number of edges) on a path from the root to a leaf. In particular, a
forest consisting of roots only has
height $1$.

A forest
$T$ with a unique root is a \alert{tree}. We denote the root of a tree
$T$ by $r^T$.
A \alert{subtree} of a tree
$T$ is a tree $T'$ with $V(T')\subseteq V(T)$ such that $\dagle^{T'}$ is
the restriction of $\dagle^T$ to $V(T')$. Thus a subtree is an induced
substructure that is a tree itself. Observe that a set $U\subseteq V(T)$
induces a subtree of $T$ if and only if $U$ has a unique
$\dagle^T$-minimal element. This notion of subtree does \emph{not}
coincide with the usual graph-theoretic notion of a subtree of a tree. In particular,
elements of a
subtree can be interleaved with elements that do not belong to the
subtree.

An \alert{elimination forest} of a graph $G$ is a forest $S$ such that
$V(S)=V(G)$ and for every edge $uv\in E(G)$, either $u\dagle^Sv$ or
$v\dagle^S u$. If an elimination forest $S$ of $G$ is a tree, we also
call it an \alert{elimination tree} of $G$. The \alert{tree depth} of
a graph $G$ is the minimum $k$ such that $G$ has an elimination forest
of height $k$. We denote the class of all graphs of tree depth at most
$k$ by \alert{$\TD_k$} and the class of all connected graphs in
$\TD_k$ by $\cTD_k$.

\begin{samepage}
  \begin{lemma}[Ne\v
set\v ril
and Ossona de Mendez~\cite{nesoss06}]\label{lem:ind-td}~
  \begin{enumerate}
  \item $\cTD_1$ consists of all $1$-vertex graphs.
  \item For $k\ge 1$, $\cTD_{k+1}$ is the class of all connected graphs $F$ that
    have a vertex $r$ such that all connected components of
    $F\setminus\{r\}$ are in $\cTD_k$.
  \item For all $k\ge 1$, $\TD_k$ is the class of disjoint unions of
    graphs in $\cTD_k$.
\end{enumerate}
\end{lemma}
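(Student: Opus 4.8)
The plan is to derive all three parts directly from the definition of an elimination forest, with one structural observation carrying most of the weight: \emph{if $F$ is connected and $S$ is an elimination forest of $F$, then $S$ is a tree}. Indeed, if $S$ were not a tree, then $V(S)=V(F)$ would split into the vertex sets of two or more trees of $S$; since $F$ is connected, some edge $uv\in E(F)$ would join two of these parts, and then $u,v$ are $\dagle^S$-incomparable, contradicting the elimination condition for $uv$. I will use this together with the companion fact that any chain of a forest extends downward through the $\dagle^S$-predecessors of its least element all the way to a root, so that a chain of a tree $S$ avoiding the root $r^S$ has length at most $\mathrm{height}(S)-1$; put differently, restricting an elimination forest to a vertex set that omits a prescribed root drops the height bound by one.

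Part~(1) is then immediate: an elimination forest of height at most $1$ is an antichain, so the elimination condition forces $E(F)=\emptyset$, a connected edgeless graph has exactly one vertex, and conversely a one-vertex graph carries the trivial elimination forest of height $1$. For part~(2), in the forward direction I take $F\in\cTD_{k+1}$ together with an elimination forest $S$ of height at most $k+1$, which by the observation is a tree with root $r:=r^S$; for each connected component $C$ of $F\setminus\{r\}$, the restriction of $\dagle^S$ to $V(C)$ is an elimination forest of $C$ that avoids $r$, hence of height at most $k$, so $C\in\TD_k$, and being connected $C\in\cTD_k$. In the backward direction, given a vertex $r$ of a connected $F$ such that the components $C_1,\dots,C_m$ of $F\setminus\{r\}$ all lie in $\cTD_k$, I fix elimination trees $T_i$ of the $C_i$ of height at most $k$ and glue them beneath a common new root $r$; the resulting order on $V(F)$ is a tree of height at most $k+1$, and every edge of $F$ either meets $r$ (handled because $r$ lies below every vertex) or lies inside some $C_i$ (handled by $T_i$), so it is an elimination forest and $F\in\cTD_{k+1}$. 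Part~(3) is the same bookkeeping once more: an elimination forest of $G$ of height at most $k$ restricts to one of each connected component of $G$, yielding one inclusion, while the disjoint union of elimination forests of the $\cTD_k$-pieces is an elimination forest of height at most $k$ of their disjoint union, yielding the other.

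None of the steps presents a real obstacle, since each reduces to verifying that restriction and disjoint union preserve the elimination-forest property along with the evident height estimates. The one point that rewards care is the interplay in part~(2) between the forest-based definition of tree depth and the tree-based recursion: one must be precise that deleting $r$ from a height-$(k{+}1)$ elimination tree leaves an elimination forest of height at most $k$ on each component of $F\setminus\{r\}$ --- which is exactly where the ``extend a chain to a root'' fact is needed --- and, in the converse, that the glued relation really is a forest, i.e., that each down-set $\{u\mid u\dagle^S t\}$ is a chain.
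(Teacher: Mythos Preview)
Your argument is correct. Note, however, that the paper does not supply its own proof of this lemma: it is stated with attribution to Ne\v set\v ril and Ossona de Mendez and used without further justification, so there is no in-paper proof to compare against. Your derivation directly from the definition of elimination forests is the standard one; the key structural observation (an elimination forest of a connected graph must be a tree) together with the height drop obtained by deleting the root are exactly the ingredients used in the cited source, and your handling of parts~(1)--(3) is sound.
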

\end{samepage}
We let $\CD$ be the class of all pairs $(F,T)$ where
$F$ is a graph and $T$ an elimination tree of $F$.
We usually denote elements of $\CD$ by $D$.\footnote{The reader may
  wonder why we chose the letter ``d'' (in $D$ and $\CD$). One reason is that
  it picks up the ``d'' in depth and that
  $\CD$ is close to $\TD$. Or think of ``d'' as standing for ``decomposed
  graph''.}

For $D=(F,T)\in\CD$, we let $F^D:=F$, $T^D:=T$ and $V(D):=V(F)=V(T)$,
$E(D):=E(F)$, $\col^D:=\col^F$, $\dagle^D:=\dagle^T$, and
$r^D:=r^T$. We call $r^D$ the \alert{root} of $D$. The \alert{height}
of $D$ is the height of $T^D$.  We denote the class of all $D\in\CD$ of
height at most $k$ by $\CD_k$. Observe that a connected graph $F$ is
in $\TD_k$ if and only if there is a $D\in\CD_k$ such that
$F^D=F$.

\begin{remark}
  There is a strange asymmetry in the definition of $\CD$: for pairs
  $(F,T)\in\CD$, we require $T$ to be a tree, not an arbitrary forest, but we
  do not require the graph $F$ to be connected. Yet this definition is
  carefully chosen. In particular, if we required $F$ to be connected
  then we would run into difficulties in the proof of
  Lemma~\ref{lem:6}.
\end{remark}

%%%%%%%%%%%%%%%%%%%%%%%%%%%%%%%%%%%%%%%%%%%%%%%%%%%%%%%%%%%%
\section{Past-Preserving Homomorphisms}
%%%%%%%%%%%%%%%%%%%%%%%%%%%%%%%%%%%%%%%%%%%%%%%%%%%%%%%%%%%%

Let $D\in\CD_k$, and let $G$ be an arbitrary graph. A
\alert{homomorphism} from $D$ to $G$ is simply a homomorphism from
$F^D$ to $G$. We write \alert{$h:D\to G$} to denote that $h$ is a
homomorphism from $D$ to $G$, and we let $\alert{\hom(D,G)}:=\hom(F^D,G)$ be
the number of homomorphisms from $D$ to $G$. A homomorphism $h:D\to G$
is an \alert{epimorphism} (we write \alert{$h:D\onto G$}) if it is an epimorphism from $F^D$ to $G$.

A homomorphism $h:D\to G$ is \alert{past-injective} if for all
$u,v\in V(D)$ with $u\dagsle^D v$ we have $h(u)\neq h(v)$. If in
addition, for all $u,v\in V(D)$ with $u\dagle^D v$ we have
$uv\in E(D)\iff h(u)h(v)\in E(G)$, then $h$ is \alert{past-preserving}. We
denote the number of past-injective homomorphisms from $D$ to $G$ by
\alert{$\pihom(D,G)$} and the number of past-preserving homomorphisms
from $D$ to $G$ by \alert{$\pphom(D,G)$}.  In this section, we shall
prove that we can compute the numbers of past-preserving homomorphisms
to a graph from the numbers of homomorphisms and vice versa. The
difficult first step will be to establish an equivalence between the
numbers of past-injective homomorphisms and
homomorphisms. 

The general strategy for establishing such an equivalence, going back
to Lov\'asz \cite{lov67}, is to establish a linear relationship
between the corresponding counting vectors, in our case the vectors
$\HOM_{\TD_k}(G)=\big(\hom(F,G)\bigmid F\in\TD_k\big)$ and the
corresponding vector of past-injective homomorphism counts and then
show that the matrix relating the two vectors is invertible (this will
happen in Lemma~\ref{lem:2}, Corollary~\ref{cor:1}, and
Lemma~\ref{lem:3}). On the linear algebra side, we shall write the
(infinite) matrix of homomorphism counts as a product of an
upper-triangular matrix with nonzero diagonal entries and a
lower-triangular matrix with nonzero diagonal entries. This
decomposition of the homomorphism matrix corresponds to a
decomposition of homomorphisms. The upper
triangular matrix is obtained by considering some form of injective
homomorphisms, in our case past-injective homomorphisms. The lower
triangular matrix corresponds to suitable surjective homomorphisms, in
our case \emph{shrinking epimorphisms}, to be introduced next. The
reason that we cannot just work with plain injective and surjective
homomorphisms (or rather epimorphisms) is that the homomorphic image of a graph of tree
depth at most $k$ may have larger tree depth than $k$. However, we
shall prove (in Lemma~\ref{lem:1}) that shrinking epimorphisms
preserve tree depth.

Let $D\in\CD_k$, and let $G$ be a graph with $V(G)\subseteq V(D)$ (but not
  necessarily $G\subseteq F^D$). A \alert{shrinking
  homomorphism} from $D$ to $G$ is a homomorphism $h:D\to G$
such that $h(u)\dagle^D u$ for all $u\in V(D)$ and $h$ is \emph{idempotent}, that is, $h(h(u))=h(u)$ for all $u\in
V(D)$. We are mainly interested in shrinking epimorphisms. We denote the number of shrinking epimorphism from $D$ to $G$ by \alert{$\sepi(D,G)$}.
 Note that
if $h$ is a shrinking epimorphism from $D$
to $G$ then $h(v)=v$ for all $v\in V(G)$. Indeed, since
$h$ is surjective, we have $v=h(u)$ for some $u$ and therefore
$h(v)=h(h(u))=h(u)=v$. This implies that for all $v\in V(G)$ we have
$\col^G(v)=\col^D(v)$. 

To simplify the notation, for graphs $F,G$ we write
\alert{$G\colres F$} if $V(G)\subseteq V(F)$ and
$\col^G(v)=\col^F(v)$ for all $v\in V(G)$. For a $D\in\CD$ we write
\alert{$G\colres D$} instead of $G\colres F^D$.

% The next lemma shows that the image of a graph of tree depth at most
% $k$ under a shrinking epimorphism is a graph of tree depth at most
% $k$.

\begin{lemma}\label{lem:1}
  Let $D\in\CD_k$, $G\colres D$, and let $f:D\onto
  G$ be a shrinking epimorphism from $D$ to 
$G$. Then $T^D$ induces a subtree on $V(G)$, and this
  subtree $T^D[V(G)]$ is an elimination tree of $G$ of height at most $k$,
  that is, $(G,T^D[V(G)])\in\CD_k$.
\end{lemma}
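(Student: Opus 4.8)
The plan is to analyze the structure of a shrinking epimorphism $f:D\onto G$ and show it forces $V(G)$ to be a $\dagle^D$-downward-closed ``spine'' of $T^D$. First I would verify that $V(G)$ induces a subtree of $T^D$, i.e.\ that $V(G)$ has a unique $\dagle^D$-minimal element. Since $f$ is surjective, the root $r^D$ satisfies $r^D = f(u)$ for some $u\in V(D)$, and the shrinking condition $f(u)\dagle^D u$ forces $f(u)=r^D$ only if... actually better: for \emph{every} $v\in V(G)$ pick $u$ with $f(u)=v$; then $v=f(u)\dagle^D u$, and iterating/using idempotence gives $f(v)=v$ (as already observed in the text). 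Now take the $\dagle^D$-minimal elements of the chain $\{w\in V(D)\mid w\dagle^D v\}$ for $v\in V(G)$; I claim $r^D\in V(G)$. Indeed, consider any $v\in V(G)$ and walk down from $v$ toward $r^D$ in $T^D$; I want to show every ancestor of a vertex in $V(G)$ lies in $V(G)$. For this, suppose $w\dagle^D v$ with $v\in V(G)$; since $f$ is a homomorphism of the underlying graph and $v=f(v)$, but I need an edge or comparability fact—here is where the elimination-tree property of $T^D$ on $F^D$ enters: consecutive vertices on a root path need not be adjacent, so I should instead argue via surjectivity more carefully.

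The cleaner route: show directly that $V(G)$ is $\dagle^D$-downward closed. Suppose not; let $v\in V(G)$ be $\dagle^D$-minimal among elements of $V(G)$ whose parent $p$ in $T^D$ is not in $V(G)$ (if $v$ is a root this cannot happen, so $v$ has a parent). Since $f$ is surjective, $p\notin V(G)$ means $p\notin\rg(f)$... no, $p$ could still fail to be in $V(G)$ while being hit—wait, $\rg(f)=V(G)$ by surjectivity, so $p\notin V(G)$ iff $p\notin\rg(f)$. Consider $f(p)$: we have $f(p)\dagle^D p\dagsle^D v$, and $f(p)\in V(G)$. Also for the vertex $v$, $f(v)=v$. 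Now I use the edge condition: is $p$ adjacent to some descendant to force something? Instead, observe that $\{f(v)\} = \{v\}$ and $f(p)\dagsle^D v$, so $f(p)$ lies strictly below $v$ on the root path; by minimality of $v$, every element of $V(G)$ strictly below $v$ has all its ancestors in $V(G)$, hence $f(p)$ and all \emph{its} ancestors are in $V(G)$, and the ancestors of $f(p)$ are exactly the ancestors of $v$ that lie below $f(p)$. This is progress but I still need to close the gap between $f(p)$ and $p$. The key realization is idempotence combined with $h(u)\dagle^D u$: the map $f$ retracts each vertex onto an ancestor, and the fixed-point set is exactly $V(G)$; a retraction of a forest onto a subset via ``go to an ancestor'' has downward-closed image precisely when... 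Actually a retraction onto ancestors need not have downward closed image in general (collapse a middle vertex). So the real content must use that $f$ is an \emph{epimorphism}, not just a surjective homomorphism: every edge of $G$ is the image of an edge of $D$.

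So here is the argument I expect to work. Let $v\in V(G)$ be arbitrary and let $p$ be its parent in $T^D$ (assume $v$ is not a root). I claim $p\in V(G)$. Since $f$ is a shrinking homomorphism, $f(p)\dagle^D p\dagsle^D v = f(v)$. If $f(p)\dagsle^D f(v)=v$ strictly, then on the chain below $v$ we have $f(p)\dagle^D p \dagsle^D v$; now among all vertices $u$ with $f(u)=v$, the vertex $v$ itself is one, but is it the unique $\dagle^D$-minimal one? For $any$ $u$ with $f(u)=v$ we have $v=f(u)\dagle^D u$, so $v$ is below every preimage, hence $v$ is the unique $\dagle^D$-minimal preimage of $v$. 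In particular $p\notin f^{-1}(v)$, consistent with $f(p)\dagsle v$ possibly strict. To finish, I show $V(G)$ is downward-closed by downward induction on the height of $v$ in $T^D$: the roots in $V(G)$ have no parent; assume all ancestors of vertices in $V(G)$ lying strictly below $v$ are already known to be in $V(G)$. For the parent $p$ of $v$: consider the set $C$ of children of $p$; the subtree rooted at $v$ maps under $f$ into $\{w : w\dagle^D v\}\cup (\text{subtree at }v)$, and by the epimorphism property each edge incident to $v$ in $G$ lifts—pick any $w\in V(G)$ adjacent to $v$ with $w\dagle^D v$; such $w$ exists because... hmm, this needs the original edge structure of $F^D$. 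I think the honest main obstacle is exactly this: \textbf{proving $V(G)$ is a subtree (downward closed), which genuinely needs the epimorphism condition and the interaction between $\dagle^D$ and $E(D)$}; once that is done, that $T^D[V(G)]$ is an elimination tree of $G$ of height $\le k$ is routine. Indeed, given downward-closedness: for an edge $vv'\in E(G)$, pick $uu'\in E(D)$ with $f(u)=v,f(u')=v'$; then $u,u'$ are $\dagle^D$-comparable (as $T^D$ is an elimination tree of $F^D$), say $u\dagsle^D u'$; then $v=f(u)\dagle^D u\dagsle^D u'$ and $v'=f(u')\dagle^D u'$, so $v,v'$ both lie on the chain below $u'$, hence are comparable—giving the elimination-tree condition; and the height of $T^D[V(G)]$ is at most the height of $T^D$, which is at most $k$. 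So I would structure the write-up as: (1) recall $f\!\restriction\! V(G)=\mathrm{id}$ and deduce $v$ is the unique minimal $f$-preimage of $v$; (2) prove $V(G)$ is $\dagle^D$-downward closed, the crux, using surjectivity plus the edge-lifting property of epimorphisms and the fact that $\dagle^D$-incomparable children have no edge between their subtrees... this last point (no cross edges between incomparable branches of an elimination tree) is precisely the standard elimination-tree fact and is the lever I would lean on; (3) conclude $T^D[V(G)]$ is a subtree, verify it is an elimination tree of $G$ via the edge-lifting computation above, and note the height bound.
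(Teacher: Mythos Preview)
Your edge-lifting argument for the elimination-tree condition is exactly right and matches the paper's proof. The problem is everything before it: you are trying to show that $V(G)$ is $\dagle^D$-downward closed, and this is \emph{false}. Take $V(D)=\{r,a,b\}$ with $r\dagsle^D a\dagsle^D b$ and $E(F^D)=\{rb\}$ (so $T^D$ is an elimination tree), all vertices the same colour. Define $f(r)=r$, $f(a)=r$, $f(b)=b$, and let $G$ have vertex set $\{r,b\}$ and edge set $\{rb\}$. Then $f$ is a shrinking epimorphism onto $G$, but $V(G)=\{r,b\}$ skips $a$ and is not downward closed. This is why your attempts in (2) kept stalling: no argument can close that gap.

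The resolution is that you are misreading the paper's notion of \emph{subtree}. In this paper a subtree of $T$ is an induced substructure that happens to be a tree, i.e.\ a subset with a unique $\dagle^T$-minimal element; it need not be downward closed (the paper explicitly warns that this ``does \emph{not} coincide with the usual graph-theoretic notion''). So all you need for ``$T^D$ induces a subtree on $V(G)$'' is that $r^D\in V(G)$, and this is immediate: $f(r^D)\dagle^D r^D$ forces $f(r^D)=r^D$, hence $r^D\in\rg(f)=V(G)$. The height bound is then trivial, and the elimination-tree verification is precisely your edge-lifting computation (which nowhere uses downward closedness). Drop step~(2) entirely and the proof is complete.
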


\begin{proof}
  We first prove that $T':=T^D[V(G)]$ is a tree of height at most
  $k$. Observe that $f(r^D)=r^D$ and thus $r^D\in V(G)$. Hence $V(G)$ has a
  unique $\dagle^D$-minimal element, and $T'$ is a tree. Clearly, the
  height of $T'$ is at most the height of $T^D$ and hence at most
  $k$.

  It remains to prove that $T'$ is an elimination tree of $G$. Let
  $vv'\in E(G)$. We shall prove that either $v\dagle^{T'} v'$ or
  $v'\dagle^{T'} v$. Since $f$ is an epimorphism, there is an edge
  $uu'\in E(D)$ such that $f(u)=v$ and $f(u')=v'$. Then $v\dagle^D u$
  and $v'\dagle^D u'$. Since $T^D$ is an elimination tree of $F^D$,
  either $u\dagle^D u'$ or $u'\dagle^D u$. Without loss of generality
  we assume $u\dagle^D u'$. Then $v,v'\dagle^D u'$. Since the set
  $\{t\in V(D)\mid t\dagle^D u'\}$ is a chain in the tree $T^D$,
  either $v\dagle^D v'$ or $v'\dagle^D v$. As $\dagle^{T'}$ is the
  restriction of $\dagle^D$ to $V(G)$, this implies that
  $v\dagle^{T'} v'$ or $v'\dagle^{T'} v$.
\end{proof}

\begin{lemma}\label{lem:2}
  Let $D\in\CD_k$, and let $h:D\to H$ be a homomorphism from $D$ to some graph $H$. Then
  there is a graph $G\colres D$, a shrinking
  epimorphism $f:D\onto G$, and a past-injective
  homomorphism $g:(G,T^D[V(G)])\to H$ such that $h=g\circ f$.

  Furthermore, $G$, $f$, and $g$ are unique. That is, if $G'\colres D$
  and $f':D\onto G'$ is a shrinking epimorphism and $g:(G',T^D[V(G')])\to H$ is past-injective 
  such that $h=g'\circ f'$, then $G=G'$, $f=f'$, and $g=g'$.
\end{lemma}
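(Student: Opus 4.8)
The plan is to construct $G$, $f$, and $g$ explicitly from $h$ by ``collapsing along the tree order''. Given the homomorphism $h:D\to H$, define an equivalence-like collapse on $V(D)$ as follows: for each $u\in V(D)$, look at the chain $C_u:=\{t\in V(D)\mid t\dagle^D u\}$ of ancestors of $u$ (including $u$ itself), and let $f(u)$ be the $\dagle^D$-minimal element $t\in C_u$ such that $h(t')=h(u)$ for every $t'\in C_u$ with $t\dagle^D t'$. In other words, walking up the ancestor chain from $u$, we go as high as possible while $h$ stays constant and equal to $h(u)$. Set $V(G):=\{f(u)\mid u\in V(D)\}=\{v\in V(D)\mid f(v)=v\}$, give $G$ the colouring inherited from $F^D$ (so $G\colres D$), and put an edge $vv'\in E(G)$ exactly when there exist $u,u'\in V(D)$ with $f(u)=v$, $f(u')=v'$, and $uu'\in E(F^D)$ — this makes $f$ an epimorphism by construction. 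Finally, since $h$ is constant on each fibre $f^{-1}(v)$ (which is immediate from the definition, as $u$ and $f(u)$ always lie on a common chain with $h$ constant between them), $h$ factors as $h=g\circ f$ for the well-defined map $g:V(G)\to V(H)$ with $g(v):=h(v)$.

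First I would verify that $f$ is a shrinking homomorphism: $f(u)\dagle^D u$ holds by definition since $f(u)\in C_u$; idempotence $f(f(u))=f(u)$ holds because the ancestor chain of $f(u)$ is an initial segment of $C_u$ and $h$ is still constant on the relevant portion, so walking up from $f(u)$ produces $f(u)$ itself; and $f$ is a homomorphism $D\to G$ because it maps edges of $F^D$ to edges of $G$ by the definition of $E(G)$, and it preserves colours since $\col^D$ is constant along any chain on which $h$ is constant (as $h$ respects colours). Combined with surjectivity and the edge-lifting property built into $E(G)$, $f$ is a shrinking epimorphism. Then by Lemma~\ref{lem:1}, $(G,T^D[V(G)])\in\CD_k$, so it makes sense to ask whether $g$ is past-injective as a homomorphism on this decomposed graph. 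That $g$ is a homomorphism $(G,T^D[V(G)])\to H$ is clear since $g=h\restriction V(G)$ and edges of $G$ lift to edges of $F^D$. For past-injectivity: if $v\dagsle^D v'$ with $v,v'\in V(G)$, then $v,v'$ lie on a common chain; since $v'=f(v')$, by definition of $f$ there is no element strictly above $v$ on that chain up to $v'$ on which $h$ equals $h(v')$ — in particular $h(v)\ne h(v')$, i.e.\ $g(v)\ne g(v')$, which is exactly past-injectivity.

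For uniqueness, suppose $h=g'\circ f'$ with $f':D\onto G'$ a shrinking epimorphism, $G'\colres D$, and $g':(G',T^D[V(G')])\to H$ past-injective. Since $f'$ is shrinking, $f'(u)\dagle^D u$, so $f'(u)$ lies on the chain $C_u$; idempotence gives $f'(f'(u))=f'(u)$. Now $h(f'(u))=g'(f'(f'(u)))=g'(f'(u))=h(u)$, so $h$ is constant on the segment of $C_u$ between $f'(u)$ and $u$. This shows $f'(u)\dagle^D f(u)$ by maximality in the definition of $f$ — wait, it shows $f(u)\dagle^D f'(u)$ is possible but we need equality, so the other inequality comes from past-injectivity of $g'$: if $f(u)\dagsle^D f'(u)$, then since $f(u)\in V(G)$ we still need $f(u)\in V(G')$; here I use that both $f(u)$ and $f'(u)$ are on $C_u$, that $f'(f(u))$ is the $f'$-image lying below $f(u)$, and push the argument through by comparing $f(u)$, $f'(u)$, and applying past-injectivity of $g'$ to the pair $f'(f(u))\dagsle^D$ something to force $h$ non-constant, contradicting constancy of $h$ on $[f(u),u]$. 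Once $f=f'$ is established, $G=V(G)=V(G')=G'$ as colored graphs (the vertex set is the image, colours are inherited, and the edge set of $G'$ must equal the $f$-image of $E(F^D)$ because $f'$ is an \emph{epi}morphism, not merely a surjective homomorphism), and then $g=h\restriction V(G)=g'$.

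The main obstacle I anticipate is the uniqueness direction, specifically pinning down that an arbitrary shrinking epimorphism $f'$ factoring $h$ must coincide with the canonical ``maximal collapse'' $f$. The inequality $f(u)\dagle^D f'(u)$ is straightforward from the definition of $f$ as the highest ancestor with $h$ constant above it, but the reverse — that $f'$ cannot collapse \emph{strictly less} far than $f$ — is where past-injectivity of $g'$ is essential and the argument is delicate: one must show that if $f'(u)$ were strictly below $f(u)$ on $C_u$, then the two vertices $f'(u)\dagsle^D f(u)$ would both survive in $G'$ (or at least that applying $f'$ and $g'$ produces a past-injectivity violation), contradicting that $h$ is constant on $[f'(u),f(u)]$. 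Handling this cleanly — including the bookkeeping about which vertices lie in $V(G')$ — is the technical heart of the proof; the existence direction is essentially a direct verification once the right definition of $f$ is written down.
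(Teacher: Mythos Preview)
Your construction of $f$ is not quite right, and with it past-injectivity of $g$ fails. You define $f(u)$ as the highest ancestor $t$ of $u$ such that $h$ is \emph{constant on the entire interval} $[t,u]$ of the chain $C_u$. But consider a chain $a\dagsle^D b\dagsle^D c$ (with no $F^D$-edges among $a,b,c$) and a homomorphism $h$ with $h(a)=h(c)\neq h(b)$. Your $f$ then fixes all three vertices, since $h$ already changes at the parent of $c$ and at the parent of $b$; hence $V(G)=\{a,b,c\}$ and $g(a)=h(a)=h(c)=g(c)$ with $a\dagsle^D c$, so $g$ is not past-injective. Your past-injectivity argument breaks at exactly this point: from $v'=f(v')$ your definition only tells you that $h$ differs at the \emph{parent} of $v'$, not that $h(v)\ne h(v')$ for an arbitrary proper ancestor $v$.

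The correct definition---and the one the paper uses---is to let $f(u)$ be the $\dagle^D$-minimal element of $h^{-1}(h(u))\cap C_u$: the highest ancestor of $u$ with the same $h$-value, \emph{regardless of what $h$ does in between}. With this $f$, the condition $v'=f(v')$ means that no proper ancestor of $v'$ has $h$-value $h(v')$, and then $h(v)\ne h(v')$ is immediate; this is essentially what your past-injectivity paragraph was trying to say, so you may have had the right picture but written down the wrong formula. Your uniqueness sketch inherits the same confusion: the line ``so $h$ is constant on the segment of $C_u$ between $f'(u)$ and $u$'' does not follow from $h(f'(u))=h(u)$ alone, and with the corrected $f$ you neither have it nor need it. The paper handles uniqueness more cleanly than your outline: take a $\dagle^D$-minimal $u$ with $f(u)\ne f'(u)$ and do a short two-case analysis on whether $f(u)=u$, using past-injectivity of $g$ in one case and of $g'$ in the other to reach a contradiction.
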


\begin{proof}
  Let $W:=h(V(D))\subseteq V(H)$ be the range of $h$. Then the sets
  $h^{-1}(w)$, for $w\in W$, form a partition of $V(D)$. For every
  $u\in V(D)$, let $f(u)$ be the $\dagle^D$-minimal element in
  $h^{-1}(h(u))\cap \{t\mid t\dagle^D u\}$. There is at most one such
  element because $\{t\mid t\dagle^D u\}$ is a chain. Note that $f$ is idempotent.
Let $G:=f(F^D)$ be the graph
  with vertex set $V(G):=f(V(D))$ and edge set
  $E(G):=\{f(u)f(u')\mid uu'\in E(D)\}$. Then $f:D\onto G$ is a shrinking epimorphism. Hence by Lemma~\ref{lem:1}, the induced
  subtree $T^D[V(G)]$ is an elimination tree of $G$ of height at most $k$.

  For all $u,u'\in V(D)$, if $f(u)=f(u')$ then $h(u)=h(u')$. Thus
  there is a mapping $g:V(G)\to V(H)$ such that $h=g\circ f$. As $h$
  is a homomorphism and $G=f(F)$, the mapping $g$ is a homomorphism
  from $G$ to $H$. Indeed, for every edge $vv'\in E(G)$ there is an
  edge $uu'\in E(D)$ such that $f(u)=v$ and $f(u')=v'$. Then
  $g(v)g(v')=h(u)h(u')\in E(H)$.

To prove that $g$ is
  past-injective, suppose for contradiction that there are
  $v,v'\in V(G)$ such that $v\dagsle^{T^D[V(G)]} v'$ and
  $g(v)=g(v')=:w$. Note that $v\dagsle^{T^D[V(G)]} v'$ implies
  $v\dagsle^D v'$. As $f$
  is the identity on $V(G)\subseteq V(F^D)$, we have 
  $h(v)=h(v')=w$. By the definition of $f$, this means that $v=f(v)$
  and $v'=f(v')$ are $\dagle^D$-minimal elements in $h^{-1}(w)$. Since
  $v\neq v'$, it follows that $v\not\dagle^D v'$. This is a
  contradiction.

  It remains to prove the uniqueness. Let $G'\colres D$ and $f':D\onto G$
  a shrinking epimorphism and $g':(G',T[V(G')])\to H$ a past-injective
  homomorphism such that $h=g'\circ f'$. If $f=f'$ then $G=G'$,
  because $G'=f(F^D)=f'(F^D)=G$, and $g=g'$ because $g\circ f=g'\circ f'$
  and $f,f'$ are surjective. Suppose for contradiction that
  $f'\neq f$. Let $u\in V(D)$ such that $f'(u)\neq f(u)$ and, subject
  to this condition, $u$ is $\dagle^D$-minimal.
  \begin{cs}
    \case1
    $f(u)\neq u$.\\
    Let $u':=f(u)$. Then $u'\dagsle^D u$ and, since $f$ is idempotent,
    $f(u')=u'$. Thus 
    \[
      h(u)=g(f(u))=g(u')=g(f(u'))=h(u').
    \]
    By the minimality of $u$, we have $f'(u')=f(u')=u'$. This implies
    \begin{equation}
      \label{eq:1}
      g'(f'(u))=h(u)=h(u')=g'(f'(u')).
    \end{equation}
    Since $f'(u)\dagle^D u$ and
    $f'(u')=u'\dagle^D u$ and $T^D$ is a tree, either
    $f'(u)\dagle^D f'(u')$ or $f'(u')\dagle^D f'(u)$. Since $g'$ is
    past-injective, by \eqref{eq:1} we have neither $f'(u)\dagsle^D f'(u')$ nor
    $f'(u')\dagsle^D f'(u)$ and thus $f'(u)=f'(u')=u'=f(u)$. This is a contradiction. 
    \case2
    $f(u)=u$.\\
    Let $u':=f'(u)$. Then $u'\dagsle^D u$. Since $f'$ is idempotent, we have
    $f'(u')=f'(f'(u))=f'(u)=u'$ and thus
    $h(u)=g'(f'(u))=g'(u')=g'(f'(u'))=h(u')$. By the minimality of $u$ we have
    $f(u')=f'(u')=u'$. Hence
      $g(u)=g(f(u))=h(u)=h(u')=g(f(u'))=g(u')$, which contradicts $g$
      being past-injective.
      \qedhere
 \end{cs}
\end{proof}

\begin{corollary}\label{cor:1}
  Let $D\in\CD_k$, and let $H$ be a graph. Then
  \begin{equation}
    \label{eq:19}
    \hom(D,H)=\sum_{G\colres D}\sepi(D,G)\cdot\pihom((G,T^D[(V(G)]),H).
  \end{equation}
 \end{corollary}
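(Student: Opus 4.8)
The plan is to prove the identity by a straightforward double-counting argument that uses Lemma~\ref{lem:2} as the combinatorial engine. The key observation is that Lemma~\ref{lem:2} gives, for every homomorphism $h\colon D\to H$, a \emph{unique} factorization $h = g\circ f$ where $f\colon D\onto G$ is a shrinking epimorphism onto some graph $G\colres D$ and $g\colon(G,T^D[V(G)])\to H$ is a past-injective homomorphism. So I would define a map $\Phi$ sending each homomorphism $h\colon D\to H$ to the triple $(G,f,g)$ extracted from Lemma~\ref{lem:2}, and argue that $\Phi$ is a bijection between the set of all homomorphisms $D\to H$ and the disjoint union, over all $G\colres D$, of pairs consisting of a shrinking epimorphism $D\onto G$ and a past-injective homomorphism $(G,T^D[V(G)])\to H$.

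Concretely, first I would note that $\Phi$ is well-defined: by Lemma~\ref{lem:2} such a triple exists, and $G$ is determined as $f(F^D)$ so it is a legitimate index for the sum (one should also observe $G\colres D$, which holds because a shrinking epimorphism fixes $V(G)$ pointwise and hence preserves colours, as remarked just before Lemma~\ref{lem:1}). Next, $\Phi$ is injective: if $h,h'\colon D\to H$ both map to the same triple $(G,f,g)$, then $h = g\circ f = h'$. Finally, $\Phi$ is surjective: given any $G\colres D$, any shrinking epimorphism $f\colon D\onto G$, and any past-injective homomorphism $g\colon(G,T^D[V(G)])\to H$ — note $(G,T^D[V(G)])\in\CD_k$ by Lemma~\ref{lem:1}, so the notion of past-injective homomorphism makes sense here — the composite $h:=g\circ f$ is a homomorphism $D\to H$ (composition of homomorphisms, using that $g$ is a homomorphism $G\to H$ and $f$ a homomorphism $D\to G$), and the uniqueness clause of Lemma~\ref{lem:2} guarantees that $\Phi(h) = (G,f,g)$. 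Counting the two sides of this bijection yields exactly \eqref{eq:19}, since the number of homomorphisms $D\to H$ equals $\hom(D,H)$ and, for each fixed $G$, the number of pairs $(f,g)$ is $\sepi(D,G)\cdot\pihom((G,T^D[V(G)]),H)$.

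I do not anticipate a genuine obstacle here, since all the heavy lifting — existence and, crucially, uniqueness of the factorization — is already packaged in Lemma~\ref{lem:2}; the corollary is essentially just the cardinality shadow of that lemma. The only minor point needing care is making sure the sum on the right is over a well-defined index set and that each summand is finite: $G$ ranges over the finitely many subgraphs $G\colres D$ with $V(G)\subseteq V(D)$, and for each such $G$ both $\sepi(D,G)$ and $\pihom((G,T^D[V(G)]),H)$ are finite since $V(D)$ and $V(H)$ are finite. With that noted, the bijective argument closes immediately.
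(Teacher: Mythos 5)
Your proposal is correct and is exactly the intended argument: the paper states Corollary~\ref{cor:1} without a written proof, treating it as the immediate cardinality consequence of the existence-and-uniqueness statement in Lemma~\ref{lem:2}, which is precisely the bijection $h\leftrightarrow(G,f,g)$ you set up and verify.
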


\begin{corollary}\label{cor:2}
  Let $D\in\CD_k$, and let $H,H'$ be graphs such that for all
  $G\colres D$ with $(G,T^D[V(G)])\in\CD_k$ it holds that
  \[
    \pihom((G,T^D[V(G)]),H)=\pihom((G,T^D[V(G)]),H').
  \]
  Then 
  \[
    \hom(D,H)=\hom(D,H').
  \]
\end{corollary}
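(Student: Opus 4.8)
The plan is to read this off Corollary~\ref{cor:1}. Applying \eqref{eq:19} to $H$ gives
\[
  \hom(D,H)=\sum_{G\colres D}\sepi(D,G)\cdot\pihom((G,T^D[V(G)]),H),
\]
and applying it to $H'$ yields the analogous identity with $H'$ in place of $H$. Hence it suffices to show that the two sums agree term by term, i.e.\ that for every $G\colres D$ we have $\sepi(D,G)\cdot\pihom((G,T^D[V(G)]),H)=\sepi(D,G)\cdot\pihom((G,T^D[V(G)]),H')$.

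Fix $G\colres D$. If $\sepi(D,G)=0$, both products vanish and there is nothing to prove. Otherwise there is a shrinking epimorphism $f:D\onto G$, so Lemma~\ref{lem:1} applies and tells us that $T^D$ induces a subtree on $V(G)$ which is an elimination tree of $G$ of height at most $k$; that is, $(G,T^D[V(G)])\in\CD_k$. For such a $G$ the hypothesis of the corollary is available and gives $\pihom((G,T^D[V(G)]),H)=\pihom((G,T^D[V(G)]),H')$, so the two products coincide. Summing over all $G\colres D$ then yields $\hom(D,H)=\hom(D,H')$.

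There is no real obstacle here; the corollary is an immediate consequence of Corollary~\ref{cor:1} and Lemma~\ref{lem:1}. The only subtlety worth flagging is the apparent gap between the index set of the sum in \eqref{eq:19} --- all $G$ with $G\colres D$ --- and the set of graphs $G$ to which the hypothesis applies --- those with $(G,T^D[V(G)])\in\CD_k$. This gap is precisely closed by Lemma~\ref{lem:1}: any $G$ outside the latter set admits no shrinking epimorphism from $D$, hence contributes a zero term ($\sepi(D,G)=0$) and can be ignored.
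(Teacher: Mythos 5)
Your proof is correct and matches the paper's own argument essentially verbatim: the paper also derives the corollary directly from Corollary~\ref{cor:1} by noting, via Lemma~\ref{lem:1}, that any $G$ with $\sepi(D,G)>0$ automatically satisfies $(G,T^D[V(G)])\in\CD_k$, so the sum in \eqref{eq:19} can be restricted to such $G$. You have simply spelled out the term-by-term comparison that the paper compresses into a single sentence.
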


\begin{proof}
  Here we use Lemma~\ref{lem:1} to see that we can restrict the sum in
  \eqref{eq:19} to $G$ with $(G,T^D[V(G)])\in\CD_k$.
\end{proof}

\begin{lemma}\label{lem:3}
  Let $(G,T)\in\CD_k$, and let $H,H'$ be graphs such that for all $F\in\TD_k$ with
  $F\colres G$,
  \[
    \hom(F,H)=\hom(F,H').
  \]
  Then  
  \[
    \pihom((G,T),H)=\pihom((G,T),H').
  \]
\end{lemma}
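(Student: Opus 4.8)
The plan is to show that past-injective homomorphism counts into $H$ can be recovered from homomorphism counts into $H$, by exhibiting the matrix in Corollary~\ref{cor:1} as an invertible (in fact triangular) one. Concretely, fix $(G,T)\in\CD_k$. I want to express $\pihom((G,T),H)$ as a finite linear combination $\sum_{F} c_F\cdot\hom(F,H)$ where $F$ ranges over graphs in $\TD_k$ with $F\colres G$ and the coefficients $c_F$ depend only on $(G,T)$, not on $H$; once this is established the lemma is immediate, since the hypothesis gives $\hom(F,H)=\hom(F,H')$ for exactly these $F$.

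First I would set up the right index set. For $G'\colres G$ let $D_{G'}:=(G',T[V(G')])$; by Lemma~\ref{lem:1} (applied with $D=(G,T)$ and the trivial shrinking epimorphism, or directly since $T[V(G')]$ is still a tree of height $\le k$ when $r^T\in V(G')$) these lie in $\CD_k$. Order these objects by, say, $D_{G'}\le D_{G''}$ iff $V(G')\subseteq V(G'')$, refined by number of edges; the key point is that a shrinking epimorphism $f:D_{G''}\onto D_{G'}$ forces $V(G')\subseteq V(G'')$ and, when $G'=G''$, forces $f$ to be the identity (since a shrinking homomorphism fixes its own image and $V(G')=V(G'')$ means it fixes everything), so $\sepi(D_{G'},D_{G'})=1$ and $\sepi(D_{G''},D_{G'})=0$ whenever $D_{G''}$ is strictly below $D_{G'}$ in this order. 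Thus, reading Corollary~\ref{cor:1} for each $D=D_{G''}$ as a system of linear equations expressing $\hom(D_{G''},H)$ in terms of the unknowns $\pihom(D_{G'},H)$, the coefficient matrix $\big(\sepi(D_{G''},D_{G'})\big)$ is upper unitriangular with respect to this order over the finite index set $\{G'\colres G\}$. It is therefore invertible over $\Int$, and $\pihom((G,T),H)=\pihom(D_G,H)$ is an integer linear combination of the values $\hom(D_{G''},H)=\hom(G'',H)$ with $G''\colres G$ and $G''\in\TD_k$ (the latter because $(G'',T[V(G'')])\in\CD_k$). Applying the hypothesis term by term finishes the proof.

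I expect the main obstacle to be purely bookkeeping rather than conceptual: verifying that the index set $\{G':G'\colres G\}$ is genuinely finite (it is — only finitely many vertex subsets and edge sets on $V(G)$) so that the triangular inversion is legitimate, and pinning down the well-founded order and the claim $\sepi(D_{G''},D_{G'})=0$ unless $V(G')\subseteq V(G'')$ (immediate from $h(u)\dagle^D u$, which keeps the image inside $V(D_{G''})$, together with surjectivity onto $V(G')$). One subtlety worth stating carefully is that Corollary~\ref{cor:1} is applied with the ambient decomposed graph being $D_{G''}=(G'',T[V(G'')])$ for each relevant $G''$, and that $T[V(G'')][V(G')]=T[V(G')]$, so the induced-subtree operation is consistent along the chain; this is what makes the matrix entries match up. No step requires anything beyond Lemma~\ref{lem:1} and Corollary~\ref{cor:1}.
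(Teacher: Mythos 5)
Your proposal is correct and follows essentially the same route as the paper: express the vector of $\hom$-counts as a triangular matrix (with unit diagonal, via Corollary~\ref{cor:1} and the observation that $\sepi(D_{G''},D_{G'})>0$ forces $V(G')\subseteq V(G'')$, with equality only when $G'=G''$) times the vector of $\pihom$-counts, then invert. The only cosmetic differences are that the paper linearizes the index set by $|V(G_i)|$ and phrases the matrix as lower triangular rather than upper; your edge-count refinement is harmless but unnecessary, since equal vertex sets already force the shrinking epimorphism to be the identity.
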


\begin{proof}
  Let $\CG$ be the set of all
  $(G',T')\in\CD_k$ such that $G'\colres G$ and
  $T'=T[V(G')]$. In particular,
  $(G,T)\in\CG$.

  Let $(G_1,T_1),\ldots,(G_m,T_m)$ be an enumeration of
  $\CG$ such that $(G_m,T_m)=(G,T)$ and $|G_i|\le |G_j|$ for $i\le
  j$. Observe that $\sepi((G_i,T_i),G_i)=1$ for all $i$ and that
  $\sepi((G_i,T_i),G_j)>0$ for $j\neq i$ only if $V(G_j)\subset V(G_i)$ and
  hence $j<i$. Let $A\in\Real^{m\times m}$ be the matrix with entries
  $A_{ij}:=\sepi((G_i,T_i),G_j)$. Then $A$ is a lower triangular matrix
  with diagonal entries $A_{ii}=1$ for all $i$. This implies that $A$
  is invertible. 

  Let $\vec c=(c_1,\ldots,c_m)^T$ be the vector with entries
  $c_i:=\hom(G_i,H)$, and let $\vec b=(b_1,\ldots,b_m)^T$ be the
  vector with entries $b_i:=\pihom((G_i,T_i),H)$. By
  Corollary~\ref{cor:1}, for every $i$ we have
  \begin{align*}
    c_i&=\hom((G_i,T_i),H)\\
       &=\sum_{G'\colres G_i}\sepi((G_i,T_i),G')\cdot\pihom((G',T_i[V(G')]),H)\\
    &=\sum_{j=1}^mA_{ij}b_j.
  \end{align*}
  Thus $\vec c=A\vec b$, and since $A$ is invertible, $\vec
  b=A^{-1}\vec c$.

  Now let  $\vec c'=(c'_1,\ldots,c'_m)^T$ be the vector with entries
  $c'_i:=\hom(G_i,H')$, and let $\vec b'=(b'_1,\ldots,b'_m)^T$ be the
  vector with entries $b'_i:=\pihom((G_i,T_i),H')$. Then $\vec b'=A^{-1}\vec c'$. 

  By the assumption
  of the lemma, we have $\vec c=\vec c'$. Thus $\vec b=\vec b'$. In
  particular, 
  \[
    \pihom((G,T),H)=b_m=b_m'=\pihom((G,T),H').
    \qedhere
  \]
\end{proof}

% \begin{corollary}
%   For all graphs $H,H'$, the following are equivalent;
%   \begin{enumerate}
%   \item $\hom(F,H)=\hom(F,H')$ for all $F\in\TD_k$;
%   \item $\pihom(D,H)=\pihom(D,H')$ for all $D\in\CD_k$.
%   \end{enumerate}
% \end{corollary}

Let us now move on to past-preserving homomorphisms.

\begin{lemma}
  Let $D\in\CD_k$, and let $h:D\to H$ be a past-injective
  homomorphism from $D$ to a graph $H$. Then there is a unique graph
  $G\supseteq F^D$ with $V(G)=V(D)$ such that $T^D$ is an elimination
  tree of\/ $G$ and $h$ is a past-preserving homomorphism from
  $(G,T^D)$ to $H$.
\end{lemma}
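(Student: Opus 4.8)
The plan is to define $G$ directly from $h$ by decreeing which edges it must have, and then to verify the required properties one by one, closely mirroring the statement. Concretely, I would set $V(G):=V(D)$, $\col^G:=\col^D$, and
\[
  E(G):=E(F^D)\cup\bigl\{uv\;\bigm|\;u\dagsle^D v\text{ and }h(u)h(v)\in E(H)\bigr\}.
\]
By construction $F^D\subseteq G$, and this is visibly the \emph{only} choice of edge set for which $h$ can be past-preserving on $(G,T^D)$: past-preservation demands, for each comparable pair $u\dagle^D v$, that $uv\in E(G)\iff h(u)h(v)\in E(H)$, which pins down every ``comparable'' edge, while for incomparable pairs no new edge may be added since $T^D$ must remain an elimination forest of $G$. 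This will give the uniqueness clause for free once existence is established.

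First I would check that $T^D$ is an elimination tree of $G$. Every edge of $G$ joins two $\dagle^D$-comparable vertices: for edges inherited from $F^D$ this holds because $T^D$ is an elimination tree of $F^D$, and for the newly added edges it holds by fiat. Since $V(G)=V(D)=V(T^D)$ and $T^D$ is already a tree, this shows $(G,T^D)\in\CD$, and the height of $T^D$ is at most $k$ since $D\in\CD_k$, so $(G,T^D)\in\CD_k$. Next I would verify that $h$ is a homomorphism from $G$ to $H$: it preserves colours because $\col^G=\col^D=\col^{F^D}$ and $h$ was already a homomorphism from $F^D$; it preserves edges because every edge $uv\in E(G)$ is comparable, say $u\dagsle^D v$, and then either $uv\in E(F^D)$ — whence $h(u)h(v)\in E(H)$ since $h:F^D\to H$ — or $uv$ is one of the new edges, in which case $h(u)h(v)\in E(H)$ by the definition of $E(G)$.

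It then remains to confirm that $h$ is \emph{past-preserving} as a homomorphism from $(G,T^D)$ to $H$. Past-injectivity is inherited verbatim from the hypothesis, since $\dagle^G=\dagle^{T^D}=\dagle^D$ and $h$ was assumed past-injective on $D$. For the second condition, take $u,v\in V(D)$ with $u\dagle^D v$; if $u=v$ the condition is vacuous, so assume $u\dagsle^D v$. If $uv\in E(G)$ then, as just argued, $h(u)h(v)\in E(H)$. Conversely, if $h(u)h(v)\in E(H)$ then $uv$ lies in the set adjoined to $E(F^D)$, so $uv\in E(G)$. This establishes the biconditional and hence past-preservation, completing existence; uniqueness was noted above. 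I do not expect a serious obstacle here: the only point requiring a little care is making sure the chosen $E(G)$ is simultaneously forced (for uniqueness) and admissible (keeping $T^D$ an elimination tree), and both follow immediately because the edges we add are all between $\dagle^D$-comparable vertices.
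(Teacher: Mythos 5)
Your proof is correct and follows essentially the same route as the paper's: define $E(G)$ to consist exactly of the $\dagle^{T^D}$-comparable pairs whose $h$-images are adjacent in $H$, then observe that $F^D\subseteq G$ (so the explicit inclusion of $E(F^D)$ in your definition is redundant but harmless), that $T^D$ remains an elimination tree, and that past-preservation and uniqueness both fall out of the biconditional defining $E(G)$. The only cosmetic difference is that the paper verifies uniqueness by taking an arbitrary competitor $G'$ and showing $E(G')=E(G)$ directly, whereas you argue it at the outset by noting the edge set is forced; the content is identical.
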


\begin{proof}
  Suppose that $D=(F,T)$.
  We let $G$ be the graph with $V(G):=V(F)$,
  \[
  E(G):=\Big\{uv\in\binom{V(F)}{2}\Bigmid u\dagle^T v\text{ and }h(u)h(v)\in E(H)\Big\},
  \]
  and $\col^G:=\col^F$.
  Then $G\supseteq F$, because $h$ is a homomorphism, and $T$ is an elimination tree of $G$, because $uv\in E(G)$ implies
  $u\dagle^T v$ or $v\dagle^T u$. Moreover, $h$ is past-preserving,
  because it is past-injective and for $u\dagle^T v$ we have $uv\in
  E(G)\iff h(u)h(v)\in E(H)$.

  It remains to prove the uniqueness. Let $G'\supseteq F$ with
  $V(G')=V(F)$ such that $T$ is an elimination tree of $G'$ and $h$ is a past-preserving homomorphism from
  $(G',T)$ to $H$. Then $\col^{G'}=\col^F=\col^G$, because $h$ is a
  homomorphism from $G'$ to $H$. Moreover, for all $uv\in E(G')$, either $u\dagle^T v$ or
  $v\dagle^T u$, because $T$ is an elimination tree of $G'$, and $uv\in
  E(G')\iff h(u)h(v)\in E(H)$, because $h$ is past-preserving. Thus
  $E(G')=E(G)$ and therefore $G=G'$.
\end{proof}

\begin{corollary}\label{cor:pp1}
    Let $D\in\CD_k$, and let $H$ be a graph. Then
  \[
    \pihom(D,H)=\hspace{-5mm}\sum_{\substack{G\supseteq F^D\textup{ such that
    }V(G)=V(D)\\\textup{and }(G,T^D)\in\CD_k}}\hspace{-10mm}\pphom((G,T^D),H).
  \]
\end{corollary}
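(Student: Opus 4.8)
The plan is to obtain the identity as a direct counting consequence of the preceding lemma, in exactly the same way Corollary~\ref{cor:1} follows from Lemma~\ref{lem:2}. Fix $D\in\CD_k$ and a graph $H$. I would count the set of past-injective homomorphisms $h:D\to H$ by classifying each such $h$ according to the graph $G$ it "induces." The preceding lemma gives, for every past-injective $h:D\to H$, a unique graph $G\supseteq F^D$ with $V(G)=V(D)$ such that $T^D$ is an elimination tree of $G$ and $h:(G,T^D)\to H$ is past-preserving. Conversely, every past-preserving homomorphism $h:(G,T^D)\to H$ with $G\supseteq F^D$, $V(G)=V(D)$, and $(G,T^D)\in\CD_k$ is in particular a past-injective homomorphism from $D$ to $H$ (since past-preserving implies past-injective, and a homomorphism from $(G,T^D)$ to $H$ restricts to a homomorphism from $F^D$ to $H$ because $F^D\subseteq G$). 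So the map $h\mapsto(G,h)$ is a bijection between past-injective homomorphisms $D\to H$ and pairs $(G,h)$ where $G$ ranges over the graphs in the sum and $h$ is a past-preserving homomorphism from $(G,T^D)$ to $H$.

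The one point that needs a sentence of care is the height restriction $(G,T^D)\in\CD_k$ indexing the sum: the preceding lemma only asserts that $T^D$ is an elimination tree of $G$, but since $T^D=T^D$ has height at most $k$ (as $D\in\CD_k$), automatically $(G,T^D)\in\CD_k$, so adding this condition to the index set does not drop any terms. Hence summing the bijection over all admissible $G$ yields
\[
  \pihom(D,H)=\sum_{\substack{G\supseteq F^D\textup{ such that }V(G)=V(D)\\\textup{and }(G,T^D)\in\CD_k}}\pphom((G,T^D),H),
\]
which is the claimed identity.

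There is essentially no obstacle here; the corollary is a bookkeeping restatement of the lemma. The only thing to double-check is that the correspondence is genuinely a bijection onto all the counted pairs — that is, that \emph{every} $G$ appearing in the sum is actually attained by some past-injective $h$ (it need not be, if $\pphom((G,T^D),H)=0$, but then that term contributes $0$ and causes no harm) and that distinct pairs $(G,h)\ne(G',h')$ correspond to distinct past-injective homomorphisms, which is clear since the underlying map $V(D)\to V(H)$ already differs when $h\ne h'$, and $G$ is determined by $h$ via the lemma so the case $G\ne G'$, $h=h'$ cannot occur.
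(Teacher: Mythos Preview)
Your proposal is correct and is exactly the intended argument: the paper states the corollary without proof, as an immediate consequence of the preceding lemma, and your bijection between past-injective homomorphisms $h:D\to H$ and pairs $(G,h)$ with $h$ past-preserving from $(G,T^D)$ is precisely that consequence. Your remarks on the converse direction and on the automatic height bound $(G,T^D)\in\CD_k$ are the right checks.
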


\begin{corollary}\label{cor:pp2}
    Let $D\in\CD_k$, and let $H,H'$ be graphs such that for all
    $G\supseteq F^D$ with $V(G)=V(D)$ and $(G,T^D)\in\CD_k$,
  \[
    \pphom((G,T^D),H)=\pphom((G,T^D),H').
  \]
  Then 
  \[
    \pihom(D,H)=\pihom(D,H').
  \]
\end{corollary}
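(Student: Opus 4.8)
The plan is to read this off Corollary~\ref{cor:pp1}, which already carries all the combinatorial content. First I would apply Corollary~\ref{cor:pp1} twice: once to the graph $H$, yielding $\pihom(D,H)=\sum_{G}\pphom((G,T^D),H)$, and once to $H'$, yielding $\pihom(D,H')=\sum_{G}\pphom((G,T^D),H')$. In both identities the summation variable $G$ ranges over exactly the same (finite) index set, namely all graphs $G\supseteq F^D$ with $V(G)=V(D)$ and $(G,T^D)\in\CD_k$, and this is precisely the set over which the hypothesis of Corollary~\ref{cor:pp2} quantifies. I would then compare the two sums term by term: for every such $G$ the hypothesis gives $\pphom((G,T^D),H)=\pphom((G,T^D),H')$, so corresponding summands agree, the two finite sums are equal, and hence $\pihom(D,H)=\pihom(D,H')$.

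I do not expect a genuine obstacle here: Corollary~\ref{cor:pp2} is pure bookkeeping on top of Corollary~\ref{cor:pp1}, whose content in turn is the preceding uniqueness lemma, which says that every past-injective homomorphism $h:D\to H$ is past-preserving into a unique graph $G$ with $V(G)=V(D)$, $F^D\subseteq G$, and $T^D$ an elimination tree of $G$ (so automatically $(G,T^D)\in\CD_k$); this is what partitions the past-injective homomorphisms into the classes counted by $\pphom((G,T^D),H)$. The one point worth a second's attention is that the index set of the sum in Corollary~\ref{cor:pp1} literally coincides with the range of the universal quantifier in the hypothesis of Corollary~\ref{cor:pp2}, so that no stray terms remain to spoil the term-by-term comparison; this is immediate from comparing the two statements. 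A marginally more hands-on variant would avoid invoking Corollary~\ref{cor:pp1} explicitly and instead group the past-injective homomorphisms $D\to H$ and $D\to H'$ by their associated target graph $G$ directly via the uniqueness lemma before summing, but this merely re-derives Corollary~\ref{cor:pp1} inline.
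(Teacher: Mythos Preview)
Your proposal is correct and is exactly the intended argument: the paper states Corollary~\ref{cor:pp2} without proof precisely because it follows from Corollary~\ref{cor:pp1} by the term-by-term comparison you describe. Your observation that the index set of the sum in Corollary~\ref{cor:pp1} coincides verbatim with the range of the hypothesis in Corollary~\ref{cor:pp2} is the only thing to check, and you have checked it.
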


\begin{lemma}\label{lem:pp4}
  Let $D\in\CD_k$, and let $H,H'$ be graphs. Suppose that for all
    $G\supseteq F^D$ with $V(G)=V(D)$ and $(G,T^D)\in\CD_k$ we have
  \[
    \pihom((G,T^D),H)=\pihom((G,T^D),H').
  \]
  Then
  \[
    \pphom(D,H)=\pphom(D,H').
  \]
\end{lemma}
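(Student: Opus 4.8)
The plan is to invert the triangular linear system provided by Corollary~\ref{cor:pp1}, exactly as Lemma~\ref{lem:3} inverts the system from Corollary~\ref{cor:1}. First I would let $\CG$ be the (finite) set of all graphs $G$ with $G\supseteq F^D$, $V(G)=V(D)$, and $(G,T^D)\in\CD_k$; note that $F^D\in\CG$ since $D=(F^D,T^D)\in\CD_k$, and that every member of $\CG$ has vertex set $V(D)$ and colouring $\col^D$. Fix an enumeration $\CG=\{G_1,\ldots,G_m\}$ with $|E(G_i)|\ge|E(G_j)|$ whenever $i\le j$, and let $B\in\Real^{m\times m}$ be the $0/1$-matrix with $B_{ij}:=1$ iff $E(G_i)\subseteq E(G_j)$, which under the conventions above is the same as saying $G_i\subseteq G_j$. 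The ordering makes $B$ lower triangular: if $B_{ij}=1$ and $i\neq j$ then $E(G_i)\subsetneq E(G_j)$, so $|E(G_i)|<|E(G_j)|$, which forces $j<i$. Since $B_{ii}=1$ for all $i$, the matrix $B$ is invertible.

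Next I would apply Corollary~\ref{cor:pp1} to each pair $(G_i,T^D)\in\CD_k$. Its right-hand side ranges over all graphs $G$ with $G\supseteq G_i$, $V(G)=V(D)$, and $(G,T^D)\in\CD_k$, which are precisely the members $G_j$ of $\CG$ with $G_i\subseteq G_j$. Hence, writing $\vec c=(c_1,\ldots,c_m)^T$ with $c_i:=\pihom((G_i,T^D),H)$ and $\vec b=(b_1,\ldots,b_m)^T$ with $b_i:=\pphom((G_i,T^D),H)$, Corollary~\ref{cor:pp1} gives $c_i=\sum_{j=1}^m B_{ij}b_j$, that is $\vec c=B\vec b$, and therefore $\vec b=B^{-1}\vec c$. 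Running the same computation for $H'$ yields $\vec b'=B^{-1}\vec c'$, where $c'_i:=\pihom((G_i,T^D),H')$ and $b'_i:=\pphom((G_i,T^D),H')$.

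Finally I would invoke the hypothesis of the lemma, which says exactly that $\pihom((G,T^D),H)=\pihom((G,T^D),H')$ for every $G\in\CG$, i.e. $\vec c=\vec c'$. Then $\vec b=B^{-1}\vec c=B^{-1}\vec c'=\vec b'$. Choosing the index $i_0$ with $G_{i_0}=F^D$, the $i_0$-th coordinate of $\vec b=\vec b'$ reads $\pphom(D,H)=\pphom((F^D,T^D),H)=\pphom((F^D,T^D),H')=\pphom(D,H')$, which is the claim.

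There is essentially no obstacle here: the argument is the past-injective/past-preserving analogue of Lemma~\ref{lem:3}, with Corollary~\ref{cor:pp1} in place of Corollary~\ref{cor:1} and the triangular order given by edge count rather than vertex count. The only points that require a moment's care are verifying that the index set of $B$ coincides with the range of the sum in Corollary~\ref{cor:pp1}, and noting that — unlike in the passage from Corollary~\ref{cor:1} to Corollary~\ref{cor:2} via Lemma~\ref{lem:1} — no extra pruning of the sum is needed, since the constraint $(G,T^D)\in\CD_k$ is already built into the definition of $\CG$.
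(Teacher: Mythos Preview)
Your proof is correct and essentially identical to the paper's: both invert the triangular linear system from Corollary~\ref{cor:pp1} by enumerating the set $\CG$ according to edge count and observing that the resulting $0/1$ incidence matrix is triangular with $1$'s on the diagonal. The only cosmetic difference is that the paper orders $\CG$ by non-decreasing edge count (so $G_1=F^D$ and the matrix is upper triangular), whereas you order by non-increasing edge count (so $F^D$ appears last and the matrix is lower triangular).
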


\begin{proof}
  Let $D=(F,T)$.
  Let $\CG$ be the set of all $G\supseteq F$ such that $V(G)=V(F)$ and
  $T$ is an elimination tree of $G$. In particular,
  $F\in\CG$. Let $G_1,\ldots,G_m$ be an enumeration of
  $\CG$ such that $G_1=F$ and $|E(G_i)|\le |E(G_j)|$ for $i\le
  j$. Let $A\in\Real^{m\times m}$ be the matrix with entries
  $A_{ij}:=1$ if $G_i\subseteq G_j$ and $A_{ij}=0$ otherwise. Then $A$ is an upper triangular matrix
  with diagonal entries $A_{ii}=1$ for all $i$. This implies that $A$
  is invertible. 

  Let $\vec c=(c_1,\ldots,c_m)^T$ be the vector with entries
  $c_i:=\pihom((G_i,T),H)$, and let $\vec b=(b_1,\ldots,b_m)^T$ be the
  vector with entries $b_i:=\pphom((G_i,T),H)$. By
  Corollary~\ref{cor:pp1}, we have $\vec c=A\vec b$, and since $A$ is invertible, $\vec
  b=A^{-1}\vec c$.

  Now let  Let $\vec c'=(c'_1,\ldots,c'_m)^T$ be the vector with entries
  $c'_i:=\pihom((G_i,T),H')$, and let $\vec b'=(b'_1,\ldots,b'_m)^T$ be the
  vector with entries $b'_i:=\pphom((G_i,T),H')$. Then $\vec b'=A^{-1}\vec c'$. 

  By the assumption
  of the lemma, we have $\vec c=\vec c'$. Thus $\vec b=\vec b'$. In
  particular, 
  \[
    \pihom(D,H)=b_1=b_1'=\pihom(D,H').
    \qedhere
  \]
\end{proof}

\begin{theorem}\label{theo:pp}
  For all $k\ge 0$ and all graphs $G,G'$, the following are equivalent.
  \begin{eroman}
     \item For all $F\in\TD_k$,
    \[
      \hom(F,G)=\hom(F,G').
    \]
  \item For all $D\in\CD_k$, 
    \[
      \pihom\big(D,G\big)=\pihom\big(D,G'\big).
    \]
  \item For all $D\in\CD_k$,
    \[
      \pphom\big(D,G\big)=\pphom\big(D,G'\big).
    \]
  \end{eroman}
\end{theorem}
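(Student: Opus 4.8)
The plan is to derive Theorem~\ref{theo:pp} by assembling the lemmas and corollaries already established in this section; no new combinatorial argument is needed, only bookkeeping. Concretely I would prove the four implications (i)$\Rightarrow$(ii), (ii)$\Rightarrow$(iii), (iii)$\Rightarrow$(ii), and (ii)$\Rightarrow$(i). For (i)$\Rightarrow$(ii), fix $D\in\CD_k$. Condition~(i) asserts $\hom(F,G)=\hom(F,G')$ for \emph{every} $F\in\TD_k$, hence in particular for every $F\in\TD_k$ with $F\colres D$, so Lemma~\ref{lem:3}, applied to $D$ with target graphs $G$ and $G'$, gives $\pihom(D,G)=\pihom(D,G')$. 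For (ii)$\Rightarrow$(iii), fix $D\in\CD_k$; for every $G_0\supseteq F^D$ with $V(G_0)=V(D)$ and $(G_0,T^D)\in\CD_k$, the pair $(G_0,T^D)$ is a member of $\CD_k$, so condition~(ii) applied to it supplies $\pihom((G_0,T^D),G)=\pihom((G_0,T^D),G')$, which is exactly the hypothesis of Lemma~\ref{lem:pp4}; thus $\pphom(D,G)=\pphom(D,G')$.

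The reverse implications are symmetric. For (iii)$\Rightarrow$(ii), fix $D\in\CD_k$; for each relevant $G_0$ with $(G_0,T^D)\in\CD_k$, condition~(iii) applied to $(G_0,T^D)$ gives $\pphom((G_0,T^D),G)=\pphom((G_0,T^D),G')$, which is the hypothesis of Corollary~\ref{cor:pp2}, yielding $\pihom(D,G)=\pihom(D,G')$. For (ii)$\Rightarrow$(i), first note it suffices to treat connected $F\in\TD_k$: the connected components of a graph in $\TD_k$ lie in $\cTD_k$ by Lemma~\ref{lem:ind-td}(3), and $\hom$ is multiplicative over connected components, so agreement on the connected graphs in $\TD_k$ forces agreement on all of $\TD_k$. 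Given connected $F\in\TD_k$, pick $D\in\CD_k$ with $F^D=F$. For every $G_0\colres D$ with $(G_0,T^D[V(G_0)])\in\CD_k$, condition~(ii) applied to $(G_0,T^D[V(G_0)])$ gives $\pihom((G_0,T^D[V(G_0)]),G)=\pihom((G_0,T^D[V(G_0)]),G')$, so Corollary~\ref{cor:2} applies and yields $\hom(F,G)=\hom(D,G)=\hom(D,G')=\hom(F,G')$.

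I do not expect a genuine obstacle here: all of the real difficulty has already been spent on the existence-and-uniqueness of the shrinking-epimorphism / past-injective factorisation (Lemma~\ref{lem:2}), on Lemma~\ref{lem:1}, and on the triangularity --- hence invertibility --- of the transition matrices underlying Lemma~\ref{lem:3}, Corollary~\ref{cor:pp1}, and Lemma~\ref{lem:pp4}. The only points that require a moment's care are the routine verification that each ``global'' hypothesis (i)/(ii)/(iii) really does deliver the \emph{restricted} hypothesis demanded by the corollary or lemma being invoked (which works precisely because every object occurring in those restricted hypotheses is again an element of $\CD_k$, or of $\TD_k$), and the reduction to connected graphs in (ii)$\Rightarrow$(i).
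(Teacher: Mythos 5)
Your proof is correct and follows essentially the same route as the paper: (i)$\Rightarrow$(ii) via Lemma~\ref{lem:3}, (ii)$\Rightarrow$(i) via Corollary~\ref{cor:2} plus the reduction to connected graphs, and (ii)$\Leftrightarrow$(iii) via Lemma~\ref{lem:pp4} and Corollary~\ref{cor:pp2}. The only difference is expository --- you spell out the four directed implications and the bookkeeping that the ``restricted'' hypotheses are met, which the paper leaves implicit.
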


\begin{proof}
  The implication (i)$\implies$(ii) follows from
  Lemma~\ref{lem:3}.

  As for all connected $F\in\cTD_k$ there is an elimination tree $T$ such that $(F,T)\in\CD_k$, it follows from
  Corollary~\ref{cor:2} that (ii) implies $ \hom(F,G)=\hom(F,G')$ for
  all $F\in\cTD_k$. But we have observed earlier that this implies $ \hom(F,G)=\hom(F,G')$ for
  all $F\in\TD_k$.
  
  The equivalence between (ii) and (iii) follows from
  Lem\-ma~\ref{lem:pp4} and Corollary~\ref{cor:pp2}.
\end{proof}

%%%%%%%%%%%%%%%%%%%%%%%%%%%%%%%%%%%%%%%%%%%%%%%%%%%%%%%%%%%%
\section{Playing the Game}
%%%%%%%%%%%%%%%%%%%%%%%%%%%%%%%%%%%%%%%%%%%%%%%%%%%%%%%%%%%%

In this section, we will connect the numbers of past-preserv\-ing
homomorphisms to the bijective pebble game and use this to prove our
main theorem. We start with a technical lemma that we need for our
interpolation arguments later.\footnote{We are convinced that this
  lemma is known to many other researchers, but lacking a
  reference, we decided to include a proof.}

\begin{lemma}\label{lem:4}
 Let $\vec a_1,\ldots,\vec a_\ell\in\Nat^m$, where $\vec
  a_i=(a_{i1},\ldots,a_{im})$, be mutually
  distinct vectors with positive entries. For every $i\in[\ell]$ and every $\vec
  d=(d_1,\ldots,d_m)\in\Nat^m$, let $\vec a_i^{(\vec d)}:=\prod_{j=1}^ma_{ij}^{d_j}$.

  Then there is a $\vec d=(d_1,\ldots,d_m)\in\Nat^m$ such that $1\le d_i\le\ell^2$ for all $i\in[m]$ and
  $\vec a_1^{(\vec d)},\ldots,\vec a_\ell^{(\vec d)}$ are mutually distinct. 
\end{lemma}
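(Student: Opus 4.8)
The plan is to set $N:=\ell^2$ and to find a suitable $\vec d$ inside the grid $\{1,\dots,N\}^m$ by a simple counting argument: I will bound, for each of the $\binom{\ell}{2}$ pairs $\{i,i'\}$, the number of choices of $\vec d$ that fail to separate $\vec a_i$ from $\vec a_{i'}$, and show that the total number of failures is strictly smaller than $N^m$. We may assume $\ell\ge 2$, since for $\ell\le 1$ there is nothing to prove (take $\vec d=(1,\dots,1)$).

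Fix a pair $i\neq i'$ and call $\vec d\in\{1,\dots,N\}^m$ \emph{bad for $(i,i')$} if $\vec a_i^{(\vec d)}=\vec a_{i'}^{(\vec d)}$. Since $\vec a_i\neq\vec a_{i'}$, there is a coordinate $j\in[m]$ with $a_{ij}\neq a_{i'j}$. The key observation is that for every fixed choice of the remaining coordinates $d_{j'}$ ($j'\neq j$) there is at most one value of $d_j\in\{1,\dots,N\}$ making $\vec d$ bad for $(i,i')$: rewriting $\vec a_i^{(\vec d)}=\vec a_{i'}^{(\vec d)}$ as
\[
  \left(\frac{a_{ij}}{a_{i'j}}\right)^{d_j}=\prod_{j'\neq j}\left(\frac{a_{i'j'}}{a_{ij'}}\right)^{d_{j'}},
\]
the right-hand side is a fixed positive number, while the left-hand side is a strictly monotone function of $d_j$, because $a_{ij}$ and $a_{i'j}$ are positive and distinct so their ratio is positive and $\neq 1$. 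Hence there are at most $N^{m-1}$ vectors $\vec d\in\{1,\dots,N\}^m$ that are bad for $(i,i')$ (there are $N^{m-1}$ choices of the coordinates other than $j$, each admitting at most one bad value of $d_j$). Here is precisely where the hypothesis that the $\vec a_i$ have positive entries and are mutually distinct is used.

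Summing over all pairs, the number of $\vec d\in\{1,\dots,N\}^m$ that are bad for some pair is at most
\[
  \binom{\ell}{2}\,N^{m-1}<\frac{\ell^{2}}{2}\,N^{m-1}=\frac{N^{m}}{2}<N^{m}=\bigl|\{1,\dots,N\}^m\bigr|.
\]
Therefore some $\vec d\in\{1,\dots,N\}^m$ is bad for no pair; that is, $1\le d_i\le\ell^2$ for all $i\in[m]$ and $\vec a_1^{(\vec d)},\dots,\vec a_\ell^{(\vec d)}$ are mutually distinct, as required. There is no serious obstacle in this argument; the only point requiring a little care is the single-variable step, which is why I would phrase it via the strict monotonicity of an exponential (equivalently, one could take logarithms, note that the bad set for each pair lies on the hyperplane $\{\vec x:\sum_j x_j(\log a_{ij}-\log a_{i'j})=0\}$ with nonzero normal vector, and run the same per-coordinate count).
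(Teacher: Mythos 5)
Your proof is correct, and it takes a genuinely different route from the paper's. The paper proceeds by induction on $m$: it first fixes $d_1,\dots,d_{m-1}$ to separate the projections $\vec a_i'=(a_{i1},\dots,a_{i(m-1)})$ (more precisely, one representative per group of vectors with equal projections), and only then sweeps over $d_m$, using a pigeonhole count (at most one bad $d_m$ per pair, hence some $d_m\le\binom{\ell}{2}+1\le\ell^2$ works); the inductive step has to handle the bookkeeping of which pairs are already separated by the first $m-1$ coordinates and which still agree. You instead do a single global union bound over the box $\{1,\dots,\ell^2\}^m$: each pair $(i,i')$ has at most $\ell^{2(m-1)}$ bad points because, in a coordinate $j$ where $a_{ij}\neq a_{i'j}$, the collision equation pins down $d_j$ uniquely once the other coordinates are fixed (strict monotonicity of $t\mapsto(a_{ij}/a_{i'j})^t$). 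Since $\binom{\ell}{2}\ell^{2(m-1)}<\ell^{2m}$, a good $\vec d$ exists. The underlying engine is the same observation — a ratio of distinct positive integers raised to an exponent is injective in that exponent — but your formulation avoids the induction and the partition/re-index step entirely, which makes it shorter and arguably cleaner; both yield the same bound $\ell^2$.
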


\begin{proof}
  The proof is by induction on $m$. The case $m=1$ is trivial, we can
  simply choose $d_1=1$. For the inductive step $m-1$ to $m$, let
  $\vec a_i':=(a_{i1},\ldots,a_{i(m-1)})$. Let $P_1,\ldots,P_{\ell'}$
  be the partition of $[\ell]$ such that $\vec a_i'=\vec a_j'$ if and
  only if $i,j\in P_p$ for some $p\in[\ell']$. For $p\in[\ell']$, let
  $\vec b_p:=\vec a_i'$ for $i\in P_p$.  By the induction hypothesis,
  there is a vector $\vec d':=(d_1,\ldots,d_{m-1})$ such that
  $1\le d_i\le(\ell')^2\le\ell^2$ for all $i\in[m-1]$ and the numbers
  $\vec b_1^{(\vec d')},\ldots,\vec b_\ell^{(\vec d')}$ are mutually
  distinct. In the following, we keep $d_1,\ldots,d_{m-1}$ fixed and
  try to find a $d_m$ such that $\vec d=(d_1,\ldots,d_m)$ satisfies
  the assertion of the lemma.

    Observe that for $d_m\ge 1$ and $i,j\in[\ell]$, if $i,j\in
    P_p$ for some $p$ then $\vec a_i'=\vec a_j'$ and $a_{im}\neq
    a_{jm}$ and thus
    \[
      \vec a_i^{(\vec d)}=\vec b_p^{(\vec d')}a_{im}^{d_m}\neq \vec b_p^{(\vec
        d')}a_{jm}^{dm}=\vec a_{j}^{(\vec d)}.
    \]
    For $i\in P_p,j\in P_q$ with $p\neq q$ we have 
    \[
      \vec a_i^{(\vec d)}=\vec a_{j}^{(\vec d)}\iff\frac{\vec
        b_p^{(\vec d')}}{\vec b_q^{(\vec
          d')}}=\frac{a_{jm}^{d_m}}{a_{im}^{d_m}}.
      \]
      If $a_{im}= a_{jm}$, there is no such $d_m$, because $\frac{\vec
        b_p^{(\vec d')}}{\vec b_q^{(\vec
          d')}}\neq 1$. If $a_{im}\neq
      a_{jm}$, there is at most one such $d_m$. Overall, for all
      distinct $i,j\in[\ell]$ there is at most one $d_m\ge 1$ such
      that $\vec a_i^{(\vec d)}=\vec a_{j}^{(\vec d)}$. Thus by the
      pigeonhole principle, there is a
      $d_m\le\binom{\ell}{2}+1\le\ell^2$ such that $\vec a_i^{(\vec
        d)}\neq\vec a_{j}^{(\vec d)}$ for all distinct $i,j$.
  \end{proof}

  Let $D_1,\ldots,D_m\in\CD$ such that the roots $r_i:=r^{D_i}$ all
  have the same colour, that is, $\col^{D_i}(r_i)=\col^{D_j}(r_j)=:c$
  for all $i,j\in[m]$. We say that $D_1,\ldots,D_m$ are
  \alert{compatible}. The \alert{rooted sum} of $D_1,\ldots,D_m$ is
  the pair $D=(F,T)$ where $F$ is the graph obtained from the disjoint
  union of the $F^{D_i}$ by identifying the roots $r_1,\ldots,r_m$ and
  $T$ is the tree obtained from the disjoint union of the trees
  $T^{D_1},\ldots,T^{D_m}$ by identifying their roots.  We write
  $ \alert{D=\bigoplus_{i=1}^m D_i} $ to express that $D$ is the
  rooted sum of the $D_i$.  For $d\ge 1$, we write
  \alert{$D=d\odot D'$} to express that $D$ is the rooted sum of $d$
  disjoint copies of $D'$. We combine these notations, writing
\[
\alert{D=\bigoplus_{i=1}^md_i\odot D_i}
\]
to express that $D$
is the rooted sum of $d_i$ disjoint copies of $D_i$ for each
$i\in[m]$.
For 
every set $\CF\subseteq\CD$ we let \alert{$\CF^{\oplus}$} denote the set of 
all rooted sums of elements of $\CF$. 

Recall that for a $D\in\CD$, a graph $G$, and vertices $u\in V(D),v\in
V(G)$, by $\pphom\big(D,G;u\mapsto v\big)$ we denote the number of
past-preserving homomorphisms $h:D\to G$ with $h(u)=v$.
 Observe
that if $D=\bigoplus_{i=1}^m D_i$ for $D_i\in\CD_k$,
then $D\in\CD_k$ and for all graphs $G$ and vertices $v\in V(G)$ we have
\begin{equation}
  \label{eq:2}
  \pphom(D,G;r^D\mapsto v)=\prod_{i=1}^m \pphom(D_i,G;r^{D_i}\mapsto v). 
\end{equation}

\begin{lemma}\label{lem:5}
  Let $\CF\subseteq\CD$. Let $G,G'$ be graphs
  such that $|G|=|G'|$ and 
  \begin{equation}
    \label{eq:3}
    \pphom\big(D,G\big)=\pphom\big(D,G'\big)
  \end{equation}
  for all $D\in\CF^\oplus$. Then there is a bijection $f:V(G)\to V(G')$ such that 
  \[
    \pphom\big(D,G;r^D\mapsto v\big)=\pphom\big(D,G'; r^D\mapsto f(v)\big)
  \]
  for all $D\in\CF$.
\end{lemma}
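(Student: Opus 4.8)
The plan is to reconstruct, for every vertex, its ``rooted profile'' of past-preserving homomorphism counts, and to show these profiles occur with equal multiplicities on $G$ and on $G'$. For $v\in V(G)$ put $\alpha_v\colon\CF\to\Nat$, $\alpha_v(D):=\pphom(D,G;r^D\mapsto v)$, and define $\alpha'_{v'}$ analogously for $v'\in V(G')$. Note that, since a past-preserving homomorphism is in particular colour-preserving, $\alpha_v(D)=0$ unless $\col^G(v)=\col^D(r^D)$. A bijection $f$ as required is precisely one with $\alpha_v=\alpha'_{f(v)}$ for all $v$; equivalently, $f$ exists iff the multisets $\{\alpha_v\mid v\in V(G)\}$ and $\{\alpha'_{v'}\mid v'\in V(G')\}$ coincide. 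Since $V(G)$ and $V(G')$ are finite, only finitely many functions occur among the $\alpha_v$ and $\alpha'_{v'}$, so I can fix a finite $\CF_0\subseteq\CF$ on which all of them are already pairwise distinct. As $\CF_0^\oplus\subseteq\CF^\oplus$, hypothesis \eqref{eq:3} still holds over $\CF_0^\oplus$, and because the restriction map to $\CF_0$ is injective on the finite set of occurring profiles, any bijection $f$ matching the counts $\pphom(D,\cdot\,;r^D\mapsto\cdot)$ for all $D\in\CF_0$ automatically matches them for all $D\in\CF$. Hence I may assume from now on that $\CF$ itself is finite.

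Next I would partition $\CF$ by the colour of the root, $\CF=\bigsqcup_c\CF^c$, and for $v\in V(G)$ with $\col^G(v)=c$ record the vector $\vec a_v:=\big(\pphom(D,G;r^D\mapsto v)\big)_{D\in\CF^c}\in\Nat^{\CF^c}$ (and $\vec a'_{v'}$ likewise). For a nonempty $I\subseteq\CF^c$ and exponents $(d_D)_{D\in I}$ with $d_D\ge 1$, the rooted sum $\bigoplus_{D\in I}d_D\odot D$ is defined (its summands share the root colour $c$), lies in $\CF^\oplus$, and by \eqref{eq:2} together with summing over the image of the root satisfies
\[
\pphom\Big(\bigoplus_{D\in I}d_D\odot D,\,G\Big)=\sum_{\substack{v\in V(G):\ \col^G(v)=c,\\ \pphom(D,G;r^D\mapsto v)>0\ \text{for all }D\in I}}\ \prod_{D\in I}\pphom(D,G;r^D\mapsto v)^{d_D},
\]
the remaining terms dropping out because every $d_D\ge 1$; the same holds for $G'$, and the two values are equal by \eqref{eq:3}.

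The core step is to recover, for each colour $c$, the multiset $\{\vec a_v\}$ from these power sums. Fix $c$ and a nonempty $I\subseteq\CF^c$, and let $\mathcal V$ be the finite set of all restrictions $\vec a_v|_I$ and $\vec a'_{v'}|_I$ taken over vertices of either graph whose $I$-coordinates are \emph{all} positive; these vectors have strictly positive entries, so Lemma~\ref{lem:4} supplies exponents $(d_D)_{D\in I}$ for which the values $\prod_{D\in I}b_D^{d_D}$ are pairwise distinct as $\vec b$ ranges over $\mathcal V$. Feeding the exponents $(e\,d_D)_{D\in I}$, $e=1,2,\dots$, into the identity above turns both sides into ordinary power sums of at most $|V(G)|=|V(G')|$ positive reals; these agree for $e=1,\dots,|V(G)|$, so by Newton's identities the two multisets of reals agree, and by the distinctness just arranged this lifts to equality of the multisets $\{\vec a_v|_I\}$ and $\{\vec a'_{v'}|_I\}$ over vertices with all $I$-coordinates positive. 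A downward induction on $|S|$ over nonempty $S\subseteq\CF^c$ (inclusion--exclusion: the multiset for $I=S$ is the disjoint union, over $S'\supseteq S$, of the $S$-restrictions of the multisets attached to support exactly $S'$) then shows that for every nonempty $S$ the multiset of $\vec a_v$ over vertices of colour $c$ with $\supp(\vec a_v)=S$ equals the corresponding multiset for $G'$; in particular these two vertex sets have the same size. Choosing, for each $c$ and each nonempty $S\subseteq\CF^c$, an arbitrary bijection between them that matches $\vec a_v$ with $\vec a'_{f(v)}$ handles every vertex with $\vec a_v\neq\vec 0$; since $\vec a_v$ together with its colour determines $\alpha_v$, this matches $\alpha_v$ with $\alpha'_{f(v)}$. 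The remaining vertices, carrying the all-zero profile, are equinumerous on the two sides because $|G|=|G'|$, so they can be paired arbitrarily, and the union of all these bijections is the desired $f$.

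The main obstacle is precisely that a count $\pphom(D,G;r^D\mapsto v)$ may vanish, whereas Lemma~\ref{lem:4} needs strictly positive entries, so it cannot be applied to the profile vectors as a whole. Stratifying the vertices by the support of their profile and reconstructing the strata from largest support to smallest --- applying Lemma~\ref{lem:4} only to genuinely positive vectors at each stage --- is what rescues the argument, and it is also the only point where the hypothesis $|G|=|G'|$ enters, namely to balance the vertices carrying the all-zero profile.
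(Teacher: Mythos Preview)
Your proof is correct and follows essentially the same route as the paper's. Both arguments reduce to a finite $\CF$, stratify vertices by the support of their profile vector, invoke Lemma~\ref{lem:4} on the strictly positive restricted vectors, and peel off strata from largest support downward, using $|G|=|G'|$ only for the empty-support stratum. The only cosmetic differences are that the paper packages the downward induction as a proof by contradiction (choose a class of maximal support where the counts disagree), and that where you invoke Newton's identities to recover the multiset of values $\prod_D a_{v,D}^{d_D}$ from their power sums, the paper instead forms the Vandermonde matrix $\big((\vec a_i^{(\vec d)})^j\big)_{i,j}$ and uses its invertibility directly; these are interchangeable devices for the same conclusion.
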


\begin{proof}
  Let $n:=|G|=|G'|$.
  Without loss of generality, we assume that $V(G)\cap V(G')=\emptyset$.
    We define an equivalence relation $\sim$ on $V(G)\cup V(G')$ as follows:
  for $X,Y\in\{G,G'\}$ and $x\in V(X)$, $y\in V(Y)$, we let $x\sim y$
  if and only if for all 
  $D\in\CF$,
  \begin{equation}
    \label{eq:20}
    \pphom\big(D,X;r^D\mapsto x\big)=\pphom\big(D,Y; r^D\mapsto y\big).
  \end{equation}
  Let $K_1,\ldots, K_\ell$ be the $\sim$-equivalence
    classes.  For every $i\in [\ell]$, let $p_i:=|K_i\cap V(G)|$ and
    $p'_i:=|K_i\cap V(G')|$, and let $\vec p=(p_1,\ldots,p_\ell)$ and
    $\vec p'=(p'_1,\ldots,p'_\ell)$.

  The assertion of the lemma is an immediate consequence of the
    following claim.

    \begin{claim}
      \[
        \vec p=\vec p'.
      \]

      \proof Without loss of generality, we assume that $\CF$ is
      finite. If it is not, for all distinct $i,j\in[\ell]$ we pick a
      $D_{ij}\in\CF$ such that for $x\in K_i,y\in K_j$ and
      $X,Y\in\{G,G'\}$ with $x\in V(X)$, $y\in V(Y)$ it holds that
      \[
      \pphom\big(D_{ij},X;r^{D_{ij}}\mapsto x\big)\neq\pphom\big(D_{ij},Y; r^{D_{ij}}\mapsto y\big),
      \]
      and we restrict our attention to the finite class of all these
      $D_{ij}$ without changing the equivalence relation $\sim$. 

      Say,
      $\CF=\{D_1,\ldots,D_m\}$, and for every $i\in[m]$,
      let $r_i:=r^{D_i}$. Then for all $X,Y\in\{G,G'\}$ and $x\in
      V(X),y\in V(Y)$ we have $x\sim y$ if and only if
      \[
        \forall j\in[m]:  \pphom(D_j,X;r_j\mapsto x)=\pphom(D_j,Y; r_j\mapsto y).
    \]
    For $i\in[\ell]$, let $a_{ij}:=\pphom(D_j,X;r_j\mapsto
    x)$ for all $X\in\{G,G'\}$, $x\in K_i\cap V(X)$. Let $\vec a_i:=(a_{i1},\ldots,a_{im})$ and $\vec
    a^j=(a_{1j},\ldots,a_{\ell j})^T$. Thus the $\vec a_i$ are the
    rows and the $\vec a^j$ the columns of the $(\ell\times m)$-matrix
    with entries $a_{ij}$. Observe that the rows $\vec a_i$ are
    mutually distinct.
    
    For every $j\in[m]$ we have
    \begin{align*}
      \pphom(D_j,G)&=\sum_{v\in V(G)}\pphom(D_j,G;r_j\mapsto
                          v)\\
                  &=\sum_{i=1}^\ell p_i a_{ij}\\
                  &=\angles{\vec p,\vec a^j}.
                    \intertext{and similarly}
                    \pphom(D_j,G')&=\angles{\vec p',\vec a^j}.
    \end{align*}
    Here $\angles{\vec x,\vec y}=\sum_{i}x_iy_i$ denotes the standard
    inner product of vectors $\vec x,\vec y$.

    Let us
     call a set 
     $J\subseteq [m]$ \alert{compatible} if all $D_j$ for $j\in J$ are
     compatible (that is, their roots have the same colour). The \alert{support} of a vector $\vec d=(d_1,\ldots,d_m)$
     is the set $\supp(\vec d):=\{i\in[m]\mid d_i\neq 0\}$, and we
     call $\vec d$ \alert{compatible} if its support is compatible. For
     every
     compatible nonzero vector
     $\vec d=(d_1,\ldots,d_m)\in\Nat^m$ we let
     \[
     D^{(\vec d)}:=\bigoplus_{j\in\supp(\vec d)}d_j\odot D_j.
     \]
     Then $D^{(\vec d)}\in\CF^{\oplus}$. We denote the root of
     $D^{(\vec d)}$ by $r^{(\vec d)}$.
    By \eqref{eq:2}, for every $X\in\{G,G'\}$ and $x\in V(X)$ we have
     \[
       \pphom\big(D^{(\vec d)},X;r^{(\vec d)}\mapsto
       x\big)=\prod_{j=1}^m\pphom(D_j,X;r_j\mapsto
       x)^{d_j}.
     \]
     Thus for every $i\in[\ell]$, $X\in\{G,H\}$, and  $x\in K_i\cap V(X)$ we have
     \begin{equation}
       \label{eq:4}
       \pphom\big(D^{(\vec d)},X;r^{(\vec d)}\mapsto x\big)=\prod_{j=1}^m
     a_{ij}^{d_j}=:\vec a_i^{(\vec d)}.
   \end{equation}
     Let
     $\vec a^{(\vec d)}=(\vec a_{1}^{(d)},\ldots,\vec a_{\ell }^{(d)})^T$. Note that
     with this notation, $\vec a^j=\vec a^{(\vec e_j)}$, where $\vec
     e_j$ denotes the $j$th unit vector.
Then
     \begin{align*}
       \pphom(D^{(\vec d)},G)&=\sum_{v\in V(G)}\pphom(D^{(\vec d)},G;r^{(d)}\mapsto
                          v)\\
                  &=\sum_{i=1}^\ell p_i \vec a_{i}^{(d)}\\
                  &=\angles{\vec p,\vec a^{(d)}}.
                    \intertext{and similarly}
       \pphom(D^{(\vec d)},G')&=\angles{\vec p',\vec a^{(\vec d)}}.
     \end{align*}
     For all $j$, let
     $b^{(\vec d)}:=\pphom(D^{(\vec d)},G)$. Since $D^{(\vec d)}\in\CF^{\oplus}$, by \eqref{eq:3}, we have
     $b^{(\vec d)}=\pphom(D^{(\vec d)},G')$. Then
     \begin{equation}
       \label{eq:5}
              b^{(\vec d)}=\angles{\vec p,\vec a^{(\vec d)}}=\angles{\vec p',\vec a^{(\vec d)}}. 
     \end{equation}
        Since $0^0=1$ and $0^d=0$ for $d\ge 1$, for every $i\in[\ell]$
        and every $\vec d\in\Nat^m$ we have
     \begin{equation}
       \label{eq:6}
       \vec a_i^{(\vec d)}\neq0\iff \supp(\vec d)\subseteq\supp(\vec a_i).
     \end{equation}
     Suppose for contradiction that $\vec p\neq \vec p'$. Choose
     $i_0\in[\ell]$ such that $p_{i_0}\neq
     p'_{i_0}$ and, subject to this condition, $S:=\supp(\vec a_{i_0})$ is
     inclusionwise maximal. 

     Suppose first that $S=\emptyset$. Then $\vec a_{i_0}=\vec 0$, and
     as the $\vec a_i$ are mutually distinct, $\vec a_i\neq\vec 0$ and
     therefore $\supp(\vec a_i)\neq\emptyset$ for $i\neq i_0$. By the maximality of
     $S$, this implies $p_{i}=p'_{i}$ for all $i\neq i_0$. Hence
     $p_{i_0}=n-\sum_{i\neq i_0}p_{i}=n-\sum_{i\neq i_0}p'_{i}=p'_{i_0}$, which is
     a contradiction. It follows that $S\neq\emptyset$.

     Let $I=\{i\in[\ell]\mid\supp(\vec a_i)=S\}$. For every $i\in I$, let
     $\hat{\vec a}_i:=(a_{ij}\mid j\in S)$. Then the vectors
     $\hat{\vec a}_i$ have only positive entries, and they are
     mutually distinct, because the $\vec a_i$ are mutually
     distinct. By Lemma~\ref{lem:4}, there is a vector $\hat{\vec
       d}=(\hat d_j\mid j\in S)$ such that the numbers $\hat{\vec a}_i^{(\hat{\vec d})}$
     for $i\in I$ are mutually distinct. Let $\vec
     d=(d_1,\ldots,d_m)$ with $d_j=\hat d_j$ for $j\in S$ and $d_j=0$
     otherwise. Then the numbers $\vec a_i^{(\vec d)}$
     for $i\in I$ are mutually distinct.

     Observe that for every $j\in \Nat$ we
     have $\vec a_i^{(j\vec d)}=\big(\vec a_i^{(\vec d)}\big)^j$, where
     $j\vec d=(jd_1,\ldots,jd_m)$. Let $A$ be the
     $|I|\times|I|$-matrix with entries
     $A_{ij}:=\vec a_i^{(j\vec d)}$ (for convenience, we take row
     indices from the set $I$ and column indices from
     $\{1,\ldots,|I|\}$). $A$ is a Vandermonde matrix and thus
     invertible. 

     Let $\vec p_I:=(p_i\mid i\in I)$ and
     $\vec p'_I:=(p'_i\mid i\in I)$ be the restrictions of $\vec p$ and
     $\vec p'$ to $I$. For every $j$, the $j$th entry of
     $\vec p_I\cdot A$ is
     \begin{align*}
       \sum_{i\in I}p_i\vec a_i^{(j\vec d)}&=\sum_{i=1}^\ell p_i
                                             \vec a_i^{(j\vec
                                             d)}-\hspace{-2mm}\sum_{\substack{i\in[\ell]\\
                                               S\not\subseteq
                                               \supp(\vec a_i)}} p_i
                                             \vec a_i^{(j\vec
                                             d)}-\hspace{-2mm}\sum_{\substack{i\in[\ell]\\
                                               S\subset
                                               \supp(\vec a_i)}} p_i
                                             \vec a_i^{(j\vec
                                             d)}\\
&=\angles{\vec p,\vec a^{(j\vec d)}}-\sum_{\substack{i\in[\ell]\\
                                               S\subset
                                               \supp(\vec a_i)}} p_i
                                             \vec a_i^{(j\vec
                                             d)},
     \end{align*}
     because by \eqref{eq:6} we have $a_i^{(j\vec d)}=0$ if
     $S=\supp(j\vec d)\not\subseteq \supp(\vec a_i)$. Similarly,
     the $j$th
     entry of $\vec p'_I\cdot A$ is
     \[
       \sum_{i\in I}p'_i\vec a_i^{(j\vec d)}=\angles{\vec p',\vec a^{(j\vec d)}}-\sum_{\substack{i\in[\ell]\\
           S\subset
           \supp(\vec a_i)}} p'_i
       a_i^{(j\vec
         d)}.
     \]
     By the maximality of $S$, for all $i$ with $S\subset\supp(\vec
     a_i)$ we have  $p_i=p'_i$. Thus 
     \[
       \sum_{\substack{i\in[\ell]\\
           S\subset
           \supp(\vec a_i)}} p_i
       \vec a_i^{(j\vec
         d)}=\sum_{\substack{i\in[\ell]\\
           S\subset
           \supp(\vec a_i)}} p'_i
       \vec a_i^{(j\vec
         d)},
     \]
     and therefore, by \eqref{eq:5},
     \[
       \sum_{i\in I}p_i\vec a_i^{(j\vec d)}=\sum_{i\in I}p'_i\vec
       a_i^{(j\vec d)},
     \]
     Since this holds for all $j$, we have $\vec p_I\cdot A=\vec
     p'_I\cdot A$. As
     $A$ is invertible, it follows that $\vec p_I=\vec p'_I$ and, in
     particular, $p_{i_0}=p'_{i_0}$. This is a contradiction.
     \qedhere
    \end{claim}
\end{proof}

\begin{remark}\label{rem:numclasses}
  The proof of the lemma actually shows that \eqref{eq:3} only needs
  to be satisfied for rooted sums of at most $m\ell^2$ graphs from $\CF$,
  where $m:=|\CF|$, and $\ell$ is the number of equivalence classes of
  the relation $\sim$. (Recall the definition of $\sim$ from
  \eqref{eq:20}.) Note that we always have $\ell\le n=|G|$.
\end{remark}

\begin{corollary}\label{cor:21}
  Let $G,G'$ be graphs such that for all $D\in\CD_k$,
  \begin{equation}
    \label{eq:7}
    \pphom\big(D,G\big)=\pphom\big(D,G'\big). 
  \end{equation}
  Then there is a bijection $f:V(G)\to V(G')$ such that for all $D\in\CD_k$,
  \[
    \pphom\big(D,G;r^D\mapsto v\big)=\pphom\big(D,G'; r^D\mapsto f(v)\big).
  \]
\end{corollary}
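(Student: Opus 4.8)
The plan is to obtain this as an immediate application of Lemma~\ref{lem:5} with $\CF:=\CD_k$. The conclusion of that lemma for this choice of $\CF$ is verbatim the conclusion of the corollary, so all that needs to be done is to check the two hypotheses of Lemma~\ref{lem:5}: that $|G|=|G'|$, and that $\pphom(D,G)=\pphom(D,G')$ holds for every $D$ in the rooted-sum closure $(\CD_k)^{\oplus}$, not just for every $D\in\CD_k$.

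The second hypothesis reduces to the first via the observation recorded just before Lemma~\ref{lem:5}, namely that $(\CD_k)^{\oplus}\subseteq\CD_k$: if $D=\bigoplus_{i=1}^m D_i$ with each $D_i\in\CD_k$, then $T^D$ arises from $T^{D_1},\dots,T^{D_m}$ by identifying their roots, so every chain of $T^D$ is contained in a single $T^{D_i}$ and hence has length at most $k$; thus $D\in\CD_k$. Consequently the hypothesis \eqref{eq:7} of the corollary already yields $\pphom(D,G)=\pphom(D,G')$ for all $D\in(\CD_k)^{\oplus}$. For the first hypothesis, $|G|=|G'|$, I would use the one-vertex members of $\CD_k$ (here one uses $k\ge 1$, so that height~$1$ is allowed): for a colour $c$, let $D_c\in\CD_k$ be the pair consisting of the one-vertex graph of colour $c$ together with its one-vertex elimination tree. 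Then $\pphom(D_c,G)$ is exactly the number of vertices of $G$ of colour $c$, and likewise for $G'$, so \eqref{eq:7} forces these two counts to agree for every colour~$c$. Since only finitely many colours occur in the finite graphs $G$ and $G'$, summing over them gives $|V(G)|=|V(G')|$.

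With both hypotheses established, Lemma~\ref{lem:5} applied to $\CF=\CD_k$ produces a bijection $f\colon V(G)\to V(G')$ with $\pphom(D,G;r^D\mapsto v)=\pphom(D,G';r^D\mapsto f(v))$ for all $D\in\CD_k$, which is exactly the assertion. There is no genuine obstacle here beyond this bookkeeping; the substantive work — the interpolation argument turning equality of the "aggregate'' counts $\pphom(D,\cdot)$ over rooted sums into a root-respecting bijection — has already been carried out in Lemma~\ref{lem:5} (and Lemma~\ref{lem:4}).
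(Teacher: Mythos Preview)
Your proposal is correct and matches the paper's own argument essentially verbatim: the paper also derives the corollary immediately from Lemma~\ref{lem:5} after noting that $\CD_k^{\oplus}=\CD_k$ and that \eqref{eq:7} forces $|G|=|G'|$. Your write-up simply spells out the latter point (via the one-vertex members $D_c$) in more detail than the paper does.
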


\begin{proof}
  This follows immediately from Lemma~\ref{lem:5}, noting that
  $\CD_k^\oplus=\CD_k$ and that \eqref{eq:7} for all $D\in\CD_k$ implies that $|G|=|H|$.
\end{proof}

% We let $\sTD k$ be the set of all disjoint unions of graphs in
% $\TD_k$, that is, graphs $F=F_1\uplus\ldots\uplus F_m$ where
% $F_i\in\TD_k$ for all $i$. A \alert{depth-$k$ decomposition} of a graph
% $F=F_1\uplus\ldots\uplus F_m\in\sTD k$ is a forest
% $S=T_1\uplus\ldots\uplus T_m$ such that $V(S)=V(F)$ and for all
% $i\in[m]$ the tree $T_i$ is a
% depth-$k$ decomposition of $F_i$. We let $\sDTD k$ be the set of all
% pairs $(F,S)$, where $F\in\sTD k$ and $S$ is a depth-$k$ decomposition
% of $F$. 

% Note the correspondence between $\sDTD k$ and $\CD_{k+1}$. If
% $(F,T)\in\CD_{k+1}$ then
% $(F\setminus\{r^T\},T\setminus\{r^T\})\in\sDTD k$. Conversely, suppose that
% $(F_1\uplus\ldots\uplus F_m, T_1\uplus\ldots\uplus T_m)\in\sDTD
% k$. Let $F$ be any graph obtained from $F_1\uplus\ldots\uplus F_m$ by
% adding a vertex $r$ and edges from $r$ to arbitrary vertices in
% $V(F_1)\cup\ldots,V(F_m)$. Let $T$ be the tree obtained from
% $T_1\uplus\ldots\uplus T_m$ by adding $r$ as the unique minimal
% element. Then $(F,T)\in\TD_{k+1}$.

% A \alert{past-preserving homomorphism} from $(F,S)\in\sDTD k$ to a graph
% $G$ is a homomorphism $h:F\to G$ such that for all $u,v\in V(F)=V(S)$ with
% $u\dagsle^S v$ we have $h(u)\neq h(v)$.

%%%%%%%%%%%%%%%%%%%%%%%%%%%%%%%%%%%%%%%%%%%%%%%%%%%%%%%%%%%%
\subsection{Proof of the Main Theorem}
\label{sec:main-proof}
%%%%%%%%%%%%%%%%%%%%%%%%%%%%%%%%%%%%%%%%%%%%%%%%%%%%%%%%%%%%

For the inductive proof, the following construction is useful.
Let $G$ be a graph and $v\in V(G)$. We let \alert{$G\wr v$} be the graph with
vertex set $V(G\wr v):=V(G)\setminus\{v\}$, edge set $E(G\wr
v):=\{ww'\in E(G)\mid w,w'\in V(G)\setminus \{v\}\}$, and colouring
defined by
\[
\col^{G\wr v}(w):=
\begin{cases}
  \big(\col^G(w),1\big)&\text{if }vw\in E(G),\\
  \big(\col^G(w),0\big)&\text{if }vw\not\in E(G).
\end{cases}
\]

\begin{lemma}\label{lem:wr}
  Let $k\ge 0$, and let $G,G'$ be graphs and $v\in
  V(G),v'\in V(G')$ such that 
  $|G|=|G'|\ge 2$ and $\col^G(v)=\col^{G'}(v')$. Then the following are equivalent.
  \begin{eroman}
  \item Duplicator has a winning strategy for the $k$-round bijective
    pebble game on $G,G'$ with initial position $(v,v')$.
  \item Duplicator has a winning strategy for the $k$-round bijective
    pebble game on $G\wr v$, $G'\wr v'$.
  \end{eroman}
\end{lemma}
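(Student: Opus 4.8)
The plan is to prove both implications by transferring winning strategies for Duplicator between the two games. First observe that $|G\wr v|=|G|-1=|G'|-1=|G'\wr v'|\ge 1$, so the game on $G\wr v,G'\wr v'$ is well-defined. For $k=0$ statement (ii) holds vacuously, while statement (i) amounts to $\col^G(v)=\col^{G'}(v')$, which holds by hypothesis, so we may assume $k\ge 1$. I will use two elementary observations about the bijective pebble game. \emph{(a)}~If Duplicator has a winning strategy for the $k$-round game from some initial position, then following it for only $t\le k$ rounds wins the $t$-round game: every position reached in a play consistent with the strategy is a sub-tuple of a final position, which is a local isomorphism, and a sub-tuple of a local isomorphism is again a local isomorphism. \emph{(b)}~In any winning strategy for Duplicator in the $k$-round game on $G,G'$ with initial position $(v,v')$, every bijection prescribed at a position reached in a play consistent with the strategy maps $v$ to $v'$; otherwise Spoiler could respond by picking $v$, forcing Duplicator to produce some $v''\neq v'$, so that the new position contains both the pairs $(v,v')$ and $(v,v'')$; this violates the condition ``$v_i=v_j\iff v_i'=v_j'$'' for a pair of pebbles that is never moved again, and so the play is lost.

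For (i)$\Rightarrow$(ii), fix a winning strategy $S$ for Duplicator in the $k$-round game on $G,G'$ from $(v,v')$; by (b) it only uses bijections $f$ with $f(v)=v'$, and each such $f$ restricts to a bijection from $V(G\wr v)$ to $V(G'\wr v')$. Let Duplicator play these restrictions in the game on $G\wr v,G'\wr v'$, consulting $S$ on the play of $G,G'$ obtained by prepending the pair $(v,v')$ to each position. After $k$ rounds the position in $G\wr v,G'\wr v'$ is some $\big((w_1,\dots,w_k),(w_1',\dots,w_k')\big)$, and $\big((v,w_1,\dots,w_k),(v',w_1',\dots,w_k')\big)$ is a play consistent with $S$, hence a local isomorphism from $G$ to $G'$. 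Dropping the first coordinate, $w_i\mapsto w_i'$ is a local isomorphism from $G\wr v$ to $G'\wr v'$: equality and, since all $w_i$ differ from $v$, adjacency are inherited directly from $G,G'$, and the colour condition holds because the colour of $w_i$ in $G\wr v$ is $\col^G(w_i)$ together with the information whether $vw_i\in E(G)$, and both of these agree with the corresponding data for $w_i'$ by, respectively, the colour condition for $w_i$ and the adjacency condition for the pair $(v,w_i)$ in the local isomorphism for $G,G'$.

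For (ii)$\Rightarrow$(i), fix a winning strategy $S'$ for Duplicator in the $k$-round game on $G\wr v,G'\wr v'$, and extend each of its bijections $g$ to a bijection $\bar g\colon V(G)\to V(G')$ by setting $\bar g(v):=v'$. In the $k$-round game on $G,G'$ from $(v,v')$, Duplicator plays $\bar g$, where $g$ is $S'$'s answer on the play of $G\wr v,G'\wr v'$ recorded so far by those moves of Spoiler that are different from $v$. A move of Spoiler equal to $v$ is answered by $v'$ and leaves the recorded play untouched; a move $w\neq v$ is exactly its next round. After the $k$ rounds, say $t\le k$ of which differ from $v$, the recorded $t$-round position $\big((w_1,\dots,w_t),(w_1',\dots,w_t')\big)$ is a local isomorphism from $G\wr v$ to $G'\wr v'$ by observation (a). One then checks that the $k$-round position in $G,G'$ is a local isomorphism: pairs of indices both corresponding to moves $\neq v$ transfer back as in the previous paragraph (reading the extra colour coordinate of $G\wr v$ as adjacency to $v$), while every other pair involves only the vertices $v$ and $v'$, for which the three conditions reduce to $\col^G(v)=\col^{G'}(v')$ and the absence of loops. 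Hence Duplicator wins on $G,G'$ from $(v,v')$, which is (i).

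The step that needs care is the bookkeeping in (ii)$\Rightarrow$(i): Spoiler's moves that re-pick the already-pebbled vertex $v$ have no counterpart in the game on $G\wr v,G'\wr v'$, so one must verify that they leave that game untouched, that they cannot spoil the final local isomorphism, and that the number $t$ of genuine moves satisfies $t\le k$ so that observation (a) applies. Everything else is a routine check that the additional colour coordinate of $G\wr v$ faithfully records adjacency to $v$.
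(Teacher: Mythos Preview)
Your proof is correct and is precisely the detailed argument that the paper omits (the paper's entire proof reads ``Straightforward.''). The two observations you isolate---that Duplicator's winning positions are closed under taking prefixes, and that any winning bijection must send $v$ to $v'$---together with the bookkeeping for Spoiler's repeated picks of $v$ in the direction (ii)$\Rightarrow$(i), are exactly the points one has to check, and you do so cleanly.
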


\begin{proof}
  Straightforward.
\end{proof}

The next lemma is the last significant step of the proof of our
main theorem. After that, we only need to pull things together to
complete the proof.

\begin{lemma}\label{lem:6}
  Let $k\ge 1$, and let $G,G'$ be graphs of the same order. Then
  the following are equivalent.
  \begin{eroman}
    \item For all $D\in\CD_k$, 
      \[
    \pphom(D,G)=\pphom(D,G').
    \]
  \item Duplicator has a winning strategy for the $k$-round bijective
  pebble game on $G,G'$.
  \end{eroman}
\end{lemma}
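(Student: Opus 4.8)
The plan is to argue by induction on $k$. The cases $k=0$ (where $\CD_0=\emptyset$ and the $0$-round game is trivially won by Duplicator) and $\lvert G\rvert=\lvert G'\rvert=1$ (where both conditions reduce to ``$G$ and $G'$ have the same colour'') are immediate, so the work lies in the inductive step $k\to k+1$ under the assumption $\lvert G\rvert=\lvert G'\rvert\ge 2$, which lets us invoke Lemma~\ref{lem:wr}. Two structural observations drive the argument. First, using Lemma~\ref{lem:ind-td}, every $D\in\CD_{k+1}$ is either a single coloured vertex or a rooted sum $D=\bigoplus_{i=1}^m D_i$, where each $D_i\in\CD_{k+1}$ has a root with a single child; by \eqref{eq:2} then $\pphom(D,G;r^D\mapsto v)=\prod_i\pphom(D_i,G;r^D\mapsto v)$. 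Second, and crucially, for a $D_i$ with root $r$, unique child $t$, and subtree $T_i$ of height at most $k$ hanging off $t$, one defines $D_i'\in\CD_k$ by taking the substructure of $D_i$ on $V(T_i)$ and recolouring each $u\in V(T_i)$ by the pair $\big(\col^{D_i}(u),[ru\in E(D_i)]\big)$; then a past-preserving homomorphism $D_i\to G$ sending $r$ to $v$ restricts to a past-preserving homomorphism $D_i'\to G\wr v$, and this is a bijection, so $\pphom(D_i,G;r\mapsto v)=\pphom(D_i',G\wr v)$. Conversely, any $D''\in\CD_k$ can be ``hung'' below a fresh root coloured $\col^G(v)$ with edges from that root prescribed by the second colour coordinate, yielding $D\in\CD_{k+1}$ with $\pphom(D,G;r\mapsto v)=\pphom(D'',G\wr v)$ (and similarly with $G'$, $f(v)$ once we know $f$ is colour-preserving).

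For (ii)$\implies$(i), a Duplicator winning strategy for the $(k+1)$-round game provides a first-round bijection $f\colon V(G)\to V(G')$ such that, for every $v$, Duplicator wins the $k$-round game from position $(v,f(v))$; by Lemma~\ref{lem:wr} this is the same as winning the $k$-round game on $G\wr v$ and $G'\wr f(v)$, and the induction hypothesis then gives $\pphom(D'',G\wr v)=\pphom(D'',G'\wr f(v))$ for all $D''\in\CD_k$. Combining this with the product formula and the identity $\pphom(D_i,G;r\mapsto v)=\pphom(D_i',G\wr v)$, summing over $v\in V(G)$, and re-indexing the sum on the $G'$ side via the bijection $f$, we obtain $\pphom(D,G)=\pphom(D,G')$ for every $D\in\CD_{k+1}$ (the single-vertex case of $D$ being handled by the $1$-round game, which Duplicator also wins).

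For (i)$\implies$(ii), apply Corollary~\ref{cor:21} at level $k+1$ to obtain a bijection $f\colon V(G)\to V(G')$ with $\pphom(D,G;r^D\mapsto v)=\pphom(D,G';r^D\mapsto f(v))$ for all $D\in\CD_{k+1}$; testing this on single-vertex $D$ shows $f$ is colour-preserving, so $\col^{G'}(f(v))=\col^G(v)$. Let $f$ be Duplicator's first move. After Spoiler plays $v$, we must show Duplicator wins the $k$-round game from $(v,f(v))$; by Lemma~\ref{lem:wr} and the induction hypothesis it suffices to show $\pphom(D'',G\wr v)=\pphom(D'',G'\wr f(v))$ for all $D''\in\CD_k$. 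Given such a $D''$ (we may assume all its colours are pairs, else both sides vanish), the $\CD_{k+1}$-witness $D$ built from $D''$ as above gives
\[
\pphom(D'',G\wr v)=\pphom(D,G;r\mapsto v)=\pphom(D,G';r\mapsto f(v))=\pphom(D'',G'\wr f(v)),
\]
as required. Since this holds for every Spoiler move $v$, Duplicator wins the $(k+1)$-round game.

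The main obstacle is establishing, in both directions, the bijective correspondence between past-preserving homomorphisms into $G$ with a prescribed root image $v$ and past-preserving homomorphisms into $G\wr v$: one must verify that recording adjacency to $v$ in the colour of every remaining vertex exactly encodes all the past-preservation and past-injectivity constraints that involve the (removed) root, including the edge from the root to its child. This is precisely where it matters that $\CD$ permits $F^D$ to be disconnected, since the root-peeled pieces $D_i'$ and the root-added witnesses $D$ need not be connected even when the original graphs are.
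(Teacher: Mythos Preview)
Your proposal is correct and follows essentially the same approach as the paper's own proof: induction on $k$, using Corollary~\ref{cor:21} in the (i)$\implies$(ii) direction to produce Duplicator's first bijection, Lemma~\ref{lem:wr} to reduce to the $k$-round game on $G\wr v$ and $G'\wr f(v)$, and the bijective correspondence between past-preserving homomorphisms $D\to G$ with prescribed root image $v$ and past-preserving homomorphisms from the root-deleted, recoloured $D'\in\CD_k$ into $G\wr v$. The only cosmetic difference is that you factor the decomposition of $D\in\CD_{k+1}$ through an intermediate rooted-sum-of-single-child-trees step before peeling the root, whereas the paper deletes the root directly and then recolours; the resulting objects and identities are the same. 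One small point worth making explicit in the (ii)$\implies$(i) direction: before invoking Lemma~\ref{lem:wr} you need $\col^G(v)=\col^{G'}(f(v))$, which follows because Duplicator's winning strategy for the $k$-round game from $(v,f(v))$ forces the final position to be a local isomorphism, hence in particular matches the colours of the initial pebbles.
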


\begin{proof}
  We first prove (i)$\implies$(ii). The proof is by induction on $k$. 

  For the base case $k=1$, suppose that $\pphom(D,G)=\pphom(D,G')$
  for all $D\in\CD_1$. By Corollary~\ref{cor:21}, there is a bijection
  $f:V(G)\to V(G')$ such that for all $D\in\CD_1$ and $v\in V(G)$,
  \[
 \pphom\big(D,G;r^D\mapsto v\big)=\pphom\big(D,G'; r^D\mapsto f(v)\big).
 \]
 This implies $\col^G(v)=\col^{G'}(f(v))$. Duplicator picks $f$ in the
 first (and only) round of the game and wins.

 For the inductive step $k\to k+1$, let $G,G'$ be graphs of the same
 order such that $\pphom(D,G)=\pphom(D,G')$
  for all $D\in\CD_{k+1}$. If
  $|G|=|G'|=1$, then $\pphom(D,G)=\pphom(D,G')$ for all $D\in\CD_1$ implies that the
  graphs are isomorphic (their unique vertices have the same colour). Thus
  we may further assume that $|G|=|G'|\ge 2$. 

By  Corollary~\ref{cor:21}, there is a bijection
  $f:V(G)\to V(G')$ such that for all $D\in\CD_{k+1}$ and $v\in V(G)$,
  \begin{equation}
    \label{eq:8}
    \pphom\big(D,G;r^D\mapsto v\big)=\pphom\big(D,G'; r^D\mapsto f(v)\big).
  \end{equation}
 In the first round of the game, Duplicator picks this bijection $f$. Say,
 Spoiler picks $v\in V(G)$. Let $v':=f(v)$. Note that \eqref{eq:8}
 implies $\col^G(v)=\col^{G'}(v')$. Let $H:=G\wr v$ and $H':=G'\wr
 v'$. We need to prove that Duplicator has a winning strategy for the
 remaining $k$-round bijective pebble game on $G,G'$ with initial
 position $(v,v')$. By Lemma~\ref{lem:wr}, it suffices to prove that Duplicator has
 a winning strategy for the $k$-round bijective pebble game on $H,H'$.
 This
 follows immediately from the induction hypothesis and the following
 claim.

 \begin{claim}
   For all $D\in\CD_{k}$, 
   \[
     \pphom(D,H)=\pphom(D,H').
   \]

   \proof
   Let $D=(F,T)\in\CD_{k}$.
   We may assume that all vertices $u\in V(D)$ have a colour of the
   form $(c,i)$ where $i\in\{0,1\}$. Otherwise,
   $\pphom(D,H)=\pphom(D,H')=0$. 

   We define a graph $F^+$ and a tree $T^+$ as follows. We take a
   fresh vertex $r^+$ and let $V(F^+):=V(F)\cup\{r^+\}$, 
   \begin{align*}
     E(F^+):=&E(F)\,\cup\\
     &\{r^+u\mid u\in V(F)\text{ with
     }\col^{F}(u)=(c,1)\text{ for some }c\},
   \end{align*}
   and 
   $\col^{F^+}(r^+):=\col^G(v)=\col^{G'}(v')$ and $\col^{F^+}(u):=c$ for all
   $u\in V(F)$ with $\col^F(u)=(c,i)$ for some $i\in\{0,1\}$. We let
   $V(T^+):=V(T)\cup\{r^+\}$ and 
   \[
     \dagle^{T^+}:=\dagle^T\cup\{(r^+,u)\mid u\in V(T)\}.
   \]
   Then $r^+$ is the root of $T^+$. Then $T^+$ is an
   elimination tree of $F^+$ of height $k+1$. Hence $D^+:=(F^+,T^+)\in\CD_{k+1}$.

      Observe that
   that there is a one-to-one correspondence between the past-preserving
   homomorphisms from $D^+$ to $G$ mapping $r^+$ to $v$ and the
   past-preserving homomorphisms from $D$ to $H$.\footnote{Note that here we
   need the homomorphisms to be past preserving. The proof would break
   down if we worked with arbitrary homomorphisms, because a
   homomorphism  from $D^+$ to $G$ mapping $r^+$ to $v$ could also
   map vertices in $V(D^+)\setminus\{r^+\}=V(D)$ to $v$, and such a
   homomorphism would not correspond to a homomorphism from $D$ to
   $H$.}
 Similarly, there is
   a one-to-one correspondence between the past-preserving
   homomorphisms from $D^+$ to $G'$ mapping $r^+$ to $v'$ and the
   past-preserving homomorphisms from $D$ to $H'$.
   Thus 
   \begin{align*}
     \pphom(D,H)&=\pphom(D^+,G;r^+\mapsto
                         v),\\
     \pphom(D,H')&=\pphom(D^+,G';r^+\mapsto v').
   \end{align*}
   By \eqref{eq:8} applied to $D^+$, we have
   \[
     \pphom(D^+,G;r^+\mapsto v)=\pphom(D^+,G';r^+\mapsto v').
   \]
   Thus
   $\pphom(D,H)=\pphom(D,H')$.
   \uend
 \end{claim}

 The proof of the converse direction (ii)$\implies$(i) is also by
 induction on $k$.

 For the base case $k=1$, assume that Duplicator has a winning strategy for
 the $1$-move bijective pebble game on $G,G'$. Then there is a
 bijection $f:V(G)\to V(G')$ such that $\col^G(v)=\col^{G'}(f(v))$ for
 all $v\in V(G)$,
 which implies that for each colour $c$ the two graphs have the same
 numbers of vertices of colour $c$. This implies that
 $\pphom(D,G)=\pphom(D,G')$ for all $D\in\CD_1$.

 For the inductive step $k\to k+1$, assume that Duplicator has a
 winning strategy for the $(k+1)$-round bijective pebble game on
 $G,G'$. Without loss of generality we may assume that $|G|=|G'|\ge
 2$. Then, by Lemma~\ref{lem:wr}, there is a bijection $f:V(G)\to
 V(G')$ such that for each $v\in V(G)$, $\col^G(v)=\col^{G'}(f(v))$
 and Duplicator has a winning
 strategy for the $k$-round bijective pebble game on $G\wr v,G'\wr
 f(v)$. By the induction hypothesis, this implies
 \begin{equation}
   \label{eq:9}
   \pphom(D,G\wr v)=\pphom(D,G'\wr f(v)) 
 \end{equation}
 for all $D\in\CD_{k}$. 

 Now let $\hat D\in\CD_{k+1}$ and $\hat r:=r^{\hat D}$. By deleting
 $\hat r$ from
 $\hat D$, we obtain a family $D_1,\ldots,D_m\in \CD_{k}$.
 For every $i\in[m]$, let $\hat D_i\in\CD_k$ be obtained from $D_i$ by
 recolouring the vertices as follows: for $u\in V(D_i)$, let
 \[
   \col^{\hat D_i}(u):=
   \begin{cases}
     (\col^{D_i}(u),1)&\text{if }\hat r u\in E(\hat D),\\
     (\col^{D_i}(u),0)&\text{otherwise}.
   \end{cases}
 \]
 The
 crucial observation is that for each $v\in V(G)$ with
 $\col^G(v)=\col^D(\hat r)$ we have
 \begin{align*}
 \pphom(\hat D,G;\hat r\mapsto v)&=\prod_{i=1}^m\pphom(\hat D_i,G\wr v)\\
 \intertext{and similarly}
 \pphom(\hat D,G';\hat r\mapsto f(v))&=\prod_{i=1}^m\pphom(\hat D_i,G'\wr f(v)).
 \end{align*}
 By \eqref{eq:9}, this implies
 \[
   \pphom(\hat D,G;\hat r\mapsto
   v)=\pphom(\hat D,G';\hat r\mapsto f(v)).
 \]
 Thus
 \begin{align*}
 \pphom(\hat D,G)
 &=\sum_{\substack{v\in V(G)\\\col^G(v)=\col^{\hat D}(\hat r)}}\pphom(\hat
 D,G;\hat r\mapsto v)\\
 &=\sum_{\substack{v\in V(G)\\\col^G(v)
     =\col^{\hat D}(\hat r)}}\pphom(\hat D,G';\hat r\mapsto f(v))\\
&=\sum_{\substack{v'\in V(G')\\\col^{G'}(v')=\col^{\hat D}(\hat r)}}\pphom(\hat
 D,G';\hat r\mapsto v')\\
   &=\pphom(\hat D,G').
\end{align*}
\end{proof}

\begin{proof}[Proof of Theorem~\ref{theo:main}]
  The theorem follows from the previous lemma combined with
  Lemma~\ref{lem:bpgame} (stating that winning strategies for
  Duplicator in the bijective pebble game establish equivalence in the
  logic) and Theorem~\ref{theo:pp} (stating the equivalence between
  homomorphism counts and past-preserving homomorphism counts),
  observing that for all $k\ge 1$, graphs $G,G'$ of distinct orders
  are neither homomor\-phism-indistinguishable over the class $\TD_k$
  nor $\LC_{k}$-equi\-valent.
\end{proof}
 
%%%%%%%%%%%%%%%%%%%%%%%%%%%%%%%%%%%%%%%%%%%%%%%%%%%%%%%%%%%%
\section{Discussion}
%%%%%%%%%%%%%%%%%%%%%%%%%%%%%%%%%%%%%%%%%%%%%%%%%%%%%%%%%%%%

It is a consequence of our main theorem that every sentence $\phi$ of the
logic $\LC$ and other counting logics such as Kuske and Schweikardt's
\cite{kusschwe17} $\textsf{FOCN}(\mathbf P)$ is equivalent to an
infinitary Boolean combination of expressions of
\alert{$\eta_{F,m}$} stating that ``there are exactly $m$ homomorphism from $F$ into
the current graph'', where $F\in\TD_k$ for the quantifier rank $k$ of
$\phi$. Indeed, it follows immediately from Theorem~\ref{theo:main} that
every sentence $\phi\in\LC$ of quantifier rank $k$ is
equivalent to
\begin{equation}
  \label{eq:21}
    \bigvee_{\substack{G\text{ graph}\\\text{such that }G\models\phi}}
  \bigwedge_{F\in\cTD_k}\eta_{F,\hom(F,G)}.
\end{equation}
Observe that $\eta_{F,m}$
can be viewed as a sentence of the form $\exists^{=m}\vec x\alpha$,
where $\vec x$ is a tuple of $|F|$ variables and $\alpha$ is a
conjunction of atoms of the form $E(x_i,x_j)$ and
$\gamma(x_i)=c$. This gives us a normal form for $\LC$-sentences that
is local and achieves some form of quantifier elimination (or, maybe
more precisely, \emph{quantifier de-alternation}). But of course the
infinite disjunction and conjunctions are unpleasant. We can replace
the infinite conjunctions by a finite one, ranging over a finite set
$\CF(G)\subseteq\cTD_k$ that only depends on $G$. But there is no hope
of avoiding the infinite disjunction.

%%%%%%%%%%%%%%%%%%%%%%%%%%%%%%%%%%%%%%%%%%%%%%%%%%%%%%%%%%%% 
\subsection{Graphs of Bounded Degree}

For graphs $G$ of bounded degree, we can improve our main theorem.
We fix a set $\Gamma$ of colours and only consider graphs $G$ with $\rg(\gamma^G)\subseteq\Gamma$.

\begin{theorem}\label{theo:d}
  Let $k,d\ge 1$. Then there is a finite set
  $\CF_{k,d}\subseteq\TD_k$, computable from $k,d$, such that for all
  graphs $G,G'$ of maximum degree at most $d$ the following are equivalent.
  \begin{eroman}
  \item 
    $G$ and $G'$ are homomorphism-indistinguishable over $\CF_{k,d}$.
  \item
    $G$ and $G'$ satisfy the same $\LC_{k}$-sentences.
  \end{eroman}
\end{theorem}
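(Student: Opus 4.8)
\textbf{Proof proposal for Theorem~\ref{theo:d}.}

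The plan is to bound the ``size'' of the graphs in $\cTD_k$ that matter, once the maximum degree $d$ is fixed, so that the infinite index set $\cTD_k$ of Theorem~\ref{theo:main} can be replaced by a finite, effectively computable subset. First I would recall that a connected graph of tree depth at most $k$ has radius at most $2^{k-1}-1$, hence any homomorphism from $F\in\cTD_k$ into a graph $G$ of maximum degree at most $d$ has image contained in a ball of radius $2^{k-1}-1$, which contains at most $N:=N(k,d)$ vertices (a crude bound like $1+d+d^2+\dots+d^{2^{k-1}-1}$). Consequently $\hom(F,G)=0$ whenever $F\in\cTD_k$ has more than $N$ vertices \emph{and} $F$ is connected; more usefully, the homomorphism count $\hom(F,G)$ only depends on $G$ through the isomorphism types of its radius-$(2^{k-1}-1)$ neighbourhoods, of which there are only finitely many possible types for degree $\le d$. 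So the whole picture is governed by a finite amount of data.

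The key step is to turn the equivalence of Theorem~\ref{theo:main} into a finitary one. The point is that two degree-$\le d$ graphs $G,G'$ are homomorphism-indistinguishable over all of $\cTD_k$ iff they agree on the (finitely many) graphs in $\cTD_k$ of order at most $N$: indeed, the proof of Theorem~\ref{theo:main} via Lemma~\ref{lem:6} and Corollary~\ref{cor:21} shows $\LC_k$-equivalence is captured by past-preserving homomorphism counts $\pphom(D,G)$ for $D\in\CD_k$, and by Remark~\ref{rem:numclasses} (and the locality of such homomorphisms for bounded-degree targets) only $D$ of bounded order, built as rooted sums of boundedly many bounded-order connected pieces, are needed. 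I would therefore set $\CF_{k,d}$ to be the (finite) class of all graphs in $\TD_k$ whose connected components each have order at most some explicit bound $M=M(k,d)$, chosen large enough to accommodate both the radius bound above and the rooted-sum multiplicities $m\ell^2\le m N^2$ arising in Lemma~\ref{lem:5} / Remark~\ref{rem:numclasses}. One then argues: (i) $\LC_k$-equivalence implies (by Theorem~\ref{theo:main}) indistinguishability over \emph{all} of $\TD_k$, hence over $\CF_{k,d}$; and (ii) conversely, indistinguishability over $\CF_{k,d}$ already forces $\pphom(D,G)=\pphom(D,G')$ for every $D\in\CD_k$ that is a rooted sum of at most $mN^2$ connected pieces of order $\le 2^{k-1}-1$ radius-balls, which by the quantitative version of Lemma~\ref{lem:6} suffices for a Duplicator winning strategy, hence for $\LC_k$-equivalence via Lemma~\ref{lem:bpgame}.

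The main obstacle is step (ii): one must carefully track, through Theorem~\ref{theo:pp} (the passage $\hom\leftrightarrow\pihom\leftrightarrow\pphom$) and through the Vandermonde/interpolation argument in Lemma~\ref{lem:5}, that all the auxiliary graphs one is forced to evaluate on still lie in $\CF_{k,d}$ — i.e. that the matrix inversions in Lemma~\ref{lem:3} and Lemma~\ref{lem:pp4} only ever involve graphs $G'\colres G$ or $G\supseteq F^D$ of order at most $M$, which is automatic since those operations never increase the vertex set, while the rooted-sum step in Lemma~\ref{lem:5} only multiplies the number of components, and Remark~\ref{rem:numclasses} bounds that multiplicity by $m\ell^2\le mN^2$. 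Finally, $\CF_{k,d}$ is computable because the bound $M(k,d)$ is given by an explicit formula in $k$ and $d$ and there are only finitely many graphs of order $\le M$ with colours in $\Gamma$. I would also remark that one may pass to connected graphs (replacing $\CF_{k,d}$ by $\cCD$-style connected representatives) without loss, exactly as in the introduction.
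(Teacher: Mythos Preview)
Your approach is essentially the paper's: bound the number $\ell$ of equivalence classes appearing in Lemma~\ref{lem:5} by observing that vertices with isomorphic radius-$(2^{k-1}-1)$ neighbourhoods receive identical rooted homomorphism counts, invoke Remark~\ref{rem:numclasses} to cap the rooted-sum multiplicities, and thread this bound through the induction of Lemma~\ref{lem:6}. Your additional remark that the matrix inversions in Lemmas~\ref{lem:3} and~\ref{lem:pp4} never enlarge the vertex set (so the passage $\hom\leftrightarrow\pihom\leftrightarrow\pphom$ stays within the bounded-order class) is a useful detail the paper's sketch leaves implicit.

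Two statements need correction, though neither breaks the strategy. First, the claim that $\hom(F,G)=0$ whenever $F\in\cTD_k$ is connected with $|F|>N$ is false: homomorphisms need not be injective, so for instance the star with $n$ leaves (tree depth $2$) maps into any graph containing an edge, for every $n$. What is true---and what you actually use afterwards---is only that the \emph{image} lies in a single ball of radius $2^{k-1}-1$. Second, the inequality $\ell\le N$ (implicit in your ``$m\ell^2\le mN^2$'') is wrong: $\ell$ is bounded not by the \emph{size} $N$ of a ball but by the number of isomorphism types of radius-$(2^{k-1}-1)$ balls in graphs of maximum degree $\le d$, which is exponential in $N$. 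This is still a computable function of $k$ and $d$, so the argument survives with the corrected bound; the paper states it in exactly these terms. (A minor further slip: your $\CF_{k,d}$ as defined---all of $\TD_k$ with components of order $\le M$---is infinite, since disjoint unions are unrestricted; you need the reduction to connected graphs you mention at the end to get a genuinely finite set.)
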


\begin{proof}[Proof (sketch)]
  We only need to prove that the implication (i)$\implies$(ii) holds for a
  sufficiently large $\CF_{k,d}\subseteq\TD_k$.

  Recall that connected graphs in $\TD_k$ have radius at most
  $2^{k-1}-1$. Thus if two vertices $v,w$ in a graph $G$ have
  isomorphic neighbourhoods of radius $2^{k-1}-1$, then
  $\hom(D,G;r^D\mapsto v)=\hom(D,G;r^D\mapsto v')$ for all
  $D\in\CD_k$. Note that the equality holds even though the graph $F^D$ is not
  necessarily connected, because on both sides of the equality we have
  the same graph $G$, and every connected component of $D$ that does
  not contain the root $r^D$ contributes to both sides of the equation
  in the same way.

  In graphs of maximum degree at most $d$, the number of
  isomorphism types of neighbourhoods of radius $2^{k-1}-1$ is bounded
  in terms of $k$ and $d$. This means that there is only a bounded
  number of homomorphism counts $\hom(D,G;r^D\mapsto v)$. Recall
  Remark~\ref{rem:numclasses}. By what we have just observed, if both $G$
  and $G'$ are of maximum degree $d$, the number $\ell$ of equivalence
  classes is bounded in terms of $k,d$, and thus we only need to
  consider rooted sums of at most $f(k,d)$ graphs (for a suitable
  function $f$).

  If we plug this into the inductive proof of the main theorem, we see
  that we only need to consider homomorphism counts from $g(k,d)$
  graphs, for a suitable function $g$.
\end{proof}

Note that this stronger version of the theorem leads to a slight
improvement of the normal form \eqref{eq:21} for graphs of maximum
degree at most $d$: independently of the disjunct $G$, we can restrict
the conjunction to graphs $F$ from the finite set
$\CF_{k,d}$. 
% This is still not enough to also make the
% disjunction finite, because the homomorphism counts $\hom(F,G)$ still
% depend on $G$.
%  What we do obtain, though, is for every sentence $\phi$
% of $\LC$ or some other counting logic of quantifier rank at most $k$,
% and every $n\in\Nat$, a finite set $A_{\phi,d}(n)$ of vectors $\vec a=(a_F\mid
% F\in\CF_{k,d})\in\Nat^{\CF}$ such that $\phi$ is
% equivalent to
% \[
%   \bigvee_{n\ge 1}\bigvee_{\vec a\in
%     A_{\phi,d}(n)}\bigwedge_{F\in\CF_{k,d}}\eta_{F,a_F}.
% \]
% This would be useful if the sets $A_{\phi,d}(n)$ would be ``simple''
% (for example, linear or semi-linear) and could easily be described, or
% at least efficiently computed.  We leave it as an open problem to see
% if something like this is possible.

%%%%%%%%%%%%%%%%%%%%%%%%%%%%%%%%%%%%%%%%%%%%%%%%%%%%%%%%%%%%
\subsection{Equivalence in First-Order Logic}

We may wonder if in Theorem~\ref{theo:d} we really need the
dependence of the set $\CF_{k,d}$ on the maximum degree $d$. That is,
we may ask if for every $k$ there is a finite set
$\CF_{k}\subseteq\TD_k$ such that for all
graphs $G,G'$
if $G$ and $G'$ are homomorphism-indistinguishable over $\CF_{k}$ then
they are $\LC_k$-equivalent. It is easy to see that this cannot be
the case, essentially because the number of $\LC_k$-equivalence
classes is unbounded.

However, this is different for first-order logic $\FO$: for every $k$
there are only finitely many $\FO_k$-equivalence classes, where
$\FO_k$ denotes the fragment of $\FO$ consisting of all formulas of
quantifier rank at most $k$. Thus it may be tempting to conjecture the
following. Again, we fix a set $\Gamma$ of colours and only consider graphs $G$ with $\rg(\gamma^G)\subseteq\Gamma$.

\begin{conjecture}
  Let $k\ge 1$. Then there is a finite set $\CF_{k}\subseteq\TD_k$
  such that for all graphs $G,G'$, if $G$ and $G'$ are
  homo\-morphism-indistinguishable over $\CF_{k}$, then $G$ and $G'$
  satisfy the same $\FO_{k}$-sentences.
\end{conjecture}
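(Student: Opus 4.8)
The natural plan is to combine two facts: over the fixed finite colour set $\Gamma$ there are only finitely many $\FO_k$-equivalence classes, say $C_1,\dots,C_N$, and, since $\FO_k\subseteq\LC_k$, Theorem~\ref{theo:main} shows that graphs from two distinct classes $C_a,C_b$ are already homomorphism-distinguishable over $\cTD_k$. One would then try to find, for each pair $a\ne b$, a \emph{finite} $\CF^{ab}\subseteq\cTD_k$ that separates $C_a$ from $C_b$ uniformly (that is, $\HOM_{\CF^{ab}}(G)\ne\HOM_{\CF^{ab}}(G')$ for all $G\in C_a$, $G'\in C_b$) and set $\CF_k:=\bigcup_{a\ne b}\CF^{ab}$. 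The only route I see to such a uniform finite separator is a compactness argument: if none existed, fix an enumeration $F_1,F_2,\dots$ of $\cTD_k$ (with the one-vertex graphs first) and graphs $G_n\in C_a$, $G'_n\in C_b$ agreeing on $\hom(F_\ell,\cdot)$ for $\ell\le n$, form ultraproducts $G^\ast,G'^\ast$ over a non-principal ultrafilter, and observe that $G^\ast,G'^\ast$ are $\FO_k$-separated (by the fundamental theorem of ultraproducts) while $\hom(F,G^\ast)=\hom(F,G'^\ast)$ in $\Nat^\ast$ for every standard $F\in\cTD_k$ and $|G^\ast|=|G'^\ast|$. One would then want to run Theorem~\ref{theo:pp}, Lemma~\ref{lem:6} and Lemma~\ref{lem:bpgame} internally to conclude $G^\ast\eqc k G'^\ast$, a contradiction. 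Most of that machinery is finitary for each fixed $D\in\CD_k$ and transfers to the ultraproduct, but the reduction ``without loss of generality $\CF$ is finite'' in the proof of Lemma~\ref{lem:5} / Corollary~\ref{cor:21} picks one distinguisher per pair of $\sim$-classes, and $\sim$ can have non-standard-many classes on $G^\ast$, so the required bijection cannot be produced this way.

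I expect this obstacle to be fatal, because the conjecture is in fact false for every $k\ge 2$. Fix $k\ge 2$ and a finite $\CF_k\subseteq\TD_k$; we may assume $K_2\in\CF_k$, since adding it only strengthens the hypothesis, and let $t\ge 1$ be large enough that every connected component of every member of $\CF_k$ has at most $t+1$ vertices. Since $\TD_k$ consists of disjoint unions of stars, $\hom(F,G)$ for $F\in\CF_k$ is a monomial in $s_0(G),\dots,s_t(G)$, where $s_j(G):=\hom(K_{1,j},G)=\sum_{v\in V(G)}\deg_G(v)^j$. A Vandermonde computation defeats this: the kernel of $(c_1,\dots,c_{t+1})\mapsto\big(\sum_i c_i\,i^j\big)_{1\le j\le t}$ is one-dimensional, and its integer generator satisfies $\sum_i c_i\ne 0$; splitting it into positive and negative parts yields multisets $M,M''$ of positive integers with $\sum_{m\in M}m^j=\sum_{m\in M''}m^j$ for $1\le j\le t$ and $|M|>|M''|$. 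Let $G$ be the disjoint union of the stars $K_{1,m}$, $m\in M$, and let $G'$ be the disjoint union of the stars $K_{1,m}$, $m\in M''$, together with $|M|-|M''|$ isolated vertices. Then $s_j(G)=s_j(G')$ for $0\le j\le t$, so $G,G'$ are homomorphism-indistinguishable over $\CF_k$; but $G'\models\exists x\,\forall y\,\neg E(x,y)$ while $G\not\models\exists x\,\forall y\,\neg E(x,y)$, so $G\not\equiv_{\FO_2}G'$ and hence $G\not\equiv_{\FO_k}G'$. For $t=2$ one gets $G=K_2\sqcup K_2\sqcup K_2\sqcup K_{1,3}$ and $G'=K_{1,2}\sqcup K_{1,2}\sqcup K_{1,2}\sqcup K_1$, with $\hom(K_1,\cdot)=10$, $\hom(K_2,\cdot)=12$, $\hom(K_{1,2},\cdot)=18$ for both.

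So, instead of proving the conjecture I would look for the right corrected statement. Restricting $G,G'$ to degree at most $d$ is exactly Theorem~\ref{theo:d}, and the argument above explains why that dependence on $d$ is unavoidable: homomorphism counts of a bounded family of bounded-tree-depth graphs into $G$ reveal only bounded ``degree moments'', whereas $\FO_k$ over unbounded-degree graphs still detects features such as the presence of an isolated vertex. A genuine improvement would have to enlarge the index class beyond $\TD_k$ to one whose homomorphism counts into $G$ also encode the thresholded Gaifman-local data that govern $\FO_k$-equivalence, or to relativise the conclusion — for instance, to $\FO_k$-equivalence under the extra assumption that $G$ and $G'$ have the same degree sequence. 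That is the kind of variant I would try to formulate and prove.
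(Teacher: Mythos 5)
You correctly recognise that the Conjecture is false, which is exactly what the paper does: immediately after stating it, the text remarks that \emph{``Unfortunately, the conjecture is false already for $k=2$''} and supplies a counterexample. Your refutation and the paper's are essentially the same in spirit — both fix a finite candidate family, observe that homomorphism counts from it into disjoint unions of stars only see a bounded number of ``moments'' of the star sizes, and then use a Vandermonde argument to produce two disjoint unions of stars that match on those moments yet differ on a short $\FO_2$-sentence. The paper uses coloured stars $S_{p,q}$ (grey centre, white and black tips) with $\hom(S_{i,j},S_{p,q})=p^i q^j$ and distinguishes by ``there is a grey vertex with a white but no black neighbour''; you use uncoloured stars $K_{1,m}$ and distinguish by ``there is an isolated vertex''. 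Yours is arguably slightly cleaner (no colours), and your worked numbers for $t=2$ check out, including the match of $\hom(K_1,\cdot)$, $\hom(K_2,\cdot)$, $\hom(K_{1,2},\cdot)$.

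One slip to fix: the sentence ``Since $\TD_k$ consists of disjoint unions of stars, $\hom(F,G)$ for $F\in\CF_k$ is a monomial in $s_0(G),\dots,s_t(G)$'' is only true for $k=2$; for $k\ge 3$, $\TD_k$ contains graphs like $P_4$ and $K_3$ that are not disjoint unions of stars, and $\hom(F,G)$ need not be a monomial in the $s_j$. What you actually need is the weaker but correct observation that it is $G,G'$ (not $F$) that you restrict to disjoint unions of stars: for any \emph{connected} bipartite $F$ with parts of sizes $a\le b$, $\hom(F,K_{1,m})=m^a+m^b$, and $\hom(F,K_{1,m})=0$ if $F$ is non-bipartite, so $\hom(F,\cdot)$ over a disjoint union of stars with star sizes $m_1,\dots,m_r$ is a function of $\sum_i m_i^{\,j}$ for $j\le t$ whenever $|V(F)|\le t+1$, and the disconnected case multiplies. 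With that repair your construction refutes the conjecture for all $k\ge 2$, which is marginally more explicit than the paper's statement for $k=2$. The opening ultraproduct discussion is a sensible dead end — you identify correctly why the bijection of Lemma~\ref{lem:5} does not transfer — but it is not needed once the counterexample is in hand.
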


It would be really nice if this conjecture was true. For example, it
would imply a parameterised version of Toda's theorem, settling a
long-standing open problem in parameterised complexity theory \cite{flugro04b}. \emph{Unfortunately,
  the conjecture is false already for $k=2$.}

\begin{figure}
  \centering
  \begin{tikzpicture}
    \foreach \x in {0,...,3} {
      \draw[thick] (3,1.5) -- (\x,0);
      \draw[fill=white] (\x,0) circle (1mm);
    }
    \foreach \x in {4,5,6} {
      \draw[thick] (3,1.5) -- (\x,0);
      \draw[fill=black] (\x,0) circle (1mm);
    }
    \draw[fill=grey] (3,1.5) circle (1mm);
  \end{tikzpicture}
  \caption{The star $S_{4,3}$}
  \label{fig:star}
\end{figure}
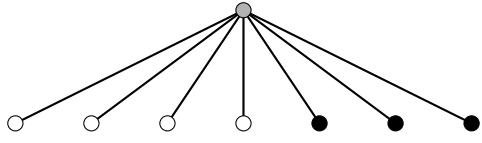

\begin{example}
  Let us assume that $\Gamma=\{\tikz{\draw (0,0) circle (1mm);},
  \tikz{\draw[fill=grey] (0,0) circle (1mm);}, \tikz{\draw[fill] (0,0)
    circle (1mm);}\}$. For all $k,\ell\in\Nat$, we let $S_{p,q}$ be
  the star with a centre $r$ and tips $s_1,\ldots,s_p,t_1,\ldots,t_q$
  such that $r$ is grey, the $s_i$ are white, and the $t_j$ are black
  (see Figure~\ref{fig:star}). Moreover, we let $W$ be the graph
  consisting of a single white vertex and $B$ the graph
  consisting of a single black vertex. Observe that
  $S_{0,0},W,B\in\TD_1$ and $S_{p,q}\in\TD_2$ for all $p,q\in\Nat$.

  Let $\CS$ be the class of all graphs that are finite disjoint unions
  of stars $S_{p,q}$ for $p,q\in\Nat$. For every graph $G\in\CS$ and
  all $p,q\in\Nat$, let $a_{p,q}(G)$ be the number of copies of
  $S_{p,q}$ in $G$. Observe that for all $i,j,p,q\in\Nat$ we have
  \[
    \hom(S_{i,j},S_{p,q})=p^iq^j.
  \]
  Thus for $G\in\CS$, 
  \begin{equation}
    \label{eq:10}
    \hom(S_{i,j},G)=\sum_{p,q\in\Nat}p^iq^ja_{p,q}(G).
  \end{equation}
  Moreover, $\hom(W,G)=\hom(S_{1,0},G)=\sum_{p,q\in\Nat}p a_{p,q}(G)$
  and $\hom(B,G)=\hom(S_{0,1},G)=\sum_{p,q\in\Nat}q a_{p,q}(G)$.
  Observe that $\hom(F,G)=0$ for all connected
  $F\in\TD_2\setminus(\CS\cup\{B,W\})$.

  Suppose for contradiction that there is a finite $\CF\subseteq\TD_2$
  such that for all graphs $G,G'$, if $\hom(F,G)=\hom(F,G')$ for all
  $F\in\CF$ then
  $G\equiv_2^\LC G'$. Without loss of generality we assume that all
  $F\in\CF$ are connected. We will only consider graphs $G,G'\in\CS$. Thus
  it suffices to consider $F\in\CF\cap \{S_{i,j}\mid
  i,j\in\Nat\}$. Let $m:=\max\{j\mid S_{i,j}\in \CF\}$.

  \begin{claim}
    There are vectors $\vec a=(a_1,\ldots,a_{m})$,\newline
    $\vec
    a'=(a_1',\ldots,a_{m}')\in\Nat^{m}$ such that
    \begin{eroman}
    \item $\sum_{q=1}^{m}a_q<\sum_{q=1}^{m}a_q'$,
    \item $\sum_{q=1}^{m}q^ja_q=\sum_{q=1}^{m}q^ja_q'$ for all
      $j\in[m]$. 
    \end{eroman}

    \proof
    Let $A\in\Rat^{m\times m}$ be the matrix with entries
    $a_{ij}=j^{i-1}$. Then $A$ is a Vandermonde matrix and thus has
    full rank. Therefore, the equation $A\vec x=\vec e_1$, where
    $\vec e_1=(1,0,\ldots,0)^T\in\Rat^m$, has the rational solution
    $\vec x=A^{-1}\vec e_1$. Multiplying with a positive common denominator $c$ of the entries
    of $\vec x$, we obtain an integer solution $\vec y$ to the system
    $A\vec y=c\vec e_1$. We write $\vec y=\vec a'-\vec a$ for two
    nonnegative integer vectors $\vec a'=(a'_1,\ldots,a'_n), \vec
    a=(a_1,\ldots,a_m)$. The equation $A\vec a'=c\vec e_1+A\vec a$
    yields exactly the equations in the assertion of the claim.
    \uend 
  \end{claim}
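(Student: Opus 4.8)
The plan is to recognise conditions (i) and (ii) as a \emph{power-sum moment} problem and to solve it by a Vandermonde argument. The idea is to produce a single integer difference vector $\vec y=\vec a'-\vec a$ whose ``zeroth moment'' $\sum_q y_q$ is strictly positive while all its higher power sums $\sum_q q^{j}y_q$ vanish, and then to recover $\vec a',\vec a\in\Nat^m$ as the positive and negative parts of $\vec y$; the two displayed conditions then fall out automatically from $\vec a'-\vec a=\vec y$.

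Concretely, I would first set $A$ to be the $m\times m$ matrix with entries $A_{ij}=j^{i-1}$, so that its $i$-th row records the power sum of order $i-1$ of a weight vector supported on $\{1,\dots,m\}$. Since the nodes $1,\dots,m$ are pairwise distinct, $A$ is (the transpose of) a Vandermonde matrix and hence invertible over $\Rat$. Solving $A\vec x=\vec e_1$ with $\vec e_1=(1,0,\dots,0)^{T}$ gives a rational vector $\vec x=A^{-1}\vec e_1$ whose zeroth power sum equals $1$ while all the higher power sums recorded by $A$ vanish. Clearing denominators --- that is, multiplying by a positive common denominator $c$ of the entries of $\vec x$ --- yields an integer vector $\vec y:=c\,\vec x\in\Int^m$ with $A\vec y=c\,\vec e_1$. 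Writing $\vec y=\vec a'-\vec a$ with $a'_q:=\max\{y_q,0\}$ and $a_q:=\max\{-y_q,0\}$, both $\vec a'$ and $\vec a$ lie in $\Nat^m$, and the row-by-row reading of $A\vec a'=c\,\vec e_1+A\vec a$ produces exactly condition (i) (from the zeroth-order row, together with $c\ge1$) and condition (ii) (from the remaining rows). Incidentally, since every coordinate of $\vec y$ occurs in $\sum_q q\,y_q=0$ with a strictly positive coefficient, $\vec y$ cannot be a nonnegative vector, so $\vec a$ and $\vec a'$ are both nonzero; but this observation is not needed for the statement.

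I do not expect any substantial obstacle here: the whole content is the translation of ``$G$ and $G'$ have the same homomorphism counts into the relevant stars'' into ``$G$ and $G'$ have the same low-order power sums of star multiplicities'', and a Vandermonde matrix resolves the latter with one extra degree of freedom --- the zeroth power sum --- left over to make (i) a strict inequality. The only steps that require a line of care are the invertibility of $A$, which is immediate from the distinctness of its nodes, and the passage from a rational solution to an honest pair of nonnegative \emph{integer} vectors, which is precisely what the denominator-clearing step and the positive/negative-part decomposition accomplish.
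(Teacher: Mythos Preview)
Your proposal is correct and follows essentially the same approach as the paper: both set up the $m\times m$ Vandermonde-type matrix $A$ with entries $j^{i-1}$, solve $A\vec x=\vec e_1$ over $\Rat$, clear denominators to an integer vector $\vec y$, and split $\vec y$ into its positive and negative parts to obtain $\vec a',\vec a\in\Nat^m$. The only addition is your parenthetical remark that $\vec y$ cannot be nonnegative, which the paper omits.
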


  We choose vectors $\vec a,\vec a'$ according to the claim and let
  $a_0:=\sum_{p=1}^{m}a_p'-\sum_{p=1}^{m}a_p$. Let $G$ be the
  disjoint union of $a_q$ copies of $S_{1,q}$ for
    $q\in\{0,\ldots,m\}$. Then $a_{1,q}(G)=a_q$ for $q\in
    \{0,\ldots,m\}$ and $a_{p,q}(G)=0$ for all $p\neq 1,q$ or
    $p=1,q>m$. Similarly, let $G'$ be the
  disjoint union of $a'_q$ copies of $S_{1,q}$ for
    $q\in\{1,\ldots,m\}$. Then $a_{1,q}(G')=a'_q$ for $q\in
    \{1,\ldots,m\}$ and $a_{p,q}(G')=0$ for all $p\neq 1,q$ or
    $p=1,q=0$ or $p=1,q>m$. Then for all $i\in\Nat$ we have
    \[
      \hom(S_{i,0},G)=\sum_{q=0}^{m}a_q=\sum_{q=1}^{m}a_q'=\hom(S_{i,0},G')
    \]
    and, for $j\in[m]$, 
    \[
      \hom(S_{i,j},G)=\sum_{q=1}^{m}q^ja_q=\sum_{q=1}^{m}q^ja_q'=\hom(S_{i,j},G').
    \]
    Thus $\hom(F,G)=\hom(F,G')$ for all $F\in\CF$.

    However, $G$ contains a copy of $S_{1,0}$, whereas $G'$ does
    not. Thus $G$ satisfies the $\FO_2$-sentence
    \[
      \exists x\Big(\gamma(x)\!=\!\tikz{\draw[fill=grey] (0,0) circle
        (1mm);}\wedge\exists y\big(E(x,y)\wedge \gamma(y)\!=\!\tikz{\draw[fill=white] (0,0) circle
        (1mm);}\big) \wedge\neg\exists y\big(E(x,y)\wedge \gamma(y)\!=\!\tikz{\draw[fill=black] (0,0) circle
        (1mm);}\big)\Big)
    \]
    and $G'$ does not. Hence $G$ and $G'$ are not $\FO^2$-equivalent.
    \uend
\end{example}

%%%%%%%%%%%%%%%%%%%%%%%%%%%%%%%%%%%%%%%%%%%%%%%%%%%%%%%%%%%%
\subsection{Relational Structures}\label{sec:rel}
%%%%%%%%%%%%%%%%%%%%%%%%%%%%%%%%%%%%%%%%%%%%%%%%%%%%%%%%%%%%
Our main result extends from graphs to arbitrary relational
structures. The definition of elimination forests and hence of tree depth
can be extended in a straightforward way. Lemma~\ref{lem:ind-td}, the
inductive characterisation of tree depth, does not generalise
directly, but can be adapted: when deleting the root, rather than
removing all tuples that contain the root from all relations, we need
to add relations of smaller arity and keep the remaining tuples after
deleting the root. A similar adaptation will be necessary in the
definition of $G\wr v$ in Section~\ref{sec:main-proof}. It needs to
be checked that Lemma~\ref{lem:1} still holds with the new
definitions---it does---, the rest of the proof goes through nearly
unchanged.

It would be interesting to work out an extension of the main theorem
to weighted graphs, yielding homomorphisms whose weight is the product
of the edges weights in its image. Such an extension would require a
suitable extension of the logic. We leave this for future work.

%%%%%%%%%%%%%%%%%%%%%%%%%%%%%%%%%%%%%%%%%%%%%%%%%%%%%%%%%%%%
\subsection{Complexity}
%%%%%%%%%%%%%%%%%%%%%%%%%%%%%%%%%%%%%%%%%%%%%%%%%%%%%%%%%%%%
Böker, Chen, Grohe, and Rattan~\cite{bokchegrorat19} studied the
computational complexity of homomorphism indistinguishability over
classes $\CF$ of graphs. Depending on $\CF$, they found complexities
ranging from polynomial time to undecidable. Notably,
homomorphism indistinguishability over the class of all graphs is
equivalent to isomorphism and hence decidable in quasi-polynomial time
\cite{bab16}.

It is a consequence of our main theorem that for every $k$,
homomorphism indistinguishability over $\TD_k$ is decidable in
polynomial time, or more precisely, time $n^{O(k)}$, because
$\LC_k$-equivalence is decidable in this time. Probably the
easiest way to see this is via the bijective pebble game: given
graphs $G,G'$, by induction on $\ell$ we can compute the partition of
$V(G)^{k-\ell}\cup V(G')^{k-\ell}$ such that
Duplicator wins the $\ell$-move bijective pebble game with initial
position $\vec x,\vec x'$ if and only $\vec x,\vec x'$ belong to the
same class of the partition.

We leave open the question whether homomorphism indistinguishability over
$\TD_k$ is fixed-parameter tractable when parameterised by $k$. We conjecture that it is not.

%%%%%%%%%%%%%%%%%%%%%%%%%%%%%%%%%%%%%%%%%%%%%%%%%%%%%%%%%%%%
\section{Concluding Remarks}
%%%%%%%%%%%%%%%%%%%%%%%%%%%%%%%%%%%%%%%%%%%%%%%%%%%%%%%%%%%%

We characterise equivalence in the counting extension of first-order
logic, parameterised by quantifier rank, in terms of homomorphism
indistinguishability over  graphs of bounded tree depth. While a
result along these lines may not be unexpected, it is
surprising that we obtain such a tight and clean correspondence between
quantifier rank and tree depth.

An interesting aspect of the correspondence between homomorphism counts
and logical equivalence is that homomorphism counts also give us a
natural notion of distance and similarity between graphs via distances between the homomorphism vectors
$\HOM_{\CF}(G)$ in suitable inner-product spaces. Through the translation
between logic and homomorphism counts, such distance measures between
graphs give us notions of ``approximate logical equivalence'' and
possibly ``approximate logical satisfiability'', which in times of
uncertain data seems very desirable and deserves further
exploration.

% \bibliographystyle{plain} 
% \bibliography{all,grohe}

\end{document}

'%%% Local Variables: 
%%% mode: latex
%%% TeX-master: t
%%% End: 